\title{Space-time deterministic graph rewriting}
\author{Pablo Arrighi}{Université Paris-Saclay, Inria, CNRS, LMF, 91190 Gif-sur-Yvette, France}{}{}{}%
\author{Marin Costes}{Université Paris-Saclay, Inria, CNRS, LMF, 91190 Gif-sur-Yvette, France}{}{}{}%
\author{Gilles Dowek}{Université Paris-Saclay, Inria, CNRS, LMF, 91190 Gif-sur-Yvette, France}{}{}{}%
\author{Luidnel Maignan}{Univ Paris Est Creteil, LACL, 94000, Creteil, France}{}{}{}%
\authorrunning{P. Arrighi et al.}
\keywords{Causal graph dynamics,  \and Cellular automata, \and Time covariance, \and Commutation, \and Strong confluence, \and Distributed computation, \and Task dependencies, \and DAG, \and Poset, \and Space-like cut, \and Foliation}
\newcommand{\MC}[1]{{\color{lowgreen}#1}}
\renewcommand{\MC}[1]{{#1}}
\newcommand{\LM}[1]{{\color{purple}#1}}
\renewcommand{\LM}[1]{{#1}}
\newcommand{\position}{position}
\newcommand{\Z}{\mathbb{Z}}
\newcommand{\Past}{\mathrm{Past}}
\newcommand{\I}{\mathrm{I}}
\newcommand{\B}{\mathrm{B}}
\newcommand{\V}{\mathrm{V}}
\newcommand{\E}{\mathrm{E}}
\newcommand{\Em}{\mathrm{E}}
\newcommand{\pof}{\pi.}
\renewcommand{\P}{\pof\E}
\newcommand{\Px}[1]{\pof\E_{#1}(x)}
\newcommand{\restr}[2]{\mathcal{N}_{#1}(#2)}
\newcommand{\restri}[1]{{\mathcal{N}_{#1}}}
\newcommand{\crestr}[2]{{{\overline{\mathcal{N}}}_{#1}(#2)}}
\newcommand{\crestri}[1]{{{\overline{\mathcal{N}}}_{#1}}}
\newcommand{\precursor}{-}
\renewcommand{\prec}[2]{{\mathcal{N}^{\precursor}_{#1}(#2)}}
\newcommand{\preci}[1]{{\mathcal{N}^{\precursor}_{#1}}}
\newcommand{\cpreci}[1]{{\overline{\mathcal{N}^{\precursor}_{#1}}}}
\newcommand{\zee}[1]{{\scriptstyle \mathbb Z}.#1}
\renewcommand{\restriction}{\mathrel{\!\raisebox{-.5ex}{$\vert$}}}
\newcommand{\VL}[1]{}
\newcommand{\withproofs}[1]{}
\newcommand{\calG}{{\ensuremath{\mathcal{G}}}}
\newcommand{\calV}{{\ensuremath{\mathcal{V}}}}
\newcommand{\calX}{{\ensuremath{\mathcal{X}}}}
\newcommand{\X}{\calX}
\definecolor{lowgreen}{rgb}{0.40390625,0.6109375,0.09109375}
\definecolor{myblack}{RGB}{0,0,0}
\definecolor{border}{RGB}{206,206,206}
\definecolor{port}{RGB}{155,155,155}
\definecolor{setBorder}{RGB}{80,227,194}
\definecolor{internal}{RGB}{38,105,185}
\definecolor{greenstate}{RGB}{80,183,100}
\definecolor{greenstategray}{RGB}{158,220,170}
\definecolor{redstate}{RGB}{201,109,76}
\DeclareRobustCommand{\sizeFone}[1]{\Huge{#1}} 
\DeclareRobustCommand{\sizeFonex}[1]{\Large{#1}} 
\DeclareRobustCommand{\sizeFoneAx}[1]{\Huge{#1}} 
\DeclareRobustCommand{\sizeFtwo}[0]{\Large} 
\DeclareRobustCommand{\sizeFtwop}[0]{\normalsize} 
\DeclareRobustCommand{\sizeFthree}[0]{\LARGE} 
\DeclareRobustCommand{\sizeFthreep}[0]{\normalsize} 
\definecolor{internalStates}{RGB}{240,240,240}
\DeclareRobustCommand{\sizeFfour}[0]{\LARGE \scalefont{0.9}} 
\DeclareRobustCommand{\sizeFfourp}[0]{\normalsize} 
\DeclareRobustCommand{\sizeFfours}[0]{\LARGE} 
\DeclareRobustCommand{\sizeFfive}[0]{\Huge} 
\DeclareRobustCommand{\sizeFfivep}[0]{\Huge} 
\DeclareRobustCommand{\sizeFsixp}[0]{\LARGE} 
\DeclareRobustCommand{\sizeFsixs}[0]{\Huge} 
\DeclareRobustCommand{\sizeFsix}[0]{\Huge \scalefont{2}}
\DeclareRobustCommand{\sizeFhuitp}[0]{\Huge} 
\DeclareRobustCommand{\sizeFhuit}[0]{\Huge \scalefont{2}}
\DeclareRobustCommand{\sizeFdixap}[0]{\LARGE} 
\DeclareRobustCommand{\sizeFdixa}[0]{\Huge \scalefont{1.3}}
\DeclareRobustCommand{\sizeFdixas}[0]{\LARGE}
\DeclareRobustCommand{\sizeFdixcp}[0]{\Huge} 
\DeclareRobustCommand{\sizeFdixc}[0]{\Huge \scalefont{2}}
\DeclareRobustCommand{\sizehalfCA}[0]{\LARGE \scalefont{1.1}} 
\DeclareRobustCommand{\sizehalfCAb}[0]{\Huge \scalefont{1.5}} 
\DeclareRobustCommand{\sizebighalfCA}[0]{\LARGE} 
\DeclareRobustCommand{\sizebighalfCAb}[0]{\LARGE} 
\DeclareRobustCommand{\sizesimu}[0]{\Large} 
\DeclareRobustCommand{\sizesimub}[0]{\LARGE} 
\DeclareRobustCommand{\sizesimup}[0]{\large} 
\DeclareRobustCommand{\sizebigsimu}[0]{\Huge \scalefont{0.95}} 
\DeclareRobustCommand{\sizebigsimub}[0]{\Huge \scalefont{1.2}} 
\newcommand{\harpvecsign}{\scriptscriptstyle\rightharpoonup}
\newcommand{\harpoonvec}[2]{%
  \ifx\displaystyle#1\doalign{$\harpvecsign$}{#1#2}\fi
  \ifx\textstyle#1\doalign{$\harpvecsign$}{#1#2}\fi
  \ifx\scriptstyle#1\doalign{\scalebox{.6}[.9]{$\harpvecsign$}}{#1#2}\fi
  \ifx\scriptscriptstyle#1\doalign{\scalebox{.5}[.8]{$\harpvecsign$}}{#1#2}\fi
}
\tikzset{every picture/.style={line width=0.75pt}} 
\newcommand{\emptycell}[4]{
    \tikzstyle{roundnode}=[circle,draw = #3,fill=white, minimum size = 56];
    \node[roundnode] (#4) at (#1,#2) {};
    \draw[color = #3] (#1,#2 -1) -- (#1,#2 +1);
}
\newcommand{\leftmoover}[4]{
    \tikzstyle{roundnode}=[circle,draw = #3,fill=white, minimum size = 56];
    \node[roundnode] (#4) at (#1,#2) {};
    \draw[color = #3] (#1,#2 -1) -- (#1,#2 +1);
    \begin{scope}
        \clip (#1,#2-1) rectangle (#1-1,#2+1);
        \draw[color = #3][fill = #3] (#1,#2) circle(1);
    \end{scope}
}
\newcommand{\rightmoover}[4]{
    \tikzstyle{roundnode}=[circle,draw = #3,fill=white, minimum size = 56];
    \node[roundnode] (#4) at (#1,#2) {};
    \draw[color = #3] (#1,#2 -1) -- (#1,#2 +1);
    \begin{scope}
        \clip (#1,#2-1) rectangle (#1+1,#2+1);
        \draw[color = #3][fill = #3] (#1,#2) circle(1);
    \end{scope}
}
\newcommand{\rightleft}[4]{
    \tikzstyle{roundnode}=[circle,draw = #3,fill=#3, minimum size = 56];
    \node[roundnode] (#4) at (#1,#2) {};
}
\newcommand{\redcell}[3]{
    \tikzstyle{roundnode}=[circle,draw = myblack,fill=redstate, minimum size = 56];
    \node[roundnode] (#3) at (#1,#2) {};
}
\newcommand{\redcellinternal}[3]{
    \tikzstyle{roundnode}=[circle,draw = internal,fill=redstate, minimum size = 56];
    \node[roundnode] (#3) at (#1,#2) {};
}
\newcommand{\greencellgray}[3]{
    \tikzstyle{roundnode}=[circle,draw = border,fill=greenstategray, minimum size = 56];
    \node[roundnode] (#3) at (#1,#2) {};
}
\newcommand{\greencellinternal}[3]{
    \tikzstyle{roundnode}=[circle,draw = internal,fill=greenstate, minimum size = 56];
    \node[roundnode] (#3) at (#1,#2) {};
}
\newcommand{\edge}[3]{
    \draw[-stealth,color=#3,line width=1.75mm] (#1) to (#2);
}
\newcommand{\vertex}[6]{
    \ifnum #4=0
    {\emptycell{#1}{#2}{#3}{#6}}
    \fi
    \ifnum #4=1
    {\rightmoover{#1}{#2}{#3}{#6}}
    \fi
    \ifnum #4=2
    {\leftmoover{#1}{#2}{#3}{#6}}
    \fi
    \ifnum #4=3
    {\rightleft{#1}{#2}{#3}{#6}}
    \fi
    \draw (#1+1,#2) node[right]{{\Huge \color{#3} #5}};
}
\begin{document}
%
%
%
%
\maketitle              
\begin{abstract}
We study non-terminating graph rewriting models, whose local rules are applied non-deterministically---and yet enjoy a strong form of determinism, namely space-time determinism. In the case of terminating computation, it is well-known that the property of confluence may ensure a deterministic end result. In the context of distributed, non-terminating computation however, confluence alone is too weak a property. 
Here we provide sufficient conditions so that asynchronous local rule applications produce well-determined events in the space-time unfolding of the graph, regardless of their application orders. Our first two examples are asynchronous simulation of a dynamical systems. Our third example features time dilation, in the spirit of general relativity.
\end{abstract}
\newpage
\section{Introduction}


\subsection{Context}

{\em Dynamical systems} refer to the global and synchronous (or almost synchronous \cite{MairesseIPS}) evolution of an entire configuration at time $t$ into another at time $t+1$, $t+2$, etc., iteratively.
They often have a spatial dimension too, typically grid-based (e.g.  representing particles \cite{Zuse,Wolf-Gladrow}, fluids \cite{ChopardDroz,RothmanZaleski}, traffic jams \cite{NagelSchreckenberg}, demographics and regional development or consumption \cite{Bruun,WhiteEngelen}) or graph-based (e.g. representing physical systems \cite{MeyerLove}, computer processes \cite{PapazianRemila}, biochemical agents \cite{MurrayDicksonVol2}, economical agents \cite{KozmaBarrat}, users of social networks, etc.).
One can think of cellular automata, lattice-gas automata, parallel graph rewriting, causal graph dynamics~\cite{ArrighiIC,ArrighiRCGD,ArrighiBRCGD} or so-called global transformations~\cite{DBLP:conf/gg/MaignanS15,DBLP:conf/mfcs/FernandezMS22,DBLP:conf/gg/MaignanS24} for instance.

{\em Synchronism} is often criticised however, on the basis that: 1/ In some contexts synchronization mechanisms are considered a costly resource, which hinders the use of parallelism to achieve high performance computing~\cite{AsynchronousSimsSync}. 2/ It is often dubbed as physically unrealistic. Some authors argue that nature has no central clock, and thus cannot apply the same rule everywhere at once~\cite{AsynchronousPhysical}. This is a fair point as relativistic physics clearly departs from the idea of a global time across the universe. In particular the `time covariance' symmetry entails that, in any real system, it is perfectly legitimate to evolve just a small region of space, whilst keeping the rest of it unchanged.

{\em Asynchronism}---the application of the local rule at totally arbitrary places, non-deterministically---fits this picture better
and is also well studied, especially when 
it is more compelling to leave the evaluation strategy under-determined.
This is typically the case in rewriting theory, when the rewrite rules arise as a computationally-oriented version of an equality, e.g. $1+1\to 2$. Then, term $1+1+1$ may evolve into $2+1$, but it may also evolve into $1+2$, non-deterministically. This feels right, because: 1/ an underlying symmetry tells us there is no reason to favour one over the other and 2/ ultimately, if what matters is the end result, we are reassured by the fact that $2+1\to 3$ and $1+2\to 3$.

More generally, {\em confluence} of a set of rewrite rules is the property that if the evolutions $a\to^* b$ and $a\to^* c$ are possible, then there is some $d$ such that the evolutions $b\to^* d$ and $c\to^* d$ are also possible. In the context of terminating computation, this ensures that the non-determinism introduced by the order of application of the rewrite rules, will not matter to the end result, yielding a well-determined, unique normal form, e.g. $3$ in the above example.
Contributions \cite{StrongConfluenceEhrig,LocalConfluenceLambers,GraphRewritingConfluencePlump2,GraphRewritingConfluenceGaducci,GraphRewritingConfluenceJouannaud} deal with the case when the computational process eventually produces a graph as a result, through successive rewrites.

\subsection{Motivation}

\begin{figure}[t]
\centering
\begin{subfigure}{0.49\textwidth}
    \centering
    \resizebox{.5\textwidth}{!}{\tikzset{every picture/.style={line width=0.75pt}} 

\begin{tikzpicture}[x=0.75pt,y=0.75pt,yscale=-1,xscale=1]

\draw [color=myblack  ,draw opacity=1 ][line width=1.5]    (100,30) -- (200,30) ;
\draw [color=myblack  ,draw opacity=1 ][line width=1.5]    (300,30) -- (200,30) ;
\draw  [color=myblack  ,draw opacity=1 ][fill={rgb, 255:red, 255; green, 255; blue, 255 }  ,fill opacity=1 ][line width=1.5]  (90,30) .. controls (90,24.48) and (94.48,20) .. (100,20) .. controls (105.52,20) and (110,24.48) .. (110,30) .. controls (110,35.52) and (105.52,40) .. (100,40) .. controls (94.48,40) and (90,35.52) .. (90,30) -- cycle ;
\draw  [color=myblack  ,draw opacity=1 ][fill={rgb, 255:red, 255; green, 255; blue, 255 }  ,fill opacity=1 ][line width=1.5]  (290,30) .. controls (290,24.48) and (294.48,20) .. (300,20) .. controls (305.52,20) and (310,24.48) .. (310,30) .. controls (310,35.52) and (305.52,40) .. (300,40) .. controls (294.48,40) and (290,35.52) .. (290,30) -- cycle ;
\draw  [color=myblack  ,draw opacity=1 ][fill={rgb, 255:red, 255; green, 255; blue, 255 }  ,fill opacity=1 ][line width=1.5]  (190,30) .. controls (190,24.48) and (194.48,20) .. (200,20) .. controls (205.52,20) and (210,24.48) .. (210,30) .. controls (210,35.52) and (205.52,40) .. (200,40) .. controls (194.48,40) and (190,35.52) .. (190,30) -- cycle ;
\draw [color=border  ,draw opacity=1 ][line width=0.75]  [dash pattern={on 0.84pt off 2.51pt}]  (30,30) -- (90,30) ;
\draw [color=border  ,draw opacity=1 ][line width=0.75]  [dash pattern={on 0.84pt off 2.51pt}]  (310,30) -- (370,30) ;
\draw [line width=2.25]    (200,150) .. controls (170.88,123.18) and (169.72,108.29) .. (196.51,83.19) ;
\draw [shift={(200,80)}, rotate = 138.26] [fill=myblack  ][line width=0.08]  [draw opacity=0] (14.29,-6.86) -- (0,0) -- (14.29,6.86) -- cycle    ;
\draw [color=myblack  ,draw opacity=1 ][line width=1.5]    (100,180) -- (200,180) ;
\draw [color=myblack  ,draw opacity=1 ][line width=1.5]    (300,180) -- (200,180) ;
\draw  [color=myblack  ,draw opacity=1 ][fill={rgb, 255:red, 255; green, 255; blue, 255 }  ,fill opacity=1 ][line width=1.5]  (90,180) .. controls (90,174.48) and (94.48,170) .. (100,170) .. controls (105.52,170) and (110,174.48) .. (110,180) .. controls (110,185.52) and (105.52,190) .. (100,190) .. controls (94.48,190) and (90,185.52) .. (90,180) -- cycle ;
\draw  [color=myblack  ,draw opacity=1 ][fill={rgb, 255:red, 255; green, 255; blue, 255 }  ,fill opacity=1 ][line width=1.5]  (290,180) .. controls (290,174.48) and (294.48,170) .. (300,170) .. controls (305.52,170) and (310,174.48) .. (310,180) .. controls (310,185.52) and (305.52,190) .. (300,190) .. controls (294.48,190) and (290,185.52) .. (290,180) -- cycle ;
\draw  [color=myblack  ,draw opacity=1 ][fill={rgb, 255:red, 255; green, 255; blue, 255 }  ,fill opacity=1 ][line width=1.5]  (190,180) .. controls (190,174.48) and (194.48,170) .. (200,170) .. controls (205.52,170) and (210,174.48) .. (210,180) .. controls (210,185.52) and (205.52,190) .. (200,190) .. controls (194.48,190) and (190,185.52) .. (190,180) -- cycle ;
\draw [color=border  ,draw opacity=1 ][line width=0.75]  [dash pattern={on 0.84pt off 2.51pt}]  (30,180) -- (90,180) ;
\draw [color=border  ,draw opacity=1 ][line width=0.75]  [dash pattern={on 0.84pt off 2.51pt}]  (310,180) -- (370,180) ;
\draw  [color=myblack  ,draw opacity=1 ][fill=myblack  ,fill opacity=1 ][line width=1.5]  (200.56,190) .. controls (200.54,190) and (200.52,190) .. (200.5,190) .. controls (194.7,190) and (190,185.52) .. (190,180) .. controls (190,174.48) and (194.7,170) .. (200.5,170) .. controls (200.53,170) and (200.57,170) .. (200.6,170) -- cycle ;
\draw  [color=myblack  ,draw opacity=1 ][line width=1.5]  (200.6,170) .. controls (200.62,170) and (200.63,170) .. (200.65,170) .. controls (206.45,170) and (211.15,174.48) .. (211.15,180) .. controls (211.15,185.52) and (206.45,190) .. (200.65,190) .. controls (200.62,190) and (200.59,190) .. (200.56,190) -- cycle ;

\draw  [color=myblack  ,draw opacity=1 ][fill=myblack  ,fill opacity=1 ][line width=1.5]  (100.56,40) .. controls (100.54,40) and (100.52,40) .. (100.5,40) .. controls (94.7,40) and (90,35.52) .. (90,30) .. controls (90,24.48) and (94.7,20) .. (100.5,20) .. controls (100.53,20) and (100.57,20) .. (100.6,20) -- cycle ;
\draw  [color=myblack  ,draw opacity=1 ][line width=1.5]  (100.6,20) .. controls (100.62,20) and (100.63,20) .. (100.65,20) .. controls (106.45,20) and (111.15,24.48) .. (111.15,30) .. controls (111.15,35.52) and (106.45,40) .. (100.65,40) .. controls (100.62,40) and (100.59,40) .. (100.56,40) -- cycle ;

\draw (111,42) node [anchor=north west][inner sep=0.75pt]   [align=left] {\textbf{{\sizeFonex{x-1}}}};
\draw (211,42) node [anchor=north west][inner sep=0.75pt]   [align=left] {\textbf{{\sizeFonex{x}}}};
\draw (311,42) node [anchor=north west][inner sep=0.75pt]   [align=left] {\textbf{{\sizeFonex{x+1}}}};
\draw (211,92.4) node [anchor=north west][inner sep=0.75pt]  [font=\LARGE]  {\sizeFoneAx{$A_{x}$}};
\draw (111,192) node [anchor=north west][inner sep=0.75pt]   [align=left] {\textbf{{\sizeFonex{x-1}}}};
\draw (211,192) node [anchor=north west][inner sep=0.75pt]   [align=left] {\textbf{{\sizeFonex{x}}}};
\draw (311,192) node [anchor=north west][inner sep=0.75pt]   [align=left] {\textbf{{\sizeFonex{x+1}}}};

\end{tikzpicture}}\\
    {\footnotesize $A_x$ on a left moving particle.}
\end{subfigure}
\hfill
\begin{subfigure}{0.49\textwidth}
    \centering
    \resizebox{.5\textwidth}{!}{\tikzset{every picture/.style={line width=0.75pt}} 

\begin{tikzpicture}[x=0.75pt,y=0.75pt,yscale=-1,xscale=1]

\draw [color=myblack  ,draw opacity=1 ][line width=1.5]    (100,30) -- (200,30) ;
\draw [color=myblack  ,draw opacity=1 ][line width=1.5]    (300,30) -- (200,30) ;
\draw  [color=myblack  ,draw opacity=1 ][fill={rgb, 255:red, 255; green, 255; blue, 255 }  ,fill opacity=1 ][line width=1.5]  (90,30) .. controls (90,24.48) and (94.48,20) .. (100,20) .. controls (105.52,20) and (110,24.48) .. (110,30) .. controls (110,35.52) and (105.52,40) .. (100,40) .. controls (94.48,40) and (90,35.52) .. (90,30) -- cycle ;
\draw  [color=myblack  ,draw opacity=1 ][fill={rgb, 255:red, 255; green, 255; blue, 255 }  ,fill opacity=1 ][line width=1.5]  (290,30) .. controls (290,24.48) and (294.48,20) .. (300,20) .. controls (305.52,20) and (310,24.48) .. (310,30) .. controls (310,35.52) and (305.52,40) .. (300,40) .. controls (294.48,40) and (290,35.52) .. (290,30) -- cycle ;
\draw  [color=myblack  ,draw opacity=1 ][fill={rgb, 255:red, 255; green, 255; blue, 255 }  ,fill opacity=1 ][line width=1.5]  (190,30) .. controls (190,24.48) and (194.48,20) .. (200,20) .. controls (205.52,20) and (210,24.48) .. (210,30) .. controls (210,35.52) and (205.52,40) .. (200,40) .. controls (194.48,40) and (190,35.52) .. (190,30) -- cycle ;
\draw [color=border  ,draw opacity=1 ][line width=0.75]  [dash pattern={on 0.84pt off 2.51pt}]  (30,30) -- (90,30) ;
\draw [color=border  ,draw opacity=1 ][line width=0.75]  [dash pattern={on 0.84pt off 2.51pt}]  (310,30) -- (370,30) ;
\draw [color=myblack  ,draw opacity=1 ][line width=1.5]    (100,175) -- (200,175) ;
\draw [color=myblack  ,draw opacity=1 ][line width=1.5]    (300,175) -- (200,175) ;
\draw  [color=myblack  ,draw opacity=1 ][fill={rgb, 255:red, 255; green, 255; blue, 255 }  ,fill opacity=1 ][line width=1.5]  (90,175) .. controls (90,169.48) and (94.48,165) .. (100,165) .. controls (105.52,165) and (110,169.48) .. (110,175) .. controls (110,180.52) and (105.52,185) .. (100,185) .. controls (94.48,185) and (90,180.52) .. (90,175) -- cycle ;
\draw  [color=myblack  ,draw opacity=1 ][fill={rgb, 255:red, 255; green, 255; blue, 255 }  ,fill opacity=1 ][line width=1.5]  (290,175) .. controls (290,169.48) and (294.48,165) .. (300,165) .. controls (305.52,165) and (310,169.48) .. (310,175) .. controls (310,180.52) and (305.52,185) .. (300,185) .. controls (294.48,185) and (290,180.52) .. (290,175) -- cycle ;
\draw  [color=myblack  ,draw opacity=1 ][fill={rgb, 255:red, 255; green, 255; blue, 255 }  ,fill opacity=1 ][line width=1.5]  (190,175) .. controls (190,169.48) and (194.48,165) .. (200,165) .. controls (205.52,165) and (210,169.48) .. (210,175) .. controls (210,180.52) and (205.52,185) .. (200,185) .. controls (194.48,185) and (190,180.52) .. (190,175) -- cycle ;
\draw [color=border  ,draw opacity=1 ][line width=0.75]  [dash pattern={on 0.84pt off 2.51pt}]  (30,175) -- (90,175) ;
\draw [color=border  ,draw opacity=1 ][line width=0.75]  [dash pattern={on 0.84pt off 2.51pt}]  (310,175) -- (370,175) ;
\draw  [color=myblack  ,draw opacity=1 ][line width=1.5]  (200.56,185) .. controls (200.54,185) and (200.52,185) .. (200.5,185) .. controls (194.7,185) and (190,180.52) .. (190,175) .. controls (190,169.48) and (194.7,165) .. (200.5,165) .. controls (200.53,165) and (200.57,165) .. (200.6,165) -- cycle ;
\draw  [color=myblack  ,draw opacity=1 ][fill=myblack  ,fill opacity=1 ][line width=1.5]  (201.44,165) .. controls (201.46,165) and (201.48,165) .. (201.5,165) .. controls (207.3,165) and (212,169.48) .. (212,175) .. controls (212,180.52) and (207.3,185) .. (201.5,185) .. controls (201.47,185) and (201.43,185) .. (201.4,185) -- cycle ;

\draw  [color=myblack  ,draw opacity=1 ][line width=1.5]  (300.56,40) .. controls (300.54,40) and (300.52,40) .. (300.5,40) .. controls (294.7,40) and (290,35.52) .. (290,30) .. controls (290,24.48) and (294.7,20) .. (300.5,20) .. controls (300.53,20) and (300.57,20) .. (300.6,20) -- cycle ;
\draw  [color=myblack  ,draw opacity=1 ][fill=myblack  ,fill opacity=1 ][line width=1.5]  (301.44,20) .. controls (301.46,20) and (301.48,20) .. (301.5,20) .. controls (307.3,20) and (312,24.48) .. (312,30) .. controls (312,35.52) and (307.3,40) .. (301.5,40) .. controls (301.47,40) and (301.43,40) .. (301.4,40) -- cycle ;

\draw [line width=2.25]    (192.75,150) .. controls (163.63,123.18) and (162.47,108.29) .. (189.26,83.19) ;
\draw [shift={(192.75,80)}, rotate = 138.26] [fill=myblack  ][line width=0.08]  [draw opacity=0] (14.29,-6.86) -- (0,0) -- (14.29,6.86) -- cycle    ;

\draw (111,42) node [anchor=north west][inner sep=0.75pt]   [align=left] {\textbf{{\sizeFonex{x-1}}}};
\draw (211,42) node [anchor=north west][inner sep=0.75pt]   [align=left] {\textbf{{\sizeFonex{x}}}};
\draw (311,42) node [anchor=north west][inner sep=0.75pt]   [align=left] {\textbf{{\sizeFonex{x+1}}}};
\draw (111,187) node [anchor=north west][inner sep=0.75pt]   [align=left] {\textbf{{\sizeFonex{x-1}}}};
\draw (211,187) node [anchor=north west][inner sep=0.75pt]   [align=left] {\textbf{{\sizeFonex{x}}}};
\draw (311,187) node [anchor=north west][inner sep=0.75pt]   [align=left] {\textbf{{\sizeFonex{x+1}}}};
\draw (203.75,92.4) node [anchor=north west][inner sep=0.75pt]  [font=\LARGE]  {\sizeFoneAx{$A_{x}$}};

\end{tikzpicture}}\\
    {\footnotesize $A_x$ on a right moving particle.}
\end{subfigure}

\begin{subfigure}{0.49\textwidth}
    \captionsetup{justification=centering}
    \resizebox{\textwidth}{!}{\tikzset{every picture/.style={line width=0.75pt}} 

\begin{tikzpicture}[x=0.75pt,y=0.75pt,yscale=-1,xscale=1]

\draw [color=myblack  ,draw opacity=1 ][line width=1.5]    (928.46,47.46) -- (808.14,47.46) ;
\draw [color=border  ,draw opacity=1 ][line width=0.75]  [dash pattern={on 0.84pt off 2.51pt}]  (1060.81,47.46) -- (1133,47.46) ;
\draw [color=myblack  ,draw opacity=1 ][line width=1.5]    (1060.81,47.46) -- (940.49,47.46) ;
\draw [color=myblack  ,draw opacity=1 ][line width=1.5]    (86.22,47.46) -- (206.54,47.46) ;
\draw [color=myblack  ,draw opacity=1 ][line width=1.5]    (326.86,47.46) -- (206.54,47.46) ;
\draw [color=myblack  ,draw opacity=1 ][line width=1.5]    (326.86,47.46) -- (447.18,47.46) ;
\draw [color=myblack  ,draw opacity=1 ][line width=1.5]    (567.5,47.46) -- (447.18,47.46) ;
\draw [color=myblack  ,draw opacity=1 ][line width=1.5]    (567.5,47.46) -- (687.82,47.46) ;
\draw [color=myblack  ,draw opacity=1 ][line width=1.5]    (808.14,47.46) -- (687.82,47.46) ;
\draw [color=border  ,draw opacity=1 ][line width=0.75]  [dash pattern={on 0.84pt off 2.51pt}]  (2,47.46) -- (74.19,47.46) ;
\draw  [color=myblack  ,draw opacity=1 ][fill={rgb, 255:red, 255; green, 255; blue, 255 }  ,fill opacity=1 ][line width=1.5]  (310.29,47.46) .. controls (310.29,38.31) and (317.71,30.89) .. (326.86,30.89) .. controls (336.01,30.89) and (343.43,38.31) .. (343.43,47.46) .. controls (343.43,56.61) and (336.01,64.03) .. (326.86,64.03) .. controls (317.71,64.03) and (310.29,56.61) .. (310.29,47.46) -- cycle ;
\draw  [color=myblack  ,draw opacity=1 ][fill={rgb, 255:red, 255; green, 255; blue, 255 }  ,fill opacity=1 ][line width=1.5]  (69.65,47.46) .. controls (69.65,38.31) and (77.07,30.89) .. (86.22,30.89) .. controls (95.37,30.89) and (102.79,38.31) .. (102.79,47.46) .. controls (102.79,56.61) and (95.37,64.03) .. (86.22,64.03) .. controls (77.07,64.03) and (69.65,56.61) .. (69.65,47.46) -- cycle ;
\draw  [color=myblack  ,draw opacity=1 ][fill={rgb, 255:red, 255; green, 255; blue, 255 }  ,fill opacity=1 ][line width=1.5]  (189.97,47.46) .. controls (189.97,38.31) and (197.39,30.89) .. (206.54,30.89) .. controls (215.69,30.89) and (223.11,38.31) .. (223.11,47.46) .. controls (223.11,56.61) and (215.69,64.03) .. (206.54,64.03) .. controls (197.39,64.03) and (189.97,56.61) .. (189.97,47.46) -- cycle ;
\draw  [color=myblack  ,draw opacity=1 ][fill={rgb, 255:red, 255; green, 255; blue, 255 }  ,fill opacity=1 ][line width=1.5]  (430.61,47.46) .. controls (430.61,38.31) and (438.03,30.89) .. (447.18,30.89) .. controls (456.33,30.89) and (463.75,38.31) .. (463.75,47.46) .. controls (463.75,56.61) and (456.33,64.03) .. (447.18,64.03) .. controls (438.03,64.03) and (430.61,56.61) .. (430.61,47.46) -- cycle ;
\draw  [color=myblack  ,draw opacity=1 ][fill={rgb, 255:red, 255; green, 255; blue, 255 }  ,fill opacity=1 ][line width=1.5]  (550.93,47.46) .. controls (550.93,38.31) and (558.35,30.89) .. (567.5,30.89) .. controls (576.65,30.89) and (584.07,38.31) .. (584.07,47.46) .. controls (584.07,56.61) and (576.65,64.03) .. (567.5,64.03) .. controls (558.35,64.03) and (550.93,56.61) .. (550.93,47.46) -- cycle ;
\draw  [color=myblack  ,draw opacity=1 ][fill={rgb, 255:red, 255; green, 255; blue, 255 }  ,fill opacity=1 ][line width=1.5]  (671.25,47.46) .. controls (671.25,38.31) and (678.67,30.89) .. (687.82,30.89) .. controls (696.97,30.89) and (704.39,38.31) .. (704.39,47.46) .. controls (704.39,56.61) and (696.97,64.03) .. (687.82,64.03) .. controls (678.67,64.03) and (671.25,56.61) .. (671.25,47.46) -- cycle ;
\draw  [color=myblack  ,draw opacity=1 ][fill={rgb, 255:red, 255; green, 255; blue, 255 }  ,fill opacity=1 ][line width=1.5]  (791.57,47.46) .. controls (791.57,38.31) and (798.99,30.89) .. (808.14,30.89) .. controls (817.29,30.89) and (824.71,38.31) .. (824.71,47.46) .. controls (824.71,56.61) and (817.29,64.03) .. (808.14,64.03) .. controls (798.99,64.03) and (791.57,56.61) .. (791.57,47.46) -- cycle ;
\draw  [color=myblack  ,draw opacity=1 ][fill={rgb, 255:red, 255; green, 255; blue, 255 }  ,fill opacity=1 ][line width=1.5]  (911.89,47.46) .. controls (911.89,38.31) and (919.31,30.89) .. (928.46,30.89) .. controls (937.61,30.89) and (945.03,38.31) .. (945.03,47.46) .. controls (945.03,56.61) and (937.61,64.03) .. (928.46,64.03) .. controls (919.31,64.03) and (911.89,56.61) .. (911.89,47.46) -- cycle ;
\draw  [color=myblack  ,draw opacity=1 ][fill={rgb, 255:red, 255; green, 255; blue, 255 }  ,fill opacity=1 ][line width=1.5]  (1044.24,47.46) .. controls (1044.24,38.31) and (1051.66,30.89) .. (1060.81,30.89) .. controls (1069.96,30.89) and (1077.38,38.31) .. (1077.38,47.46) .. controls (1077.38,56.61) and (1069.96,64.03) .. (1060.81,64.03) .. controls (1051.66,64.03) and (1044.24,56.61) .. (1044.24,47.46) -- cycle ;
\draw  [color=myblack  ,draw opacity=1 ][fill=myblack  ,fill opacity=1 ][line width=1.5]  (928.48,64.2) .. controls (928.45,64.2) and (928.41,64.2) .. (928.38,64.2) .. controls (919.33,64.2) and (912,56.32) .. (912,46.6) .. controls (912,36.88) and (919.33,29) .. (928.38,29) .. controls (928.44,29) and (928.49,29) .. (928.55,29) -- cycle ;
\draw  [color=myblack  ,draw opacity=1 ][line width=1.5]  (928.52,29) .. controls (928.55,29) and (928.59,29) .. (928.62,29) .. controls (937.67,29) and (945,36.88) .. (945,46.6) .. controls (945,56.32) and (937.67,64.2) .. (928.62,64.2) .. controls (928.56,64.2) and (928.51,64.2) .. (928.45,64.2) -- cycle ;

\draw  [color=myblack  ,draw opacity=1 ][line width=1.5]  (206.8,63.2) .. controls (206.77,63.2) and (206.74,63.2) .. (206.7,63.2) .. controls (197.48,63.2) and (190,55.77) .. (190,46.6) .. controls (190,37.43) and (197.48,30) .. (206.7,30) .. controls (206.76,30) and (206.81,30) .. (206.87,30) -- cycle ;
\draw  [color=myblack  ,draw opacity=1 ][fill=myblack  ,fill opacity=1 ][line width=1.5]  (208.2,30) .. controls (208.23,30) and (208.26,30) .. (208.3,30) .. controls (217.52,30) and (225,37.43) .. (225,46.6) .. controls (225,55.77) and (217.52,63.2) .. (208.3,63.2) .. controls (208.24,63.2) and (208.19,63.2) .. (208.13,63.2) -- cycle ;

\draw (101,62) node [anchor=north west][inner sep=0.75pt]  [font=\Large] [align=left] {\textbf{{\sizeFone{0}}}};
\draw (221,65.22) node [anchor=north west][inner sep=0.75pt]  [font=\Large] [align=left] {\textbf{{\sizeFone{1}}}};
\draw (341.32,65.22) node [anchor=north west][inner sep=0.75pt]  [font=\Large] [align=left] {\textbf{{\sizeFone{2}}}};
\draw (461.64,65.22) node [anchor=north west][inner sep=0.75pt]  [font=\Large] [align=left] {\textbf{{\sizeFone{3}}}};
\draw (581.95,65.22) node [anchor=north west][inner sep=0.75pt]  [font=\Large] [align=left] {\textbf{{\sizeFone{4}}}};
\draw (702.27,65.22) node [anchor=north west][inner sep=0.75pt]  [font=\Large] [align=left] {\textbf{{\sizeFone{5}}}};
\draw (822.59,65.22) node [anchor=north west][inner sep=0.75pt]  [font=\Large] [align=left] {\textbf{{\sizeFone{6}}}};
\draw (942.91,65.22) node [anchor=north west][inner sep=0.75pt]  [font=\Large] [align=left] {\textbf{{\sizeFone{7}}}};
\draw (1063.23,65.22) node [anchor=north west][inner sep=0.75pt]  [font=\Large] [align=left] {\textbf{{\sizeFone{ 8}}}};

\end{tikzpicture}}
    \caption{$G$.}
\end{subfigure}
\hfill
\begin{subfigure}{0.49\textwidth}
    \captionsetup{justification=centering}
    \resizebox{\textwidth}{!}{\tikzset{every picture/.style={line width=0.75pt}} 

\begin{tikzpicture}[x=0.75pt,y=0.75pt,yscale=-1,xscale=1]

\draw [color=myblack  ,draw opacity=1 ][line width=1.5]    (928.46,47.46) -- (808.14,47.46) ;
\draw [color=border  ,draw opacity=1 ][line width=0.75]  [dash pattern={on 0.84pt off 2.51pt}]  (1060.81,47.46) -- (1133,47.46) ;
\draw [color=myblack  ,draw opacity=1 ][line width=1.5]    (1060.81,47.46) -- (940.49,47.46) ;
\draw [color=myblack  ,draw opacity=1 ][line width=1.5]    (86.22,47.46) -- (206.54,47.46) ;
\draw [color=myblack  ,draw opacity=1 ][line width=1.5]    (326.86,47.46) -- (206.54,47.46) ;
\draw [color=myblack  ,draw opacity=1 ][line width=1.5]    (326.86,47.46) -- (447.18,47.46) ;
\draw [color=myblack  ,draw opacity=1 ][line width=1.5]    (567.5,47.46) -- (447.18,47.46) ;
\draw [color=myblack  ,draw opacity=1 ][line width=1.5]    (567.5,47.46) -- (687.82,47.46) ;
\draw [color=myblack  ,draw opacity=1 ][line width=1.5]    (808.14,47.46) -- (687.82,47.46) ;
\draw [color=border  ,draw opacity=1 ][line width=0.75]  [dash pattern={on 0.84pt off 2.51pt}]  (2,47.46) -- (74.19,47.46) ;
\draw  [color=myblack  ,draw opacity=1 ][fill=myblack  ,fill opacity=1 ][line width=1.5]  (310.29,47.46) .. controls (310.29,38.31) and (317.71,30.89) .. (326.86,30.89) .. controls (336.01,30.89) and (343.43,38.31) .. (343.43,47.46) .. controls (343.43,56.61) and (336.01,64.03) .. (326.86,64.03) .. controls (317.71,64.03) and (310.29,56.61) .. (310.29,47.46) -- cycle ;
\draw  [color=myblack  ,draw opacity=1 ][fill={rgb, 255:red, 255; green, 255; blue, 255 }  ,fill opacity=1 ][line width=1.5]  (69.65,47.46) .. controls (69.65,38.31) and (77.07,30.89) .. (86.22,30.89) .. controls (95.37,30.89) and (102.79,38.31) .. (102.79,47.46) .. controls (102.79,56.61) and (95.37,64.03) .. (86.22,64.03) .. controls (77.07,64.03) and (69.65,56.61) .. (69.65,47.46) -- cycle ;
\draw  [color=myblack  ,draw opacity=1 ][fill={rgb, 255:red, 255; green, 255; blue, 255 }  ,fill opacity=1 ][line width=1.5]  (189.97,47.46) .. controls (189.97,38.31) and (197.39,30.89) .. (206.54,30.89) .. controls (215.69,30.89) and (223.11,38.31) .. (223.11,47.46) .. controls (223.11,56.61) and (215.69,64.03) .. (206.54,64.03) .. controls (197.39,64.03) and (189.97,56.61) .. (189.97,47.46) -- cycle ;
\draw  [color=myblack  ,draw opacity=1 ][fill={rgb, 255:red, 255; green, 255; blue, 255 }  ,fill opacity=1 ][line width=1.5]  (430.61,47.46) .. controls (430.61,38.31) and (438.03,30.89) .. (447.18,30.89) .. controls (456.33,30.89) and (463.75,38.31) .. (463.75,47.46) .. controls (463.75,56.61) and (456.33,64.03) .. (447.18,64.03) .. controls (438.03,64.03) and (430.61,56.61) .. (430.61,47.46) -- cycle ;
\draw  [color=myblack  ,draw opacity=1 ][fill={rgb, 255:red, 255; green, 255; blue, 255 }  ,fill opacity=1 ][line width=1.5]  (550.93,47.46) .. controls (550.93,38.31) and (558.35,30.89) .. (567.5,30.89) .. controls (576.65,30.89) and (584.07,38.31) .. (584.07,47.46) .. controls (584.07,56.61) and (576.65,64.03) .. (567.5,64.03) .. controls (558.35,64.03) and (550.93,56.61) .. (550.93,47.46) -- cycle ;
\draw  [color=myblack  ,draw opacity=1 ][fill={rgb, 255:red, 255; green, 255; blue, 255 }  ,fill opacity=1 ][line width=1.5]  (671.25,47.46) .. controls (671.25,38.31) and (678.67,30.89) .. (687.82,30.89) .. controls (696.97,30.89) and (704.39,38.31) .. (704.39,47.46) .. controls (704.39,56.61) and (696.97,64.03) .. (687.82,64.03) .. controls (678.67,64.03) and (671.25,56.61) .. (671.25,47.46) -- cycle ;
\draw  [color=myblack  ,draw opacity=1 ][fill={rgb, 255:red, 255; green, 255; blue, 255 }  ,fill opacity=1 ][line width=1.5]  (791.57,47.46) .. controls (791.57,38.31) and (798.99,30.89) .. (808.14,30.89) .. controls (817.29,30.89) and (824.71,38.31) .. (824.71,47.46) .. controls (824.71,56.61) and (817.29,64.03) .. (808.14,64.03) .. controls (798.99,64.03) and (791.57,56.61) .. (791.57,47.46) -- cycle ;
\draw  [color=myblack  ,draw opacity=1 ][fill={rgb, 255:red, 255; green, 255; blue, 255 }  ,fill opacity=1 ][line width=1.5]  (911.89,47.46) .. controls (911.89,38.31) and (919.31,30.89) .. (928.46,30.89) .. controls (937.61,30.89) and (945.03,38.31) .. (945.03,47.46) .. controls (945.03,56.61) and (937.61,64.03) .. (928.46,64.03) .. controls (919.31,64.03) and (911.89,56.61) .. (911.89,47.46) -- cycle ;
\draw  [color=myblack  ,draw opacity=1 ][fill={rgb, 255:red, 255; green, 255; blue, 255 }  ,fill opacity=1 ][line width=1.5]  (1044.24,47.46) .. controls (1044.24,38.31) and (1051.66,30.89) .. (1060.81,30.89) .. controls (1069.96,30.89) and (1077.38,38.31) .. (1077.38,47.46) .. controls (1077.38,56.61) and (1069.96,64.03) .. (1060.81,64.03) .. controls (1051.66,64.03) and (1044.24,56.61) .. (1044.24,47.46) -- cycle ;

\draw (101,62) node [anchor=north west][inner sep=0.75pt]  [font=\Large] [align=left] {\textbf{{\sizeFone{0}}}};
\draw (221,65.22) node [anchor=north west][inner sep=0.75pt]  [font=\Large] [align=left] {\textbf{{\sizeFone{1}}}};
\draw (341.32,65.22) node [anchor=north west][inner sep=0.75pt]  [font=\Large] [align=left] {\textbf{{\sizeFone{2}}}};
\draw (461.64,65.22) node [anchor=north west][inner sep=0.75pt]  [font=\Large] [align=left] {\textbf{{\sizeFone{3}}}};
\draw (581.95,65.22) node [anchor=north west][inner sep=0.75pt]  [font=\Large] [align=left] {\textbf{{\sizeFone{4}}}};
\draw (702.27,65.22) node [anchor=north west][inner sep=0.75pt]  [font=\Large] [align=left] {\textbf{{\sizeFone{5}}}};
\draw (822.59,65.22) node [anchor=north west][inner sep=0.75pt]  [font=\Large] [align=left] {\textbf{{\sizeFone{6}}}};
\draw (942.91,65.22) node [anchor=north west][inner sep=0.75pt]  [font=\Large] [align=left] {\textbf{{\sizeFone{7}}}};
\draw (1063.23,65.22) node [anchor=north west][inner sep=0.75pt]  [font=\Large] [align=left] {\textbf{{\sizeFone{8}}}};

\end{tikzpicture}}
    \caption{$A_{345671} G$.}
\end{subfigure}

\begin{subfigure}{0.49\textwidth}
    \captionsetup{justification=centering}
    \resizebox{\textwidth}{!}{\tikzset{every picture/.style={line width=0.75pt}} 

\begin{tikzpicture}[x=0.75pt,y=0.75pt,yscale=-1,xscale=1]

\draw [color=myblack  ,draw opacity=1 ][line width=1.5]    (928.46,47.46) -- (808.14,47.46) ;
\draw [color=border  ,draw opacity=1 ][line width=0.75]  [dash pattern={on 0.84pt off 2.51pt}]  (1060.81,47.46) -- (1133,47.46) ;
\draw [color=myblack  ,draw opacity=1 ][line width=1.5]    (1060.81,47.46) -- (940.49,47.46) ;
\draw [color=myblack  ,draw opacity=1 ][line width=1.5]    (86.22,47.46) -- (206.54,47.46) ;
\draw [color=myblack  ,draw opacity=1 ][line width=1.5]    (326.86,47.46) -- (206.54,47.46) ;
\draw [color=myblack  ,draw opacity=1 ][line width=1.5]    (326.86,47.46) -- (447.18,47.46) ;
\draw [color=myblack  ,draw opacity=1 ][line width=1.5]    (567.5,47.46) -- (447.18,47.46) ;
\draw [color=myblack  ,draw opacity=1 ][line width=1.5]    (567.5,47.46) -- (687.82,47.46) ;
\draw [color=myblack  ,draw opacity=1 ][line width=1.5]    (808.14,47.46) -- (687.82,47.46) ;
\draw [color=border  ,draw opacity=1 ][line width=0.75]  [dash pattern={on 0.84pt off 2.51pt}]  (2,47.46) -- (74.19,47.46) ;
\draw  [color=myblack  ,draw opacity=1 ][fill={rgb, 255:red, 255; green, 255; blue, 255 }  ,fill opacity=1 ][line width=1.5]  (310.29,47.46) .. controls (310.29,38.31) and (317.71,30.89) .. (326.86,30.89) .. controls (336.01,30.89) and (343.43,38.31) .. (343.43,47.46) .. controls (343.43,56.61) and (336.01,64.03) .. (326.86,64.03) .. controls (317.71,64.03) and (310.29,56.61) .. (310.29,47.46) -- cycle ;
\draw  [color=myblack  ,draw opacity=1 ][fill={rgb, 255:red, 255; green, 255; blue, 255 }  ,fill opacity=1 ][line width=1.5]  (69.65,47.46) .. controls (69.65,38.31) and (77.07,30.89) .. (86.22,30.89) .. controls (95.37,30.89) and (102.79,38.31) .. (102.79,47.46) .. controls (102.79,56.61) and (95.37,64.03) .. (86.22,64.03) .. controls (77.07,64.03) and (69.65,56.61) .. (69.65,47.46) -- cycle ;
\draw  [color=myblack  ,draw opacity=1 ][fill={rgb, 255:red, 255; green, 255; blue, 255 }  ,fill opacity=1 ][line width=1.5]  (189.97,47.46) .. controls (189.97,38.31) and (197.39,30.89) .. (206.54,30.89) .. controls (215.69,30.89) and (223.11,38.31) .. (223.11,47.46) .. controls (223.11,56.61) and (215.69,64.03) .. (206.54,64.03) .. controls (197.39,64.03) and (189.97,56.61) .. (189.97,47.46) -- cycle ;
\draw  [color=myblack  ,draw opacity=1 ][fill={rgb, 255:red, 255; green, 255; blue, 255 }  ,fill opacity=1 ][line width=1.5]  (430.61,47.46) .. controls (430.61,38.31) and (438.03,30.89) .. (447.18,30.89) .. controls (456.33,30.89) and (463.75,38.31) .. (463.75,47.46) .. controls (463.75,56.61) and (456.33,64.03) .. (447.18,64.03) .. controls (438.03,64.03) and (430.61,56.61) .. (430.61,47.46) -- cycle ;
\draw  [color=myblack  ,draw opacity=1 ][fill={rgb, 255:red, 255; green, 255; blue, 255 }  ,fill opacity=1 ][line width=1.5]  (550.93,47.46) .. controls (550.93,38.31) and (558.35,30.89) .. (567.5,30.89) .. controls (576.65,30.89) and (584.07,38.31) .. (584.07,47.46) .. controls (584.07,56.61) and (576.65,64.03) .. (567.5,64.03) .. controls (558.35,64.03) and (550.93,56.61) .. (550.93,47.46) -- cycle ;
\draw  [color=myblack  ,draw opacity=1 ][fill={rgb, 255:red, 255; green, 255; blue, 255 }  ,fill opacity=1 ][line width=1.5]  (671.25,47.46) .. controls (671.25,38.31) and (678.67,30.89) .. (687.82,30.89) .. controls (696.97,30.89) and (704.39,38.31) .. (704.39,47.46) .. controls (704.39,56.61) and (696.97,64.03) .. (687.82,64.03) .. controls (678.67,64.03) and (671.25,56.61) .. (671.25,47.46) -- cycle ;
\draw  [color=myblack  ,draw opacity=1 ][fill=myblack  ,fill opacity=1 ][line width=1.5]  (791.57,47.46) .. controls (791.57,38.31) and (798.99,30.89) .. (808.14,30.89) .. controls (817.29,30.89) and (824.71,38.31) .. (824.71,47.46) .. controls (824.71,56.61) and (817.29,64.03) .. (808.14,64.03) .. controls (798.99,64.03) and (791.57,56.61) .. (791.57,47.46) -- cycle ;
\draw  [color=myblack  ,draw opacity=1 ][fill={rgb, 255:red, 255; green, 255; blue, 255 }  ,fill opacity=1 ][line width=1.5]  (911.89,47.46) .. controls (911.89,38.31) and (919.31,30.89) .. (928.46,30.89) .. controls (937.61,30.89) and (945.03,38.31) .. (945.03,47.46) .. controls (945.03,56.61) and (937.61,64.03) .. (928.46,64.03) .. controls (919.31,64.03) and (911.89,56.61) .. (911.89,47.46) -- cycle ;
\draw  [color=myblack  ,draw opacity=1 ][fill={rgb, 255:red, 255; green, 255; blue, 255 }  ,fill opacity=1 ][line width=1.5]  (1044.24,47.46) .. controls (1044.24,38.31) and (1051.66,30.89) .. (1060.81,30.89) .. controls (1069.96,30.89) and (1077.38,38.31) .. (1077.38,47.46) .. controls (1077.38,56.61) and (1069.96,64.03) .. (1060.81,64.03) .. controls (1051.66,64.03) and (1044.24,56.61) .. (1044.24,47.46) -- cycle ;

\draw (101,62) node [anchor=north west][inner sep=0.75pt]  [font=\Large] [align=left] {\textbf{{\sizeFone{0}}}};
\draw (221,65.22) node [anchor=north west][inner sep=0.75pt]  [font=\Large] [align=left] {\textbf{{\sizeFone{1}}}};
\draw (341.32,65.22) node [anchor=north west][inner sep=0.75pt]  [font=\Large] [align=left] {\textbf{{\sizeFone{2}}}};
\draw (461.64,65.22) node [anchor=north west][inner sep=0.75pt]  [font=\Large] [align=left] {\textbf{{\sizeFone{3}}}};
\draw (581.95,65.22) node [anchor=north west][inner sep=0.75pt]  [font=\Large] [align=left] {\textbf{{\sizeFone{4}}}};
\draw (702.27,65.22) node [anchor=north west][inner sep=0.75pt]  [font=\Large] [align=left] {\textbf{{\sizeFone{5}}}};
\draw (822.59,65.22) node [anchor=north west][inner sep=0.75pt]  [font=\Large] [align=left] {\textbf{{\sizeFone{6}}}};
\draw (942.91,65.22) node [anchor=north west][inner sep=0.75pt]  [font=\Large] [align=left] {\textbf{{\sizeFone{7}}}};
\draw (1063.23,65.22) node [anchor=north west][inner sep=0.75pt]  [font=\Large] [align=left] {\textbf{{\sizeFone{8}}}};

\end{tikzpicture}}
    \caption{$A_{543217} G$.}
\end{subfigure}
\hfill
\begin{subfigure}{0,49\textwidth}
    \captionsetup{justification=centering}
    \resizebox{\textwidth}{!}{\tikzset{every picture/.style={line width=0.75pt}} 

\begin{tikzpicture}[x=0.75pt,y=0.75pt,yscale=-1,xscale=1]

\draw [color=myblack  ,draw opacity=1 ][line width=1.5]    (928.46,47.46) -- (808.14,47.46) ;
\draw [color=border  ,draw opacity=1 ][line width=0.75]  [dash pattern={on 0.84pt off 2.51pt}]  (1060.81,47.46) -- (1133,47.46) ;
\draw [color=myblack  ,draw opacity=1 ][line width=1.5]    (1060.81,47.46) -- (940.49,47.46) ;
\draw [color=myblack  ,draw opacity=1 ][line width=1.5]    (86.22,47.46) -- (206.54,47.46) ;
\draw [color=myblack  ,draw opacity=1 ][line width=1.5]    (326.86,47.46) -- (206.54,47.46) ;
\draw [color=myblack  ,draw opacity=1 ][line width=1.5]    (326.86,47.46) -- (447.18,47.46) ;
\draw [color=myblack  ,draw opacity=1 ][line width=1.5]    (567.5,47.46) -- (447.18,47.46) ;
\draw [color=myblack  ,draw opacity=1 ][line width=1.5]    (567.5,47.46) -- (687.82,47.46) ;
\draw [color=myblack  ,draw opacity=1 ][line width=1.5]    (808.14,47.46) -- (687.82,47.46) ;
\draw [color=border  ,draw opacity=1 ][line width=0.75]  [dash pattern={on 0.84pt off 2.51pt}]  (2,47.46) -- (74.19,47.46) ;
\draw  [color=myblack  ,draw opacity=1 ][fill={rgb, 255:red, 255; green, 255; blue, 255 }  ,fill opacity=1 ][line width=1.5]  (310.29,47.46) .. controls (310.29,38.31) and (317.71,30.89) .. (326.86,30.89) .. controls (336.01,30.89) and (343.43,38.31) .. (343.43,47.46) .. controls (343.43,56.61) and (336.01,64.03) .. (326.86,64.03) .. controls (317.71,64.03) and (310.29,56.61) .. (310.29,47.46) -- cycle ;
\draw  [color=myblack  ,draw opacity=1 ][fill={rgb, 255:red, 255; green, 255; blue, 255 }  ,fill opacity=1 ][line width=1.5]  (69.65,47.46) .. controls (69.65,38.31) and (77.07,30.89) .. (86.22,30.89) .. controls (95.37,30.89) and (102.79,38.31) .. (102.79,47.46) .. controls (102.79,56.61) and (95.37,64.03) .. (86.22,64.03) .. controls (77.07,64.03) and (69.65,56.61) .. (69.65,47.46) -- cycle ;
\draw  [color=myblack  ,draw opacity=1 ][fill={rgb, 255:red, 255; green, 255; blue, 255 }  ,fill opacity=1 ][line width=1.5]  (189.97,47.46) .. controls (189.97,38.31) and (197.39,30.89) .. (206.54,30.89) .. controls (215.69,30.89) and (223.11,38.31) .. (223.11,47.46) .. controls (223.11,56.61) and (215.69,64.03) .. (206.54,64.03) .. controls (197.39,64.03) and (189.97,56.61) .. (189.97,47.46) -- cycle ;
\draw  [color=myblack  ,draw opacity=1 ][fill={rgb, 255:red, 255; green, 255; blue, 255 }  ,fill opacity=1 ][line width=1.5]  (430.61,47.46) .. controls (430.61,38.31) and (438.03,30.89) .. (447.18,30.89) .. controls (456.33,30.89) and (463.75,38.31) .. (463.75,47.46) .. controls (463.75,56.61) and (456.33,64.03) .. (447.18,64.03) .. controls (438.03,64.03) and (430.61,56.61) .. (430.61,47.46) -- cycle ;
\draw  [color=myblack  ,draw opacity=1 ][fill={rgb, 255:red, 255; green, 255; blue, 255 }  ,fill opacity=1 ][line width=1.5]  (550.93,47.46) .. controls (550.93,38.31) and (558.35,30.89) .. (567.5,30.89) .. controls (576.65,30.89) and (584.07,38.31) .. (584.07,47.46) .. controls (584.07,56.61) and (576.65,64.03) .. (567.5,64.03) .. controls (558.35,64.03) and (550.93,56.61) .. (550.93,47.46) -- cycle ;
\draw  [color=myblack  ,draw opacity=1 ][fill={rgb, 255:red, 255; green, 255; blue, 255 }  ,fill opacity=1 ][line width=1.5]  (671.25,47.46) .. controls (671.25,38.31) and (678.67,30.89) .. (687.82,30.89) .. controls (696.97,30.89) and (704.39,38.31) .. (704.39,47.46) .. controls (704.39,56.61) and (696.97,64.03) .. (687.82,64.03) .. controls (678.67,64.03) and (671.25,56.61) .. (671.25,47.46) -- cycle ;
\draw  [color=myblack  ,draw opacity=1 ][fill={rgb, 255:red, 255; green, 255; blue, 255 }  ,fill opacity=1 ][line width=1.5]  (791.57,47.46) .. controls (791.57,38.31) and (798.99,30.89) .. (808.14,30.89) .. controls (817.29,30.89) and (824.71,38.31) .. (824.71,47.46) .. controls (824.71,56.61) and (817.29,64.03) .. (808.14,64.03) .. controls (798.99,64.03) and (791.57,56.61) .. (791.57,47.46) -- cycle ;
\draw  [color=myblack  ,draw opacity=1 ][fill={rgb, 255:red, 255; green, 255; blue, 255 }  ,fill opacity=1 ][line width=1.5]  (911.89,47.46) .. controls (911.89,38.31) and (919.31,30.89) .. (928.46,30.89) .. controls (937.61,30.89) and (945.03,38.31) .. (945.03,47.46) .. controls (945.03,56.61) and (937.61,64.03) .. (928.46,64.03) .. controls (919.31,64.03) and (911.89,56.61) .. (911.89,47.46) -- cycle ;
\draw  [color=myblack  ,draw opacity=1 ][fill={rgb, 255:red, 255; green, 255; blue, 255 }  ,fill opacity=1 ][line width=1.5]  (1044.24,47.46) .. controls (1044.24,38.31) and (1051.66,30.89) .. (1060.81,30.89) .. controls (1069.96,30.89) and (1077.38,38.31) .. (1077.38,47.46) .. controls (1077.38,56.61) and (1069.96,64.03) .. (1060.81,64.03) .. controls (1051.66,64.03) and (1044.24,56.61) .. (1044.24,47.46) -- cycle ;
\draw  [color=myblack  ,draw opacity=1 ][fill=myblack  ,fill opacity=1 ][line width=1.5]  (206.48,65.2) .. controls (206.45,65.2) and (206.41,65.2) .. (206.38,65.2) .. controls (197.33,65.2) and (190,57.32) .. (190,47.6) .. controls (190,37.88) and (197.33,30) .. (206.38,30) .. controls (206.44,30) and (206.49,30) .. (206.55,30) -- cycle ;
\draw  [color=myblack  ,draw opacity=1 ][line width=1.5]  (206.52,30) .. controls (206.55,30) and (206.59,30) .. (206.62,30) .. controls (215.67,30) and (223,37.88) .. (223,47.6) .. controls (223,57.32) and (215.67,65.2) .. (206.62,65.2) .. controls (206.56,65.2) and (206.51,65.2) .. (206.45,65.2) -- cycle ;

\draw  [color=myblack  ,draw opacity=1 ][line width=1.5]  (927.8,64.2) .. controls (927.77,64.2) and (927.74,64.2) .. (927.7,64.2) .. controls (918.48,64.2) and (911,56.77) .. (911,47.6) .. controls (911,38.43) and (918.48,31) .. (927.7,31) .. controls (927.76,31) and (927.81,31) .. (927.87,31) -- cycle ;
\draw  [color=myblack  ,draw opacity=1 ][fill=myblack  ,fill opacity=1 ][line width=1.5]  (929.2,31) .. controls (929.23,31) and (929.26,31) .. (929.3,31) .. controls (938.52,31) and (946,38.43) .. (946,47.6) .. controls (946,56.77) and (938.52,64.2) .. (929.3,64.2) .. controls (929.24,64.2) and (929.19,64.2) .. (929.13,64.2) -- cycle ;

\draw (101,62) node [anchor=north west][inner sep=0.75pt]  [font=\Large] [align=left] {\textbf{{\sizeFone{0}}}};
\draw (221,65.22) node [anchor=north west][inner sep=0.75pt]  [font=\Large] [align=left] {\textbf{{\sizeFone{1}}}};
\draw (341.32,65.22) node [anchor=north west][inner sep=0.75pt]  [font=\Large] [align=left] {\textbf{{\sizeFone{2}}}};
\draw (461.64,65.22) node [anchor=north west][inner sep=0.75pt]  [font=\Large] [align=left] {\textbf{{\sizeFone{3}}}};
\draw (581.95,65.22) node [anchor=north west][inner sep=0.75pt]  [font=\Large] [align=left] {\textbf{{\sizeFone{4}}}};
\draw (702.27,65.22) node [anchor=north west][inner sep=0.75pt]  [font=\Large] [align=left] {\textbf{{\sizeFone{5}}}};
\draw (822.59,65.22) node [anchor=north west][inner sep=0.75pt]  [font=\Large] [align=left] {\textbf{{\sizeFone{6}}}};
\draw (942.91,65.22) node [anchor=north west][inner sep=0.75pt]  [font=\Large] [align=left] {\textbf{{\sizeFone{7}}}};
\draw (1063.23,65.22) node [anchor=north west][inner sep=0.75pt]  [font=\Large] [align=left] {\textbf{{\sizeFone{8}}}};

\end{tikzpicture}}
    \caption{$A_{65432}A_{345671} G = A_{23456}A_{345671} G$.}
\end{subfigure}
\caption{{\em Confluence yet inconsistencies.} The local rule transports right-moving particles to the right and left-moving particles to the left, without interactions. $(a)$ We start with a left-moving particle in $7$ and a right-moving particle in $1$. $(b)\& (c)$ The point of collision between the two particles is not well defined, it depends on the evaluation strategy. Here $A_{71}G$ is short for $A_7(A_1 G)$. $(d)$. Still, the system is confluent, as the divergent configurations can be both evolve into the last.}
\label{fig : faster than light}
\end{figure}

We are interested in computational processes that generate not just one result, but a non-terminating succession of them.
Unfortunately, confluence alone hardly provides any guarantee in such cases as, for instance, rewrite rules yielding every possible result are computationally meaningless but trivially proven confluent. The use of labelling techniques \cite{LevyLabels} aiming at safeguarding the successive results from being rewritten can still save the day at this stage. But things will go out of hand if we are interested in modelling not just one computational process, but an entire network of interacting processes, each generating results that cannot be safeguarded, because they are being consumed by the neighbouring processes to produce their next result, etc., i.e. non-terminating distributed computation over linear resources, including dynamical systems, e.g. particle systems. For instance, Fig.~\ref{fig : faster than light} shows left and right-moving particles on a line, where left versus right is indicated by the ports along the edges. The local rule simply has them move, which means consuming the particle at vertex, to give it to another. A little thought shows that the system is confluent. Yet, depending upon the chosen order of applications, one finds that one particle propagates much faster than the other, yielding the collision to occur at very different places. Whilst this example is designed to emphasize these issues, it is clear that, when it comes to dynamical systems, asynchronous evaluation strategies may lead to inconsistent results, nonphysical effects (e.g. superluminal signalling), and pathological dynamical behaviours (e.g. over Boolean networks \cite{AsynchronousSene,AsynchronousHaar} and cellular automata \cite{AsynchronousRoos,AsynchronousSchabanel}). This is a major deterrent to their usage, and the notion of confluence is no fix to that.

For this purpose, we introduce {\em space-time determinism} in two forms.
In both forms, the idea is to move beyond the notion of confluence, which traditionally compares configurations globally while ignoring their spatial and local structures
, and introduce a spatially and locally refined notion instead. It will ensure well-defined, interrelated events within a global ``space-time diagram'' (in the sense of cellular automata or Physics).
The first approach, referred to as {\em weak space-time determinism}, focuses on maintaining consistent a certain class of events---that can be thought of as ``results''---across all possible evaluation strategies.
The second approach, called {\em (full) space-time determinism}, considers a broader class of events instead---that can be thought of as ``partial results'' with respect to the previous class.
Intuitively, it asks that two snapshots of the evolving graph that have consumed the same information in some localized area, must agree on the partial results in this area regardless of the evaluation strategy.

{\em Dynamical geometry.} We do not restrict ourselves to work over a fixed lattice or a fixed boolean network, here. Our processes start with some given neighbours, but they can then connect with the neighbours of their neighbours, as well as disconnect. The DAG of dependencies is both a constraint upon the evolution, and a subject of the evolution, allowing us to express intriguing effects such as time dilation, reminiscent of general relativity, see Fig.~\ref{fig : Space time diagram example 2}.

{\em Applications.} Whilst mainly of theoretical nature, we hope to have convinced the reader that this work may contribute to the efficient parallel schemes for the implementation of dynamical system and distributed systems---doing away with any expensive clock synchronisation mechanisms and replacing it with a cheap DAG. This was not our original motivation: the applications we pursue lie at the crossroad with Physics, as we seek for a mathematically sound, constructive framework for discrete models of general relativity \cite{Sorkin}. From a general philosophy of science point of view, we find it compelling to reconcile asynchronism and determinism. 

{\em Plan of the paper.} Sec.~\ref{sec:graphs} makes specific the kind of DAG we use, the way we name each vertex, the fact that edges are between ports of vertices, and what is meant by a local rule. Sec.~\ref{sec:exV1} provides an example of how to perform the asynchronous simulation of a particle system, at the cost of introducing ``metric'' information in the form of these DAGs so as to capture the relative advancement of the computation in a region with respect to another. Sec.~\ref{sec: Simulation} generalizes this construct to simulate any cellular automaton. 
Sec.~\ref{sec:exV2} shows how, having introduced this DAG, one can manipulate it to achieve time dilation effects. An analogy is drawn with general relativity, where the concepts of time covariance, metric, background-independence and dynamical geometry, lead to time dilation.  Sec.~\ref{sec:consistency} introduces weak and full consistency as well as the corresponding theorems, constituting our main technical contributions. Sec.~\ref{sec:conclusion} summarizes the results, compares them to the related works, and provides some perspectives.

\section{Graphs and local rules}\label{sec:graphs}

\begin{figure}[b]
\hfil
\begin{subfigure}{0.25\textwidth}
    \captionsetup{justification=centering}
    \centering
    \resizebox{\textwidth}{!}{\tikzset{every picture/.style={line width=0.75pt}} 

\begin{tikzpicture}[x=0.75pt,y=0.75pt,yscale=-1,xscale=1]

\draw [line width=2.25]    (220,185) -- (155,185) ;
\draw [shift={(150,185)}, rotate = 360] [fill=myblack  ][line width=0.08]  [draw opacity=0] (14.29,-6.86) -- (0,0) -- (14.29,6.86) -- cycle    ;
\draw [line width=2.25]    (100,285) -- (165,285) ;
\draw [shift={(170,285)}, rotate = 180] [fill=myblack  ][line width=0.08]  [draw opacity=0] (14.29,-6.86) -- (0,0) -- (14.29,6.86) -- cycle    ;
\draw [color=border  ,draw opacity=1 ][line width=2.25]  [dash pattern={on 2.53pt off 3.02pt}]  (135,170) -- (87.91,104.07) ;
\draw [shift={(85,100)}, rotate = 54.46] [fill=border  ,fill opacity=1 ][line width=0.08]  [draw opacity=0] (14.29,-6.86) -- (0,0) -- (14.29,6.86) -- cycle    ;
\draw [line width=2.25]    (185,270) -- (137.91,204.07) ;
\draw [shift={(135,200)}, rotate = 54.46] [fill=myblack  ][line width=0.08]  [draw opacity=0] (14.29,-6.86) -- (0,0) -- (14.29,6.86) -- cycle    ;
\draw [color=border  ,draw opacity=1 ][line width=2.25]  [dash pattern={on 2.53pt off 3.02pt}]  (235,170) -- (187.91,104.07) ;
\draw [shift={(185,100)}, rotate = 54.46] [fill=border  ,fill opacity=1 ][line width=0.08]  [draw opacity=0] (14.29,-6.86) -- (0,0) -- (14.29,6.86) -- cycle    ;
\draw [line width=2.25]    (285,270) -- (237.91,204.07) ;
\draw [shift={(235,200)}, rotate = 54.46] [fill=myblack  ][line width=0.08]  [draw opacity=0] (14.29,-6.86) -- (0,0) -- (14.29,6.86) -- cycle    ;
\draw [line width=2.25]    (185,270) -- (232.09,204.07) ;
\draw [shift={(235,200)}, rotate = 125.54] [fill=myblack  ][line width=0.08]  [draw opacity=0] (14.29,-6.86) -- (0,0) -- (14.29,6.86) -- cycle    ;
\draw [line width=2.25]    (235,170) -- (282.09,104.07) ;
\draw [shift={(285,100)}, rotate = 125.54] [fill=myblack  ][line width=0.08]  [draw opacity=0] (14.29,-6.86) -- (0,0) -- (14.29,6.86) -- cycle    ;
\draw  [color=myblack  ,draw opacity=1 ][fill={rgb, 255:red, 255; green, 255; blue, 255 }  ,fill opacity=1 ][line width=3]  (170,285) .. controls (170,276.72) and (176.72,270) .. (185,270) .. controls (193.28,270) and (200,276.72) .. (200,285) .. controls (200,293.28) and (193.28,300) .. (185,300) .. controls (176.72,300) and (170,293.28) .. (170,285) -- cycle ;
\draw  [color=myblack  ,draw opacity=1 ][fill={rgb, 255:red, 255; green, 255; blue, 255 }  ,fill opacity=1 ][line width=3]  (70,285) .. controls (70,276.72) and (76.72,270) .. (85,270) .. controls (93.28,270) and (100,276.72) .. (100,285) .. controls (100,293.28) and (93.28,300) .. (85,300) .. controls (76.72,300) and (70,293.28) .. (70,285) -- cycle ;
\draw  [color=myblack  ,draw opacity=1 ][fill={rgb, 255:red, 255; green, 255; blue, 255 }  ,fill opacity=1 ][line width=3]  (270,285) .. controls (270,276.72) and (276.72,270) .. (285,270) .. controls (293.28,270) and (300,276.72) .. (300,285) .. controls (300,293.28) and (293.28,300) .. (285,300) .. controls (276.72,300) and (270,293.28) .. (270,285) -- cycle ;
\draw  [color=myblack  ,draw opacity=1 ][fill={rgb, 255:red, 255; green, 255; blue, 255 }  ,fill opacity=1 ][line width=3]  (120,185) .. controls (120,176.72) and (126.72,170) .. (135,170) .. controls (143.28,170) and (150,176.72) .. (150,185) .. controls (150,193.28) and (143.28,200) .. (135,200) .. controls (126.72,200) and (120,193.28) .. (120,185) -- cycle ;
\draw  [color=myblack  ,draw opacity=1 ][fill={rgb, 255:red, 255; green, 255; blue, 255 }  ,fill opacity=1 ][line width=3]  (220,185) .. controls (220,176.72) and (226.72,170) .. (235,170) .. controls (243.28,170) and (250,176.72) .. (250,185) .. controls (250,193.28) and (243.28,200) .. (235,200) .. controls (226.72,200) and (220,193.28) .. (220,185) -- cycle ;
\draw  [color=border  ,draw opacity=1 ][fill={rgb, 255:red, 255; green, 255; blue, 255 }  ,fill opacity=1 ][dash pattern={on 3.38pt off 3.27pt}][line width=3]  (170,85) .. controls (170,76.72) and (176.72,70) .. (185,70) .. controls (193.28,70) and (200,76.72) .. (200,85) .. controls (200,93.28) and (193.28,100) .. (185,100) .. controls (176.72,100) and (170,93.28) .. (170,85) -- cycle ;
\draw  [color=border  ,draw opacity=1 ][fill={rgb, 255:red, 255; green, 255; blue, 255 }  ,fill opacity=1 ][dash pattern={on 3.38pt off 3.27pt}][line width=3]  (70,85) .. controls (70,76.72) and (76.72,70) .. (85,70) .. controls (93.28,70) and (100,76.72) .. (100,85) .. controls (100,93.28) and (93.28,100) .. (85,100) .. controls (76.72,100) and (70,93.28) .. (70,85) -- cycle ;
\draw  [color=myblack  ,draw opacity=1 ][fill={rgb, 255:red, 255; green, 255; blue, 255 }  ,fill opacity=1 ][line width=3]  (270,85) .. controls (270,76.72) and (276.72,70) .. (285,70) .. controls (293.28,70) and (300,76.72) .. (300,85) .. controls (300,93.28) and (293.28,100) .. (285,100) .. controls (276.72,100) and (270,93.28) .. (270,85) -- cycle ;

\draw (81,242.4) node [anchor=north west][inner sep=0.75pt]  [font=\sizeFtwo]  {$\mathbf{t_{0} .x_{0}}$};
\draw (201,252.4) node [anchor=north west][inner sep=0.75pt]  [font=\sizeFtwo]  {$\mathbf{t_{1} .x_{1}}$};
\draw (301,252.4) node [anchor=north west][inner sep=0.75pt]  [font=\sizeFtwo]  {$\mathbf{t_{2} .x_{2}}$};
\draw (141,145.4) node [anchor=north west][inner sep=0.75pt]  [font=\sizeFtwo]  {$\mathbf{t_{4} .x_{4}}$};
\draw (261,172.4) node [anchor=north west][inner sep=0.75pt]  [font=\sizeFtwo]  {$\mathbf{t_{3} .x_{3}}$};
\draw (102,88.4) node [anchor=north west][inner sep=0.75pt]  [font=\sizeFtwo]  {$\mathbf{t_{5} .x_{5}}$};
\draw (202,88.4) node [anchor=north west][inner sep=0.75pt]  [font=\sizeFtwo]  {$\mathbf{t_{6} .x_{6}}$};
\draw (301,95.4) node [anchor=north west][inner sep=0.75pt]  [font=\sizeFtwo]  {$\mathbf{t_{7} .x_{7}}$};
\draw (279,242.4) node [anchor=north west][inner sep=0.75pt]  [font=\sizeFtwop,color={rgb, 255:red, 155; green, 155; blue, 155 }  ,opacity=1 ]  {$:a$};
\draw (201,242.4) node [anchor=north west][inner sep=0.75pt]  [font=\sizeFtwop,color={rgb, 255:red, 155; green, 155; blue, 155 }  ,opacity=1 ]  {$:a$};
\draw (102,288.4) node [anchor=north west][inner sep=0.75pt]  [font=\sizeFtwop,color={rgb, 255:red, 155; green, 155; blue, 155 }  ,opacity=1 ]  {$:a$};
\draw (99,152.4) node [anchor=north west][inner sep=0.75pt]  [font=\sizeFtwop,color={rgb, 255:red, 155; green, 155; blue, 155 }  ,opacity=1 ]  {$:a$};
\draw (69,112.4) node [anchor=north west][inner sep=0.75pt]  [font=\sizeFtwop,color={rgb, 255:red, 155; green, 155; blue, 155 }  ,opacity=1 ]  {$:a$};
\draw (169,112.4) node [anchor=north west][inner sep=0.75pt]  [font=\sizeFtwop,color={rgb, 255:red, 155; green, 155; blue, 155 }  ,opacity=1 ]  {$:a$};
\draw (279,112.4) node [anchor=north west][inner sep=0.75pt]  [font=\sizeFtwop,color={rgb, 255:red, 155; green, 155; blue, 155 }  ,opacity=1 ]  {$:a$};
\draw (249,152.4) node [anchor=north west][inner sep=0.75pt]  [font=\sizeFtwop,color={rgb, 255:red, 155; green, 155; blue, 155 }  ,opacity=1 ]  {$:a$};
\draw (251,202.4) node [anchor=north west][inner sep=0.75pt]  [font=\sizeFtwop,color={rgb, 255:red, 155; green, 155; blue, 155 }  ,opacity=1 ]  {$:b$};
\draw (199,202.4) node [anchor=north west][inner sep=0.75pt]  [font=\sizeFtwop,color={rgb, 255:red, 155; green, 155; blue, 155 }  ,opacity=1 ]  {$:c$};
\draw (137,288.4) node [anchor=north west][inner sep=0.75pt]  [font=\sizeFtwop,color={rgb, 255:red, 155; green, 155; blue, 155 }  ,opacity=1 ]  {$:b$};
\draw (151,202.4) node [anchor=north west][inner sep=0.75pt]  [font=\sizeFtwop,color={rgb, 255:red, 155; green, 155; blue, 155 }  ,opacity=1 ]  {$:b$};
\draw (161,162.4) node [anchor=north west][inner sep=0.75pt]  [font=\sizeFtwop,color={rgb, 255:red, 155; green, 155; blue, 155 }  ,opacity=1 ]  {$:c$};
\draw (198,152.4) node [anchor=north west][inner sep=0.75pt]  [font=\sizeFtwop,color={rgb, 255:red, 155; green, 155; blue, 155 }  ,opacity=1 ]  {$:d$};
\draw (151,252.4) node [anchor=north west][inner sep=0.75pt]  [font=\sizeFtwop,color={rgb, 255:red, 155; green, 155; blue, 155 }  ,opacity=1 ]  {$:c$};

\end{tikzpicture}}
    \caption{$G$}
    \label{fig : Induced subgraph 1}
\end{subfigure}
\hfil
\hfil
\begin{subfigure}{0.41\textwidth}
    \captionsetup{justification=centering}
    \centering
    \resizebox{.5\textwidth}{!}{\tikzset{every picture/.style={line width=0.75pt}} 

\begin{tikzpicture}[x=0.75pt,y=0.75pt,yscale=-1,xscale=1]

\draw [color=border  ,draw opacity=1 ][line width=2.25]  [dash pattern={on 2.53pt off 3.02pt}]  (222,185) -- (157,185) ;
\draw [shift={(152,185)}, rotate = 360] [fill=border  ,fill opacity=1 ][line width=0.08]  [draw opacity=0] (14.29,-6.86) -- (0,0) -- (14.29,6.86) -- cycle    ;
\draw [color=border  ,draw opacity=1 ][line width=2.25]  [dash pattern={on 2.53pt off 3.02pt}]  (237,170) -- (189.91,104.07) ;
\draw [shift={(187,100)}, rotate = 54.46] [fill=border  ,fill opacity=1 ][line width=0.08]  [draw opacity=0] (14.29,-6.86) -- (0,0) -- (14.29,6.86) -- cycle    ;
\draw [line width=2.25]    (287,270) -- (239.91,204.07) ;
\draw [shift={(237,200)}, rotate = 54.46] [fill=myblack  ][line width=0.08]  [draw opacity=0] (14.29,-6.86) -- (0,0) -- (14.29,6.86) -- cycle    ;
\draw [color=border  ,draw opacity=1 ][line width=2.25]  [dash pattern={on 2.53pt off 3.02pt}]  (187,270) -- (234.09,204.07) ;
\draw [shift={(237,200)}, rotate = 125.54] [fill=border  ,fill opacity=1 ][line width=0.08]  [draw opacity=0] (14.29,-6.86) -- (0,0) -- (14.29,6.86) -- cycle    ;
\draw [line width=2.25]    (237,170) -- (284.09,104.07) ;
\draw [shift={(287,100)}, rotate = 125.54] [fill=myblack  ][line width=0.08]  [draw opacity=0] (14.29,-6.86) -- (0,0) -- (14.29,6.86) -- cycle    ;
\draw  [color=border  ,draw opacity=1 ][fill={rgb, 255:red, 255; green, 255; blue, 255 }  ,fill opacity=1 ][dash pattern={on 3.38pt off 3.27pt}][line width=3]  (172,285) .. controls (172,276.72) and (178.72,270) .. (187,270) .. controls (195.28,270) and (202,276.72) .. (202,285) .. controls (202,293.28) and (195.28,300) .. (187,300) .. controls (178.72,300) and (172,293.28) .. (172,285) -- cycle ;
\draw  [color=myblack  ,draw opacity=1 ][fill={rgb, 255:red, 255; green, 255; blue, 255 }  ,fill opacity=1 ][line width=3]  (272,285) .. controls (272,276.72) and (278.72,270) .. (287,270) .. controls (295.28,270) and (302,276.72) .. (302,285) .. controls (302,293.28) and (295.28,300) .. (287,300) .. controls (278.72,300) and (272,293.28) .. (272,285) -- cycle ;
\draw  [color=border  ,draw opacity=1 ][fill={rgb, 255:red, 255; green, 255; blue, 255 }  ,fill opacity=1 ][dash pattern={on 3.38pt off 3.27pt}][line width=3]  (122,185) .. controls (122,176.72) and (128.72,170) .. (137,170) .. controls (145.28,170) and (152,176.72) .. (152,185) .. controls (152,193.28) and (145.28,200) .. (137,200) .. controls (128.72,200) and (122,193.28) .. (122,185) -- cycle ;
\draw  [color=myblack  ,draw opacity=1 ][fill={rgb, 255:red, 255; green, 255; blue, 255 }  ,fill opacity=1 ][line width=3]  (222,185) .. controls (222,176.72) and (228.72,170) .. (237,170) .. controls (245.28,170) and (252,176.72) .. (252,185) .. controls (252,193.28) and (245.28,200) .. (237,200) .. controls (228.72,200) and (222,193.28) .. (222,185) -- cycle ;
\draw  [color=border  ,draw opacity=1 ][fill={rgb, 255:red, 255; green, 255; blue, 255 }  ,fill opacity=1 ][dash pattern={on 3.38pt off 3.27pt}][line width=3]  (172,85) .. controls (172,76.72) and (178.72,70) .. (187,70) .. controls (195.28,70) and (202,76.72) .. (202,85) .. controls (202,93.28) and (195.28,100) .. (187,100) .. controls (178.72,100) and (172,93.28) .. (172,85) -- cycle ;
\draw  [color=myblack  ,draw opacity=1 ][fill={rgb, 255:red, 255; green, 255; blue, 255 }  ,fill opacity=1 ][line width=3]  (272,85) .. controls (272,76.72) and (278.72,70) .. (287,70) .. controls (295.28,70) and (302,76.72) .. (302,85) .. controls (302,93.28) and (295.28,100) .. (287,100) .. controls (278.72,100) and (272,93.28) .. (272,85) -- cycle ;

\draw (203,252.4) node [anchor=north west][inner sep=0.75pt]  [font=\sizeFtwo]  {$\mathbf{t_{1} .x_{1}}$};
\draw (303,252.4) node [anchor=north west][inner sep=0.75pt]  [font=\sizeFtwo]  {$\mathbf{t_{2} .x_{2}}$};
\draw (141,142.4) node [anchor=north west][inner sep=0.75pt]  [font=\sizeFtwo]  {$\mathbf{t_{4} .x_{4}}$};
\draw (263,172.4) node [anchor=north west][inner sep=0.75pt]  [font=\sizeFtwo]  {$\mathbf{t_{3} .x_{3}}$};
\draw (204,88.4) node [anchor=north west][inner sep=0.75pt]  [font=\sizeFtwo]  {$\mathbf{t_{6} .x_{6}}$};
\draw (303,95.4) node [anchor=north west][inner sep=0.75pt]  [font=\sizeFtwo]  {$\mathbf{t_{7} .x_{7}}$};
\draw (281,242.4) node [anchor=north west][inner sep=0.75pt]  [font=\sizeFtwop,color={rgb, 255:red, 155; green, 155; blue, 155 }  ,opacity=1 ]  {$:a$};
\draw (203,242.4) node [anchor=north west][inner sep=0.75pt]  [font=\sizeFtwop,color={rgb, 255:red, 155; green, 155; blue, 155 }  ,opacity=1 ]  {$:a$};
\draw (171,112.4) node [anchor=north west][inner sep=0.75pt]  [font=\sizeFtwop,color={rgb, 255:red, 155; green, 155; blue, 155 }  ,opacity=1 ]  {$:a$};
\draw (281,112.4) node [anchor=north west][inner sep=0.75pt]  [font=\sizeFtwop,color={rgb, 255:red, 155; green, 155; blue, 155 }  ,opacity=1 ]  {$:a$};
\draw (251,152.4) node [anchor=north west][inner sep=0.75pt]  [font=\sizeFtwop,color={rgb, 255:red, 155; green, 155; blue, 155 }  ,opacity=1 ]  {$:a$};
\draw (253,202.4) node [anchor=north west][inner sep=0.75pt]  [font=\sizeFtwop,color={rgb, 255:red, 155; green, 155; blue, 155 }  ,opacity=1 ]  {$:b$};
\draw (201,202.4) node [anchor=north west][inner sep=0.75pt]  [font=\sizeFtwop,color={rgb, 255:red, 155; green, 155; blue, 155 }  ,opacity=1 ]  {$:c$};
\draw (163,162.4) node [anchor=north west][inner sep=0.75pt]  [font=\sizeFtwop,color={rgb, 255:red, 155; green, 155; blue, 155 }  ,opacity=1 ]  {$:c$};
\draw (200,152.4) node [anchor=north west][inner sep=0.75pt]  [font=\sizeFtwop,color={rgb, 255:red, 155; green, 155; blue, 155 }  ,opacity=1 ]  {$:d$};

\end{tikzpicture}}
    \caption{$G_{\{x_2,x_3,x_7\}}$}
    \label{fig : Induced subgraph 2}
\end{subfigure}
\hfil
\begin{subfigure}{0.25\textwidth}
    \captionsetup{justification=centering}
    \centering
    \resizebox{\textwidth}{!}{\tikzset{every picture/.style={line width=0.75pt}} 

\begin{tikzpicture}[x=0.75pt,y=0.75pt,yscale=-1,xscale=1]

\draw [line width=2.25]    (220,185) -- (155,185) ;
\draw [shift={(150,185)}, rotate = 360] [fill=myblack  ][line width=0.08]  [draw opacity=0] (14.29,-6.86) -- (0,0) -- (14.29,6.86) -- cycle    ;
\draw [line width=2.25]    (100,285) -- (165,285) ;
\draw [shift={(170,285)}, rotate = 180] [fill=myblack  ][line width=0.08]  [draw opacity=0] (14.29,-6.86) -- (0,0) -- (14.29,6.86) -- cycle    ;
\draw [color=border  ,draw opacity=1 ][line width=2.25]  [dash pattern={on 2.53pt off 3.02pt}]  (135,170) -- (87.91,104.07) ;
\draw [shift={(85,100)}, rotate = 54.46] [fill=border  ,fill opacity=1 ][line width=0.08]  [draw opacity=0] (14.29,-6.86) -- (0,0) -- (14.29,6.86) -- cycle    ;
\draw [line width=2.25]    (185,270) -- (137.91,204.07) ;
\draw [shift={(135,200)}, rotate = 54.46] [fill=myblack  ][line width=0.08]  [draw opacity=0] (14.29,-6.86) -- (0,0) -- (14.29,6.86) -- cycle    ;
\draw [color=border  ,draw opacity=1 ][line width=2.25]  [dash pattern={on 2.53pt off 3.02pt}]  (235,170) -- (187.91,104.07) ;
\draw [shift={(185,100)}, rotate = 54.46] [fill=border  ,fill opacity=1 ][line width=0.08]  [draw opacity=0] (14.29,-6.86) -- (0,0) -- (14.29,6.86) -- cycle    ;
\draw [line width=2.25]    (285,270) -- (237.91,204.07) ;
\draw [shift={(235,200)}, rotate = 54.46] [fill=myblack  ][line width=0.08]  [draw opacity=0] (14.29,-6.86) -- (0,0) -- (14.29,6.86) -- cycle    ;
\draw [line width=2.25]    (185,270) -- (232.09,204.07) ;
\draw [shift={(235,200)}, rotate = 125.54] [fill=myblack  ][line width=0.08]  [draw opacity=0] (14.29,-6.86) -- (0,0) -- (14.29,6.86) -- cycle    ;
\draw [line width=2.25]    (235,170) -- (282.09,104.07) ;
\draw [shift={(285,100)}, rotate = 125.54] [fill=myblack  ][line width=0.08]  [draw opacity=0] (14.29,-6.86) -- (0,0) -- (14.29,6.86) -- cycle    ;
\draw  [color=setBorder  ,draw opacity=1 ][fill={rgb, 255:red, 255; green, 255; blue, 255 }  ,fill opacity=1 ][line width=3]  (170,285) .. controls (170,276.72) and (176.72,270) .. (185,270) .. controls (193.28,270) and (200,276.72) .. (200,285) .. controls (200,293.28) and (193.28,300) .. (185,300) .. controls (176.72,300) and (170,293.28) .. (170,285) -- cycle ;
\draw  [color=myblack  ,draw opacity=1 ][fill={rgb, 255:red, 255; green, 255; blue, 255 }  ,fill opacity=1 ][line width=3]  (70,285) .. controls (70,276.72) and (76.72,270) .. (85,270) .. controls (93.28,270) and (100,276.72) .. (100,285) .. controls (100,293.28) and (93.28,300) .. (85,300) .. controls (76.72,300) and (70,293.28) .. (70,285) -- cycle ;
\draw  [color=internal  ,draw opacity=1 ][fill={rgb, 255:red, 255; green, 255; blue, 255 }  ,fill opacity=1 ][line width=3]  (270,285) .. controls (270,276.72) and (276.72,270) .. (285,270) .. controls (293.28,270) and (300,276.72) .. (300,285) .. controls (300,293.28) and (293.28,300) .. (285,300) .. controls (276.72,300) and (270,293.28) .. (270,285) -- cycle ;
\draw  [color=setBorder  ,draw opacity=1 ][fill={rgb, 255:red, 255; green, 255; blue, 255 }  ,fill opacity=1 ][line width=3]  (120,185) .. controls (120,176.72) and (126.72,170) .. (135,170) .. controls (143.28,170) and (150,176.72) .. (150,185) .. controls (150,193.28) and (143.28,200) .. (135,200) .. controls (126.72,200) and (120,193.28) .. (120,185) -- cycle ;
\draw  [color=internal  ,draw opacity=1 ][fill={rgb, 255:red, 255; green, 255; blue, 255 }  ,fill opacity=1 ][line width=3]  (220,185) .. controls (220,176.72) and (226.72,170) .. (235,170) .. controls (243.28,170) and (250,176.72) .. (250,185) .. controls (250,193.28) and (243.28,200) .. (235,200) .. controls (226.72,200) and (220,193.28) .. (220,185) -- cycle ;
\draw  [color=setBorder  ,draw opacity=1 ][fill={rgb, 255:red, 255; green, 255; blue, 255 }  ,fill opacity=1 ][dash pattern={on 3.38pt off 3.27pt}][line width=3]  (170,85) .. controls (170,76.72) and (176.72,70) .. (185,70) .. controls (193.28,70) and (200,76.72) .. (200,85) .. controls (200,93.28) and (193.28,100) .. (185,100) .. controls (176.72,100) and (170,93.28) .. (170,85) -- cycle ;
\draw  [color=border  ,draw opacity=1 ][fill={rgb, 255:red, 255; green, 255; blue, 255 }  ,fill opacity=1 ][dash pattern={on 3.38pt off 3.27pt}][line width=3]  (70,85) .. controls (70,76.72) and (76.72,70) .. (85,70) .. controls (93.28,70) and (100,76.72) .. (100,85) .. controls (100,93.28) and (93.28,100) .. (85,100) .. controls (76.72,100) and (70,93.28) .. (70,85) -- cycle ;
\draw  [color=internal  ,draw opacity=1 ][fill={rgb, 255:red, 255; green, 255; blue, 255 }  ,fill opacity=1 ][line width=3]  (270,85) .. controls (270,76.72) and (276.72,70) .. (285,70) .. controls (293.28,70) and (300,76.72) .. (300,85) .. controls (300,93.28) and (293.28,100) .. (285,100) .. controls (276.72,100) and (270,93.28) .. (270,85) -- cycle ;

\draw (81,242.4) node [anchor=north west][inner sep=0.75pt]  [font=\sizeFtwo]  {$\mathbf{t_{0} .x_{0}}$};
\draw (201,252.4) node [anchor=north west][inner sep=0.75pt]  [font=\sizeFtwo]  {$\mathbf{t_{1} .x_{1}}$};
\draw (301,252.4) node [anchor=north west][inner sep=0.75pt]  [font=\sizeFtwo]  {$\mathbf{t_{2} .x_{2}}$};
\draw (141,145.4) node [anchor=north west][inner sep=0.75pt]  [font=\sizeFtwo]  {$\mathbf{t_{4} .x_{4}}$};
\draw (261,172.4) node [anchor=north west][inner sep=0.75pt]  [font=\sizeFtwo]  {$\mathbf{t_{3} .x_{3}}$};
\draw (102,88.4) node [anchor=north west][inner sep=0.75pt]  [font=\sizeFtwo]  {$\mathbf{t_{5} .x_{5}}$};
\draw (202,88.4) node [anchor=north west][inner sep=0.75pt]  [font=\sizeFtwo]  {$\mathbf{t_{6} .x_{6}}$};
\draw (301,95.4) node [anchor=north west][inner sep=0.75pt]  [font=\sizeFtwo]  {$\mathbf{t_{7} .x_{7}}$};
\draw (279,242.4) node [anchor=north west][inner sep=0.75pt]  [font=\sizeFtwop,color={rgb, 255:red, 155; green, 155; blue, 155 }  ,opacity=1 ]  {$:a$};
\draw (201,242.4) node [anchor=north west][inner sep=0.75pt]  [font=\sizeFtwop,color={rgb, 255:red, 155; green, 155; blue, 155 }  ,opacity=1 ]  {$:a$};
\draw (102,288.4) node [anchor=north west][inner sep=0.75pt]  [font=\sizeFtwop,color={rgb, 255:red, 155; green, 155; blue, 155 }  ,opacity=1 ]  {$:a$};
\draw (99,152.4) node [anchor=north west][inner sep=0.75pt]  [font=\sizeFtwop,color={rgb, 255:red, 155; green, 155; blue, 155 }  ,opacity=1 ]  {$:a$};
\draw (69,112.4) node [anchor=north west][inner sep=0.75pt]  [font=\sizeFtwop,color={rgb, 255:red, 155; green, 155; blue, 155 }  ,opacity=1 ]  {$:a$};
\draw (169,112.4) node [anchor=north west][inner sep=0.75pt]  [font=\sizeFtwop,color={rgb, 255:red, 155; green, 155; blue, 155 }  ,opacity=1 ]  {$:a$};
\draw (279,112.4) node [anchor=north west][inner sep=0.75pt]  [font=\sizeFtwop,color={rgb, 255:red, 155; green, 155; blue, 155 }  ,opacity=1 ]  {$:a$};
\draw (249,152.4) node [anchor=north west][inner sep=0.75pt]  [font=\sizeFtwop,color={rgb, 255:red, 155; green, 155; blue, 155 }  ,opacity=1 ]  {$:a$};
\draw (251,202.4) node [anchor=north west][inner sep=0.75pt]  [font=\sizeFtwop,color={rgb, 255:red, 155; green, 155; blue, 155 }  ,opacity=1 ]  {$:b$};
\draw (199,202.4) node [anchor=north west][inner sep=0.75pt]  [font=\sizeFtwop,color={rgb, 255:red, 155; green, 155; blue, 155 }  ,opacity=1 ]  {$:c$};
\draw (137,288.4) node [anchor=north west][inner sep=0.75pt]  [font=\sizeFtwop,color={rgb, 255:red, 155; green, 155; blue, 155 }  ,opacity=1 ]  {$:b$};
\draw (151,202.4) node [anchor=north west][inner sep=0.75pt]  [font=\sizeFtwop,color={rgb, 255:red, 155; green, 155; blue, 155 }  ,opacity=1 ]  {$:b$};
\draw (161,162.4) node [anchor=north west][inner sep=0.75pt]  [font=\sizeFtwop,color={rgb, 255:red, 155; green, 155; blue, 155 }  ,opacity=1 ]  {$:c$};
\draw (198,152.4) node [anchor=north west][inner sep=0.75pt]  [font=\sizeFtwop,color={rgb, 255:red, 155; green, 155; blue, 155 }  ,opacity=1 ]  {$:d$};
\draw (151,252.4) node [anchor=north west][inner sep=0.75pt]  [font=\sizeFtwop,color={rgb, 255:red, 155; green, 155; blue, 155 }  ,opacity=1 ]  {$:c$};

\end{tikzpicture}}
    \caption{$G$ with $X$ and $X^-$.}
    \label{fig : Induced subgraph 3}
\end{subfigure}
\caption{
{\em Induced subgraph and borders (a)(b):} 
We consider a graph $G$ and its induced subgraph $G_{\{x_2,x_3,x_7\}}$. Graphs have borders, as shown pointed by the dashed lines for $(a)$ $G$ and $(b)$ $G_{\{x_2,x_3,x_7\}}$. 
{\em Interior of a set (c):} We consider a set of positions $X = \{x_i \mid x_i\notin \{x_5,x_0\}\}$. Positions in this set are either interior ($X^-$ in dark blue--e.g. $t_3.x_3$) or in the boundary ($X\setminus X^-$ in cyan).
}
\label{fig : Induced subgraph}
\end{figure}

\begin{figure}[t]
\hfil
\begin{subfigure}{0.24\textwidth}
    \captionsetup{justification=centering}
    \resizebox{\textwidth}{!}{\input{figs/Local_operator_1.tex}}
    \caption{$G$}
    \label{fig : Local operator 1}
\end{subfigure}
\hfil
\hfil
\begin{subfigure}{0.24\textwidth}
    \captionsetup{justification=centering}
    \resizebox{\textwidth}{!}{\input{figs/Local_operator_2.tex}}
    \caption{$A_{x_1}G$}
    \label{fig : Local operator 2}
\end{subfigure}
\hfil
\caption{{\em Action of a local rule $A_{(-)}$} centered on a vertex $u_1 = t_1.x_1$. It affects the vertices of $\zee \restr{x}{G}$ that are circled dark blue \& cyan. Dark blue vertices (e.g. $u_1,u_4$) can be almost arbitrarily modified whereas cyan vertices must have their names and external edges preserved. Internal states $\Sigma = \{0,1\}$ are represented by white and black.
One can set $u_1'$, $u_2'$, and $u_7'$ to $3.u_1$, $1.u_2$, and $4.u_7$ respectively for instance. } 
\label{fig : Local operator}
\end{figure}

We start by formally introducing the type of graph that we consider: coloured directed acyclic port graphs. Each vertex $t.x$ may be understood as computational process at position $x$ and time tag $t$. This naming is essential if we want to be able to refer to individual events and demand that they are well-determined. The colours stand for the internal states of the computational processes. The directed edges capture the dependencies between them. We use port graphs for expressivity, i.e. in order to be able to tell a neighbouring process from another, and to track relative time advancement with respect to those neighbouring processes.
\begin{definition}[Positions, ports, states, names]
Let $\calX$ be an infinite countable set of \emph{positions}.
Let $\pi$ be a finite set of \emph{ports} and $\Sigma$ be a finite set of \emph{states}.
Let us denote $\calV := \{\, t.x \mid (t,x) \in \Z \times \calX \,\}$ and call its elements \emph{names}.

For any subsets $U \subseteq \calV$ and $X\subseteq \mathcal{X}$, let us define $(U\!:\!\pi):=\{\,(u\!:\!a)\mid u\in U,\, a\in \pi\,\}$ \MC{and $\X(U) = \{\, x \mid t.x\in U \,\}$}.
Let us also denote $\overline{X} := \mathcal{X} \setminus X$, and $\zee X := \{\, t.x \mid t\in \Z,x\in X \,\}$. Given any $u = t.x \in \mathcal{V}$, let us denote $t'.u$ for $(t'+t).x \in \calV$.
\end{definition}

\begin{definition}[Graphs, Past]\label{def : graphs}
A \emph{graph} $G$ is given by a tuple $(\I_G, \B_G, \E_G, \sigma_G)$ where:
\begin{itemize}
    \item $\I_G \subseteq \calV$ has its elements called \emph{internal vertices of $G$},
    \item $\B_G \subseteq \calV$ has its elements called \emph{border vertices of $G$},
    \item $\E_G \subseteq ((\I_G \cup \B_G)\!:\!\pi )^{2} \setminus ( \B_{G}\!:\!\pi )^2$ has its elements called \emph{(oriented) edges}, and
    \item $\sigma_G :\I_G\rightarrow \Sigma$ maps each internal vertex to its states.
\end{itemize}
We denote $\V_G := \I_G\cup \B_G$. This tuple has to be such that:
\begin{itemize}
    \item{Vertex partitioning:} $\I_{G} \cap \B_{G} = \emptyset$,
    \item{Unicity of positions:} $\forall t.x, t'.x' \in {\V_{G}},\, x = x' \Rightarrow t = t'$,
    \item{Port non-saturation:} $\forall (u\!:\!a,v\!:\!b),(u'\!:\!a',v'\!:\!b') \in {\E_G},\, u\!:\!a\neq v'\!:\!b' \,\wedge\, u\!:\!a=u'\!:\!a' \Leftrightarrow v\!:\!b=v'\!:\!b'$,
    \item{Border attachment:} $\forall u \in \B_G,\, \exists (v\!:\!a,v'\!:\!a') \in \E_G,\, u \in \{\,v,v'\,\}$, and
    \item{Acyclicity:} \LM{$\forall n \in \mathbb{N},\, \forall \langle (u_i\!:\!a_i,v_i\!:\!b_i) \rangle_{i \in \{1,\dots,{n+1}\}} \in {\E_G}^{n+1} \text{ s.t. } (\forall i \in \{1,\dots,n\},\, v_i = u_{i+1}),\, u_1 \neq v_n$}.
\end{itemize}
We denote by $\mathcal{G}$ the set of all graphs, and by $\Past(G)\subseteq \V_G$ the vertices of $G$ with no incoming edges.

\end{definition}
Summarizing, each {\position} $x$ only appears once, each vertex port $:\!a$ can only be used once per vertex, and border vertices are only there to express dangling edges of internal vertices. The $\Past(G)$ vertices intuitively stand for computational processes that are no longer awaiting for results by others, and are therefore ready to be executed.
Next, we define the induced subgraph $G_U \sqsubseteq G$ as the graph whose internal vertices are $I_{G_U}=I_G\cap U$, and whose edges are all those edges of $G$ which touch a vertex in $I_{G_U}$. Thus, its border vertices $B_{G_U}$ are those nodes of $V_G\setminus I_{G_U}$ which lie at distance one of $I_{G_U}$ in $G$ (see Fig.~\ref{fig : Induced subgraph}). We also introduce the operation $G\sqcup H$, a union which is only defined if both $G$ and $H$ can be viewed as subgraphs of the same larger graph. In particular, this implies that if $u\in I_G$,  $v\in B_G$, $v\in I_H$ and $(u\!:\!a,v\!:\!b)\in E_G$, it must be the case that $(u\!:\!a,v\!:\!b)\in E_H$ and $u\in V_H$. All this can be done quite elegantly through partial order relations.

\begin{definition}[(Induced) subgraph and boundaries.]
We write $H\subseteq G$ and say that $H$ is a \emph{subgraph} of $G$ whenever:
$$\V_H \subseteq \V_G \, \wedge \, \I_H \subseteq \I_G \,\wedge\, \E_H \subseteq \E_G \,\wedge\, \sigma_H = \sigma_G \restriction_{\I_H},$$
where $\sigma_G \restriction_{\I_H}$ denotes the function $\sigma_G$ restricted to the domain $I_H$. This relation is a partial order on $\calG$. We denote by $G \cup G'$ and $G \cap G'$ the join and meet of $G$ and $G'$ in the $\subseteq$-order when they exist.

We write $H \sqsubseteq G$ and say that $H$ is an \emph{induced subgraph} of $G$ whenever $H \subseteq G$ and $H$ is the biggest subgraph of $G$ having this set of internal vertices, i.e. $\forall H' \in \calG, H' \subseteq G \wedge \I_{H'} = \I_H \implies H' \subseteq H$.
This is also a partial order on $\calG$ and we denote by $G \sqcup G'$ and $G \sqcap G'$ the join and meet of $G$ and $G'$ in the $\sqsubseteq$-order when they exist.

Given a subset $U \subseteq \mathcal{V}$, we write $G_U$ for the induced subgraph of $G$ such that $\I_{G_U} = U \cap \I_G$.
We also write $G_u$ for $G_{\{u\}}$.
For a subset $X\subseteq \mathcal{X}$, we write $G_X$ for the induced subgraph $G_{\zee X}$. We write $X^-(G)$ or just $X^-$ the set $\{x\in X \,|\, x\in \X(\I_G)\wedge\V_{G_x} \subseteq \zee X\}$, and call these the internal vertices of $X$ in $G$. 
The leftover vertices (i.e. $X\setminus X^-$) will be referred to as the boundary of $X$ in $G$, see Fig.\ref{fig : Induced subgraph}.
\end{definition}

Notice that $G = G_X \sqcup G_{\overline{X}}$.
This decomposition will allow us to define the action of our local rules.
We will proceed as follow. 
First we define \emph{operators} $A_{(-)}$ which, given a {\position} $x$, rewrite the graph $G$ as $A_x G$. Second we define a \emph{neighbourhood scheme} $\restri{}$ which, given a \position{}  $x$, selects a subset of nearby {\position}s $\restr{x}{G}$. Third we say that $A_{(-)}$ is \emph{$\restri{}$-local} if the action of $A_x$ is to replace just the left hand side of $G = G_{\restr{x}{G}} \sqcup G_{\overline{\restr{x}{G}}}$, independently of the right hand side---an operation which can likely be formalised as a double push-out \cite{HarmerFundamentals}.

\begin{definition}[Neighbourhood scheme]\label{def : chi-neighbourhood}
A \emph{neighbourhood scheme} $\restri{}$ is an operator from $\mathcal{P}(\mathcal{X}) \times \calG$ to $\mathcal{P}(\mathcal{X})$ mapping a pair $(\omega, G)$ to $\restr{\omega}{G}\subseteq \mathcal{X}$ such that all $x\in \restr{\omega}{G}$ is \textbf{reachable} from $\omega$, i.e.
there exists $p : \omega \rightarrow x$ a directed path in $\E_G$ from some element of $\omega$ to $x$.\\
The input $\omega$ is formally a set of vertices, however we will often apply $\restri{}$ to $\omega\in \mathcal{X}^*$ implicitly referring to the underlying set.
When $G$ is clear from context, we write $\restri{\omega}$ instead of $\restr{\omega}{G}$.
\end{definition}
Note that $\preci{\omega}$ refers to the internal vertices of $\restri{\omega}$ with respect to $G$ as in Def.~\ref{def : graphs}. Notice also that our definition of neighbourhood schemes is quite permissive. It even allows for unbounded, or infinite neighbourhood (which in a DAG do not equate to unbounded speed of propagation of information), even though for all practical purposes one is likely to use a bounded neighbourhood. This is all we need for some of our results (Props. \ref{prop:weakconsistency} and \ref{lemma : Unicity of space-time cuts}). For some others we must forbid that global criteria be used to decide whether some closeby vertex belongs to the neighbourhood or not. To forbid this to happen we define extensivity. It asks that the neighbourhood $\restr{\omega}{G}$ as computed by the function $\restri{}$ over a graph $G$, be the same as that computed over a big enough subgraph of $G$. This can be understood as a form of graph-locality of the neighbourhood scheme $\restri{}$ itself, for instance if $\restr{\omega}{G}$ is computed step by step starting from $\omega$ until it hits a `wall' i.e. a local ending criterion, then it will be extensive.

\begin{definition}[Extensivity]
    \LM{A neighbourhood scheme $N$ is \emph{extensive} if} $G_\restri{\omega}\sqsubseteq H \sqsubseteq G$ implies $\restr{\omega}{G} = \restr{\omega}{H}$ \LM{for any two graphs $G$ and $H$ and any $\omega \in \mathcal{P}(\mathcal{X})$}.
\end{definition}
Still, the notion of extensive neighbourhood schemes allows for unbounded neighbourhoods. In practice the user of the theory will likely choose a much smaller neighbourhood---our results will apply so long as the abstract conditions are met. 


Next we define the notion of a local rule $A_{(-)}$. Before we proceed, remember that our graphs can never contain both vertices $u=t.x$ and $u'=(t+1).x$. It follows that the local rule $A_x$ will act unambiguously on the unique vertex of the form $u=t.x$. Keep in mind also that our directed edges stand for dependencies between events, i.e. if $u=t.x$ points to $v=t'.y$, then $v$ is considered ahead in time of $u$, and thus frozen awaiting information from $u$. The action of some local rule $A_y$ on $v$ is therefore trivial, preventing $y$ to be computed too far ahead and thereby providing a weak synchronisation mechanism. The action of $A_x$ on $u$ is non-trivial only if $u$ is minimal aka Past. Such a $u$ can be thought of as lagging behind in time, and no longer awaiting for any information---it has reached its local normal form. The action of $A_x$ is to dispose of it by communicating its information to $v$ and other dependencies, and creating vertex $u'=(t+\Delta t).x$ in some provisional state.

\begin{definition}[Local rule]\label{def : locality}
    A \emph{local rule} is an operator over graphs $ A_{( -)} :\mathcal{X} \to ( \mathcal{G\to G})$ 
    which is $\restri{}$-local for some 
    neighborhood scheme $\restri{}$, i.e. 
    for all $G$ in $\calG$, for all $x$ in $X$,
    $${A_x G = 
    \begin{cases}
    (A_x G_\restri{x}) \sqcup G_\crestri{x}&\textrm{if $x\in \Past(G)$}\\
    G&\textrm{otherwise}
    \end{cases}}
    $$
    From now on $A_{yx}$ will stand for $A_y A_x$.
We say that $\omega \in \mathcal{X}^*$ is a valid sequence in $G$ if, for all $\omega_1,\omega_2\in \mathcal{X}^*$ such that $\omega = \omega_2 x \omega_1$, we have $x\in \X(\Past(A_{\omega_1}G))$.
We denote $\Omega_G(A) \subseteq \mathcal{X}^*$ the set of valid sequences in $G$.
\end{definition}
\begin{remark}\label{rk:locality excludes borders}
  \MC{Let $x\in \X(\Past(G))$.}
  \MC{The fact that $A_x(G_{\restri{x}}) \sqcup G_{\overline{\restri{x}}}$ needs to exist for any $G$ implies that $A_x(G_{\restri{x}})$ and $G_{\overline{\restri{x}}}$ must agree upon their ``common frontier''. As a consequence :}
    \begin{itemize}
        \item the border vertices $B_{G_{\restri{x}}}$ cannot be modified (dashed vertices in Fig.~\ref{fig : Local operator}).
        \item the border edges $\E_{G_{\restri{x}}}\setminus\ (I_{G_{\restri{x}}}\!:\!\pi)^2$ cannot be modified (dashed edges in Fig.~\ref{fig : Local operator}).
        \item new vertices $V_{A_x G}\setminus V_G$ are exclusively vertices in $\zee \prec{x}{G}$ with time tags \MC{modified} by $A_x$ (the vertices in dark blue (e.g. $u_1$, $u_2$) in Fig.~\ref{fig : Local operator}).
    \end{itemize}
\end{remark}
An additional property that can be required, without difficulty, of our operators, is renaming-invariance \cite{ArrighiNamesInQG}. We will skip this here.

Having defined the considered graphs and the kind of local transformations allowed on them, let us state our goal informally.
Consider a graph $G$, a local rule $A_{(-)}$ and all possible sequences $\omega, \omega', \ldots\in\Omega_G(A)$.
If one applies $A_\omega$, one obtains one possible evolution of the system.
But $A_{\omega'}$, $A_{\omega''}$,\ldots are equally legitimate different orders of local rule applications. We aim to define precisely what it means for all these possible evolutions to actually agree on a consistent common story, i.e. a \emph{consistent space-time diagram}. To motivate the remaining formalization, we consider examples.
Later we give sufficient conditions for a local rule to induce such consistent space-time diagrams.

\begin{definition}[Space-time diagram]
    Given a graph $G$ and a local rule $A_{(-)}$, \emph{space-time diagram} $\mathcal{M}_A(G) := \{\, A_{\omega} G\mid \omega \in \Omega_G(A) \,\}$ is the set of all generated graphs.
    We sometimes omit $A$ and $G$.
\end{definition}
To visualize how the graphs in $\mathcal{M}$ share common vertices and edges, some \emph{space-time backgrounds} are depicted (Figs~\ref{fig : We need consistency}, and~\ref{fig : Space time diagram example 2}).
These space-time backgrounds are ``pseudo-graphs'', i.e. graph in a common sense, without any of the port-constraints of Def. \ref{def : graphs} imposed. Each pseudo-graph $M$ is defined by:
\begin{align*}
    V_M &=\bigcup_{G\in \mathcal{M}} V_G & E_M &=\bigcup_{G\in \mathcal{M}} \E_G.
\end{align*}

\section{Particle system example}\label{sec:exV1}

We now show how asynchronous applications of a local rule can represent a dynamical system of left and right-moving particles, consistently, thereby fixing issues of Fig.~\ref{fig : faster than light}. 

\begin{figure}[h]
\hfil
\begin{subfigure}{0.2\textwidth}
    \captionsetup{justification=centering}
    \resizebox{.8\textwidth}{!}{\tikzset{every picture/.style={line width=0.75pt}} 

\begin{tikzpicture}[x=0.75pt,y=0.75pt,yscale=-1,xscale=1]

\draw [color={rgb, 255:red, 74; green, 74; blue, 74 }  ,draw opacity=1 ][line width=2.25]  [dash pattern={on 2.53pt off 3.02pt}]  (276,176) -- (321.96,104.74) ;
\draw [shift={(324.67,100.53)}, rotate = 122.82] [fill={rgb, 255:red, 74; green, 74; blue, 74 }  ,fill opacity=1 ][line width=0.08]  [draw opacity=0] (14.29,-6.86) -- (0,0) -- (14.29,6.86) -- cycle    ;
\draw [color={rgb, 255:red, 74; green, 74; blue, 74 }  ,draw opacity=1 ][line width=2.25]  [dash pattern={on 2.53pt off 3.02pt}]  (176,176) -- (126.52,104.64) ;
\draw [shift={(123.67,100.53)}, rotate = 55.26] [fill={rgb, 255:red, 74; green, 74; blue, 74 }  ,fill opacity=1 ][line width=0.08]  [draw opacity=0] (14.29,-6.86) -- (0,0) -- (14.29,6.86) -- cycle    ;
\draw [color={rgb, 255:red, 74; green, 74; blue, 74 }  ,draw opacity=1 ][line width=2.25]  [dash pattern={on 2.53pt off 3.02pt}]  (116,284) -- (162.19,202.88) ;
\draw [shift={(164.67,198.53)}, rotate = 119.66] [fill={rgb, 255:red, 74; green, 74; blue, 74 }  ,fill opacity=1 ][line width=0.08]  [draw opacity=0] (14.29,-6.86) -- (0,0) -- (14.29,6.86) -- cycle    ;
\draw [color={rgb, 255:red, 0; green, 0; blue, 0 }  ,draw opacity=1 ][line width=2.25]    (226,274) -- (189.91,202) ;
\draw [shift={(187.67,197.53)}, rotate = 63.38] [fill={rgb, 255:red, 0; green, 0; blue, 0 }  ,fill opacity=1 ][line width=0.08]  [draw opacity=0] (14.29,-6.86) -- (0,0) -- (14.29,6.86) -- cycle    ;
\draw [color={rgb, 255:red, 74; green, 74; blue, 74 }  ,draw opacity=1 ][line width=2.25]  [dash pattern={on 2.53pt off 3.02pt}]  (335,285) -- (291.05,203.93) ;
\draw [shift={(288.67,199.53)}, rotate = 61.54] [fill={rgb, 255:red, 74; green, 74; blue, 74 }  ,fill opacity=1 ][line width=0.08]  [draw opacity=0] (14.29,-6.86) -- (0,0) -- (14.29,6.86) -- cycle    ;
\draw [line width=2.25]    (226,274) -- (264.27,204.09) ;
\draw [shift={(266.67,199.7)}, rotate = 118.69] [fill={rgb, 255:red, 0; green, 0; blue, 0 }  ][line width=0.08]  [draw opacity=0] (14.29,-6.86) -- (0,0) -- (14.29,6.86) -- cycle    ;
\draw  [color={rgb, 255:red, 74; green, 74; blue, 74 }  ,draw opacity=1 ][fill={rgb, 255:red, 255; green, 255; blue, 255 }  ,fill opacity=1 ][dash pattern={on 3.38pt off 3.27pt}][line width=3]  (320,285) .. controls (320,276.72) and (326.72,270) .. (335,270) .. controls (343.28,270) and (350,276.72) .. (350,285) .. controls (350,293.28) and (343.28,300) .. (335,300) .. controls (326.72,300) and (320,293.28) .. (320,285) -- cycle ;
\draw  [color={rgb, 255:red, 74; green, 74; blue, 74 }  ,draw opacity=1 ][fill={rgb, 255:red, 255; green, 255; blue, 255 }  ,fill opacity=1 ][dash pattern={on 3.38pt off 3.27pt}][line width=3]  (102,284) .. controls (102,276.27) and (108.27,270) .. (116,270) .. controls (123.73,270) and (130,276.27) .. (130,284) .. controls (130,291.73) and (123.73,298) .. (116,298) .. controls (108.27,298) and (102,291.73) .. (102,284) -- cycle ;
\draw  [color={rgb, 255:red, 38; green, 105; blue, 185 }  ,draw opacity=1 ][fill={rgb, 255:red, 255; green, 255; blue, 255 }  ,fill opacity=1 ][line width=3]  (200,274) .. controls (200,259.64) and (211.64,248) .. (226,248) .. controls (240.36,248) and (252,259.64) .. (252,274) .. controls (252,288.36) and (240.36,300) .. (226,300) .. controls (211.64,300) and (200,288.36) .. (200,274) -- cycle ;
\draw  [color={rgb, 255:red, 80; green, 227; blue, 194 }  ,draw opacity=1 ][fill={rgb, 255:red, 255; green, 255; blue, 255 }  ,fill opacity=1 ][line width=3]  (152,176) .. controls (152,162.75) and (162.75,152) .. (176,152) .. controls (189.25,152) and (200,162.75) .. (200,176) .. controls (200,189.25) and (189.25,200) .. (176,200) .. controls (162.75,200) and (152,189.25) .. (152,176) -- cycle ;
\draw  [color={rgb, 255:red, 80; green, 227; blue, 194 }  ,draw opacity=1 ][fill={rgb, 255:red, 255; green, 255; blue, 255 }  ,fill opacity=1 ][line width=3]  (252,176) .. controls (252,162.75) and (262.75,152) .. (276,152) .. controls (289.25,152) and (300,162.75) .. (300,176) .. controls (300,189.25) and (289.25,200) .. (276,200) .. controls (262.75,200) and (252,189.25) .. (252,176) -- cycle ;
\draw  [color={rgb, 255:red, 74; green, 74; blue, 74 }  ,draw opacity=1 ][fill={rgb, 255:red, 255; green, 255; blue, 255 }  ,fill opacity=1 ][dash pattern={on 3.38pt off 3.27pt}][line width=3]  (100,85) .. controls (100,76.72) and (106.72,70) .. (115,70) .. controls (123.28,70) and (130,76.72) .. (130,85) .. controls (130,93.28) and (123.28,100) .. (115,100) .. controls (106.72,100) and (100,93.28) .. (100,85) -- cycle ;
\draw  [color={rgb, 255:red, 74; green, 74; blue, 74 }  ,draw opacity=1 ][fill={rgb, 255:red, 255; green, 255; blue, 255 }  ,fill opacity=1 ][dash pattern={on 3.38pt off 3.27pt}][line width=3]  (320,85) .. controls (320,76.72) and (326.72,70) .. (335,70) .. controls (343.28,70) and (350,76.72) .. (350,85) .. controls (350,93.28) and (343.28,100) .. (335,100) .. controls (326.72,100) and (320,93.28) .. (320,85) -- cycle ;
\draw [color={rgb, 255:red, 80; green, 227; blue, 194 }  ,draw opacity=1 ][line width=3]    (176,152) -- (176,200) ;
\draw [color={rgb, 255:red, 80; green, 227; blue, 194 }  ,draw opacity=1 ][line width=3]    (276,152) -- (276,200) ;
\draw [color={rgb, 255:red, 74; green, 107; blue, 226 }  ,draw opacity=1 ][line width=3]    (226,250) -- (226,298) ;

\draw (250,288.4) node [anchor=north west][inner sep=0.75pt]  [font=\sizeFfour]  {$\mathbf{u=t.x}$};
\draw (200,201.4) node [anchor=north west][inner sep=0.75pt]  [font=\sizeFfourp,color={rgb, 255:red, 155; green, 155; blue, 155 }  ,opacity=1 ]  {$:a'$};
\draw (230,200.4) node [anchor=north west][inner sep=0.75pt]  [font=\sizeFfourp,color={rgb, 255:red, 155; green, 155; blue, 155 }  ,opacity=1 ]  {$:b'$};
\draw (246,232.4) node [anchor=north west][inner sep=0.75pt]  [font=\sizeFfourp,color={rgb, 255:red, 155; green, 155; blue, 155 }  ,opacity=1 ]  {$:b$};
\draw (184,232.4) node [anchor=north west][inner sep=0.75pt]  [font=\sizeFfourp,color={rgb, 255:red, 155; green, 155; blue, 155 }  ,opacity=1 ]  {$:a$};
\draw (181,162) node [anchor=north west][inner sep=0.75pt]   [align=left] {\textbf{{\sizeFfours i}}};
\draw (261,162) node [anchor=north west][inner sep=0.75pt]   [align=left] {\textbf{{\sizeFfours j}}};

\end{tikzpicture}}
    \caption{$G_{\restri{x_1}}$.}
    \label{fig : Dynamic example 1; G}
\end{subfigure}
\hfil
\hfil
\begin{subfigure}{0.2\textwidth}
    \captionsetup{justification=centering}
    \resizebox{.8\textwidth}{!}{\tikzset{every picture/.style={line width=0.75pt}} 

\begin{tikzpicture}[x=0.75pt,y=0.75pt,yscale=-1,xscale=1]

\draw [color={rgb, 255:red, 74; green, 74; blue, 74 }  ,draw opacity=1 ][line width=2.25]  [dash pattern={on 2.53pt off 3.02pt}]  (276,176) -- (323.91,103.6) ;
\draw [shift={(326.67,99.43)}, rotate = 123.49] [fill={rgb, 255:red, 74; green, 74; blue, 74 }  ,fill opacity=1 ][line width=0.08]  [draw opacity=0] (14.29,-6.86) -- (0,0) -- (14.29,6.86) -- cycle    ;
\draw [color={rgb, 255:red, 74; green, 74; blue, 74 }  ,draw opacity=1 ][line width=2.25]  [dash pattern={on 2.53pt off 3.02pt}]  (176,176) -- (118.13,103.9) ;
\draw [shift={(115,100)}, rotate = 51.25] [fill={rgb, 255:red, 74; green, 74; blue, 74 }  ,fill opacity=1 ][line width=0.08]  [draw opacity=0] (14.29,-6.86) -- (0,0) -- (14.29,6.86) -- cycle    ;
\draw [color={rgb, 255:red, 74; green, 74; blue, 74 }  ,draw opacity=1 ][line width=2.25]  [dash pattern={on 2.53pt off 3.02pt}]  (116,284) -- (162.14,205.08) ;
\draw [shift={(164.67,200.77)}, rotate = 120.31] [fill={rgb, 255:red, 74; green, 74; blue, 74 }  ,fill opacity=1 ][line width=0.08]  [draw opacity=0] (14.29,-6.86) -- (0,0) -- (14.29,6.86) -- cycle    ;
\draw [color={rgb, 255:red, 0; green, 0; blue, 0 }  ,draw opacity=1 ][line width=2.25]    (276,176) -- (239.95,105.88) ;
\draw [shift={(237.67,101.43)}, rotate = 62.79] [fill={rgb, 255:red, 0; green, 0; blue, 0 }  ,fill opacity=1 ][line width=0.08]  [draw opacity=0] (14.29,-6.86) -- (0,0) -- (14.29,6.86) -- cycle    ;
\draw [color={rgb, 255:red, 74; green, 74; blue, 74 }  ,draw opacity=1 ][line width=2.25]  [dash pattern={on 2.53pt off 3.02pt}]  (335,285) -- (292.34,204.42) ;
\draw [shift={(290,200)}, rotate = 62.1] [fill={rgb, 255:red, 74; green, 74; blue, 74 }  ,fill opacity=1 ][line width=0.08]  [draw opacity=0] (14.29,-6.86) -- (0,0) -- (14.29,6.86) -- cycle    ;
\draw [line width=2.25]    (176,176) -- (212.32,107.85) ;
\draw [shift={(214.67,103.43)}, rotate = 118.05] [fill={rgb, 255:red, 0; green, 0; blue, 0 }  ][line width=0.08]  [draw opacity=0] (14.29,-6.86) -- (0,0) -- (14.29,6.86) -- cycle    ;
\draw  [color={rgb, 255:red, 74; green, 74; blue, 74 }  ,draw opacity=1 ][fill={rgb, 255:red, 255; green, 255; blue, 255 }  ,fill opacity=1 ][dash pattern={on 3.38pt off 3.27pt}][line width=3]  (320,285) .. controls (320,276.72) and (326.72,270) .. (335,270) .. controls (343.28,270) and (350,276.72) .. (350,285) .. controls (350,293.28) and (343.28,300) .. (335,300) .. controls (326.72,300) and (320,293.28) .. (320,285) -- cycle ;
\draw  [color={rgb, 255:red, 74; green, 74; blue, 74 }  ,draw opacity=1 ][fill={rgb, 255:red, 255; green, 255; blue, 255 }  ,fill opacity=1 ][dash pattern={on 3.38pt off 3.27pt}][line width=3]  (102,284) .. controls (102,276.27) and (108.27,270) .. (116,270) .. controls (123.73,270) and (130,276.27) .. (130,284) .. controls (130,291.73) and (123.73,298) .. (116,298) .. controls (108.27,298) and (102,291.73) .. (102,284) -- cycle ;
\draw  [color={rgb, 255:red, 38; green, 105; blue, 185 }  ,draw opacity=1 ][fill={rgb, 255:red, 255; green, 255; blue, 255 }  ,fill opacity=1 ][line width=3]  (200,76) .. controls (200,61.64) and (211.64,50) .. (226,50) .. controls (240.36,50) and (252,61.64) .. (252,76) .. controls (252,90.36) and (240.36,102) .. (226,102) .. controls (211.64,102) and (200,90.36) .. (200,76) -- cycle ;
\draw  [color={rgb, 255:red, 80; green, 227; blue, 194 }  ,draw opacity=1 ][fill={rgb, 255:red, 255; green, 255; blue, 255 }  ,fill opacity=1 ][line width=3]  (152,176) .. controls (152,162.75) and (162.75,152) .. (176,152) .. controls (189.25,152) and (200,162.75) .. (200,176) .. controls (200,189.25) and (189.25,200) .. (176,200) .. controls (162.75,200) and (152,189.25) .. (152,176) -- cycle ;
\draw  [color={rgb, 255:red, 80; green, 227; blue, 194 }  ,draw opacity=1 ][fill={rgb, 255:red, 255; green, 255; blue, 255 }  ,fill opacity=1 ][line width=3]  (252,176) .. controls (252,162.75) and (262.75,152) .. (276,152) .. controls (289.25,152) and (300,162.75) .. (300,176) .. controls (300,189.25) and (289.25,200) .. (276,200) .. controls (262.75,200) and (252,189.25) .. (252,176) -- cycle ;
\draw  [color={rgb, 255:red, 74; green, 74; blue, 74 }  ,draw opacity=1 ][fill={rgb, 255:red, 255; green, 255; blue, 255 }  ,fill opacity=1 ][dash pattern={on 3.38pt off 3.27pt}][line width=3]  (100,85) .. controls (100,76.72) and (106.72,70) .. (115,70) .. controls (123.28,70) and (130,76.72) .. (130,85) .. controls (130,93.28) and (123.28,100) .. (115,100) .. controls (106.72,100) and (100,93.28) .. (100,85) -- cycle ;
\draw  [color={rgb, 255:red, 74; green, 74; blue, 74 }  ,draw opacity=1 ][fill={rgb, 255:red, 255; green, 255; blue, 255 }  ,fill opacity=1 ][dash pattern={on 3.38pt off 3.27pt}][line width=3]  (320,85) .. controls (320,76.72) and (326.72,70) .. (335,70) .. controls (343.28,70) and (350,76.72) .. (350,85) .. controls (350,93.28) and (343.28,100) .. (335,100) .. controls (326.72,100) and (320,93.28) .. (320,85) -- cycle ;
\draw [color={rgb, 255:red, 80; green, 227; blue, 194 }  ,draw opacity=1 ][line width=3]    (176,152) -- (176,200) ;
\draw [color={rgb, 255:red, 80; green, 227; blue, 194 }  ,draw opacity=1 ][line width=3]    (276,152) -- (276,200) ;
\draw [color={rgb, 255:red, 74; green, 107; blue, 226 }  ,draw opacity=1 ][line width=3]    (226,50) -- (226,98) ;

\draw (241,92.4) node [anchor=north west][inner sep=0.75pt]  [font=\sizeFfour]  {$\mathbf{( t+1) .x}$};
\draw (251,112.4) node [anchor=north west][inner sep=0.75pt]  [font=\sizeFfourp,color={rgb, 255:red, 155; green, 155; blue, 155 }  ,opacity=1 ]  {$:a'$};
\draw (179,101.4) node [anchor=north west][inner sep=0.75pt]  [font=\sizeFfourp,color={rgb, 255:red, 155; green, 155; blue, 155 }  ,opacity=1 ]  {$:b'$};
\draw (195,141.4) node [anchor=north west][inner sep=0.75pt]  [font=\sizeFfourp,color={rgb, 255:red, 155; green, 155; blue, 155 }  ,opacity=1 ]  {$:b$};
\draw (236,141.4) node [anchor=north west][inner sep=0.75pt]  [font=\sizeFfourp,color={rgb, 255:red, 155; green, 155; blue, 155 }  ,opacity=1 ]  {$:a$};
\draw (231,62) node [anchor=north west][inner sep=0.75pt]   [align=left] {\textbf{{\sizeFfours i}}};
\draw (211,62) node [anchor=north west][inner sep=0.75pt]   [align=left] {\textbf{{\sizeFfours j}}};

\end{tikzpicture}}
    \caption{$A_{x_1}G_{\restri{x_1}}$.}
    \label{fig : Dynamic example 1; A_x G}
\end{subfigure}
\hfil        
\caption{{\em The local rule for the particle system.} $A_x$ acts by consuming the internal state $i=\sigma^r_G(v)$ of vertex $v$ (and symmetrically with $j=\sigma^l_G(w)$), thereby moving those particles at $x$ if they are present. It also updates ports from $a,b$ to $a',b'$, flips the arrows pointing to $x$, and increments its timetag, in order to move the vertex from past $u=t.x$ to future $u'= 1.u = (t+1).x$. Dashed edges and vertices do not influence the local rule.
}

\label{fig : Dynamic example 1}
\end{figure}
Again the vertices of our graphs have names of the form $u=t.x$, they must be thought of as events in space-time.
Here internal states are pairs of bits $\sigma_G(u)=(\sigma^l_G(u),\sigma^r_G(u))$, representing the presence of a left-moving particle or not, and of a right-moving particle or not. Each edge goes either from port $:\!a$ to $:\!a'$ or from port $:\!b$ to $:\!b'$, thereby indicating a spatial direction ($a$ versus $b$) and a temporal orientation (unprimed versus primed).
Altogether each graph must be thought of, not as a space-time diagram, but as a `space-like cuts' of a space-time diagram, i.e. a snapshot, cf. Fig.~\ref{fig:spacetimediagram}. 
\begin{figure}[t]
\centering
\begin{subfigure}{.435\textwidth}
    \resizebox{\textwidth}{!}{\begin{tikzpicture}


\foreach \i in {1,2,...,6}{
    \foreach \j in {1,2,3}{
    	\FPeval{\posx}{clip(\i*8)}
        \FPeval{\posy}{clip(\j*8)}
        \ifnum \j=2
            {\vertex{\posx}{\posy}{border}{0}{\j}{\i\j}}
        \else
            {\vertex{\posx}{\posy}{myblack}{0}{\j}{\i\j}}
        \fi
    }
}

\foreach \i in {1,2,...,5}{
    \foreach \j in {1,2,3}{
    	\FPeval{\posx}{clip(\i*8+4)}
        \FPeval{\posy}{clip(\j*8+4)}
        \ifnum \j=2
            {\ifnum \i=3
                {\vertex{\posx}{\posy}{border}{0}{u}{S\i\j}}
            \else
                {\vertex{\posx}{\posy}{border}{0}{\j}{S\i\j}}
            \fi
            }
        \else
            {\vertex{\posx}{\posy}{myblack}{0}{\j}{S\i\j}}
        \fi
    }
}


\foreach \i in {1,2,...,5}{
    \FPeval{\k}{clip(\i+1)}
    \edge{\i1}{S\i1}{myblack}
    \edge{\k1}{S\i1}{myblack}
    \edge{S\i1}{\i2}{border}
    \edge{S\i1}{\k2}{border}
    \edge{\i2}{S\i2}{border}
    \edge{\k2}{S\i2}{border}
    \edge{S\i2}{\i3}{border}
    \edge{S\i2}{\k3}{border}
    \edge{\i3}{S\i3}{myblack}
    \edge{\k3}{S\i3}{myblack}
}

\rightmoover{20}{12}{myblack}{first}
\rightmoover{36}{28}{myblack}{second}
\leftmoover{36}{12}{myblack}{third}
\leftmoover{20}{28}{myblack}{fourth}

\draw (7.5,29) node[right]{{\Huge \scalefont{2.5} $G'$}};
\draw (7.5,12) node[right]{{\Huge \scalefont{2.5} $G$}};
    
\end{tikzpicture}}
    \caption{Particle system example}
    \label{fig:spacetimediagram}
\end{subfigure}
\begin{subfigure}{.275\textwidth}
    \resizebox{\textwidth}{!}{\begin{tikzpicture}


\foreach \i in {1,2,...,4}{
    \foreach \j in {1,2,3}{
    	\FPeval{\posx}{clip(\i*8)}
        \FPeval{\posy}{clip(\j*8)}
        \ifnum \j=2
            {\vertex{\posx}{\posy}{myblack}{0}{}{\i\j}}
        \else
            {\vertex{\posx}{\posy}{border}{0}{}{\i\j}}
        \fi
    }
}

\foreach \i in {1,2,3}{
    \foreach \j in {1,2,3}{
    	\FPeval{\posx}{clip(\i*8+4)}
        \FPeval{\posy}{clip(\j*8+4)}
        \ifnum \j=2
            {\ifnum \i=3
                {\vertex{\posx}{\posy}{border}{0}{}{S\i\j}}
            \else
                {\vertex{\posx}{\posy}{myblack}{0}{}{S\i\j}}
            \fi
            }
        \else
            {\vertex{\posx}{\posy}{border}{0}{}{S\i\j}}
        \fi
    }
}


    \edge{11}{S11}{border}
    \edge{21}{S11}{border}
    \edge{S11}{12}{border}
    \edge{S11}{22}{border}
    
    \edge{12}{S12}{myblack}
    \edge{22}{S12}{myblack}
    \edge{S12}{13}{border}
    \edge{S12}{23}{border}
    
    \edge{13}{S13}{border}
    \edge{23}{S13}{border}


    \edge{21}{S21}{border}
    \edge{31}{S21}{border}
    \edge{S21}{22}{border}
    \edge{S21}{32}{border}
    
    \edge{22}{S22}{myblack}
    \edge{32}{S22}{myblack}
    \edge{S22}{23}{border}
    \edge{S22}{33}{border}
    
    \edge{23}{S23}{border}
    \edge{33}{S23}{border}

    \edge{31}{S31}{border}
    \edge{41}{S31}{border}
    \edge{S31}{32}{myblack}
    \edge{S31}{42}{myblack}
    \edge{32}{S32}{border}
    \edge{42}{S32}{border}
    \edge{S32}{33}{border}
    \edge{S32}{43}{border}
    \edge{33}{S33}{border}
    \edge{43}{S33}{border}

\rightmoover{20}{20}{myblack}{first}
\emptycell{28}{12}{myblack}{second}

\draw (29.5,13) node[right]{{\Huge \scalefont{2.5} $H$}};

\draw (21,20) node[right]{{\sizeFfive \scalefont{2.5} $v$}};
\draw (29,28) node[right]{{\sizeFfive \scalefont{2.5} $w$}};
\draw (13,12) node[right]{{\sizeFfive \scalefont{2.5} $u$}};

\draw (20.75,18) node {\sizeFfivep $:a'$};
\draw (19,18) node {\sizeFfivep $:b'$};
    
\end{tikzpicture}}
    \caption{Here $\sigma_H(v) = (0,1)$ $\dots$}
    \label{fig : We need consistency 1}
\end{subfigure}
\begin{subfigure}{.275\textwidth}
    \resizebox{\textwidth}{!}{\begin{tikzpicture}


\foreach \i in {1,2,...,4}{
    \foreach \j in {1,2,3}{
    	\FPeval{\posx}{clip(\i*8)}
        \FPeval{\posy}{clip(\j*8)}
        {\vertex{\posx}{\posy}{border}{0}{}{\i\j}}
    }
}

\foreach \i in {1,2,3}{
    \foreach \j in {1,2,3}{
    	\FPeval{\posx}{clip(\i*8+4)}
        \FPeval{\posy}{clip(\j*8+4)}
        {\vertex{\posx}{\posy}{border}{0}{}{S\i\j}}
    }
}


    \edge{11}{S11}{border}
    \edge{21}{S11}{border}
    \edge{S11}{12}{myblack}
    \edge{S11}{22}{myblack}
    
    \edge{12}{S12}{border}
    \edge{22}{S12}{border}
    \edge{S12}{13}{border}
    \edge{S12}{23}{border}
    
    \edge{13}{S13}{border}
    \edge{23}{S13}{border}


    \edge{21}{S21}{border}
    \edge{31}{S21}{border}
    \edge{S21}{22}{border}
    \edge{S21}{32}{border}
    
    \edge{22}{S22}{myblack}
    \edge{32}{S22}{border}
    \edge{S22}{23}{border}
    \edge{S22}{33}{myblack}
    
    \edge{23}{S23}{border}
    \edge{33}{S23}{border}

    \edge{31}{S31}{border}
    \edge{41}{S31}{border}
    \edge{S31}{32}{border}
    \edge{S31}{42}{border}
    \edge{32}{S32}{border}
    \edge{42}{S32}{border}
    \edge{S32}{33}{border}
    \edge{S32}{43}{border}
    \edge{33}{S33}{myblack}
    \edge{43}{S33}{myblack}

\rightmoover{28}{28}{myblack}{first}
\emptycell{24}{24}{myblack}{second}
\emptycell{20}{20}{myblack}{third}
\emptycell{16}{16}{myblack}{fourth}
\emptycell{12}{12}{myblack}{fifth}
\emptycell{8}{16}{myblack}{fifth}

\draw (23,27.5) node[right]{{\Huge \scalefont{2.5} $H'$}};

\draw (21,20) node[right]{{\sizeFfive \scalefont{2.5} $v$}};
\draw (29,28) node[right]{{\sizeFfive \scalefont{2.5} $w$}};
\draw (13,12) node[right]{{\sizeFfive \scalefont{2.5} $u$}};

\draw (20.75,22) node {\sizeFfivep $:a$};
\draw (19,18) node {\sizeFfivep $:b'$};
    
\end{tikzpicture}}
    \caption{$\dots$ there $\sigma_{H'}(v)=(0,0)$.}
    \label{fig : We need consistency 2}
\end{subfigure}
\caption{{\em Particle system example (a):} In black we highlight just the graphs $G$ and $G''$ belonging to the space-time diagram $\mathcal{M}_A(G)$. The local rule here moves the left-side particle towards the right and the right-side particle towards the left. Note how the problem raised by Fig.~\ref{fig : faster than light} is solved. The only point in space-time which can contain both particles is $u$.
{\em States depend on the cut (b)(c):} Both graphs $H$ and $H'$ belong to the same space-time diagram $\mathcal{M}_A(H_0)$ with $H_0$ containing a right-moving particle in $u$. They both contain vertex $v$. In $H'$, the particle that was present in $H$ has moved to point $w$. The state associated to $v$ thus depends on the way it is cut, which is captured by its set of incoming ports e.g. $\{a',b'\}$ versus just $\{a'\}$.\label{fig : We need consistency}
}
\end{figure}

The local rules is given by Fig.~\ref{fig : Dynamic example 1}. It allows us to evolve one graph $G$, understood as a space-like cut (a snapshot), into another later space-like cut $G''$, as in Fig.~\ref{fig:spacetimediagram}. But the point is that the graph $G$ can also be evolved asynchronously into $H$ or $H'$ as in Fig.~\ref{fig : We need consistency}. All of these graphs agree upon the trajectory of the particles.
More generally, this local rule is ``space-time deterministic'', i.e. it produces well-determined events in space-time. But notice that this property is a bit subtle to formulate, e.g. the state associated to the event $v$ seems different in $H$ and $H'$, as the particle got `consumed' from $v$ to $w$. The important point is that the state of every event $v$ (in terms of its internal state and connectivity) remains a function of how it is traversed in the space-like cut, represented here by its set of incoming ports. I.e. the example is fully consistent in the sense that the state of each vertex is fully determined by its set of incoming ports.\\ 
This `particle consumption mechanism' of this example is a design choice, willingly taken in order to illustrate three points: 1/ space-time determinism is really the idea that, \emph{given the angle of their space-like-cut}, all events are well-determined. E.g. in Physics the non-scalar quantum field associated to an event does depend on this angle. 2/ Ultimately this can be traced back to the nature of quantum information, which gets `consumed' as it gets `read out', because it cannot be `copied'. Our formalism is therefore compatible with the idea of linear evolutions, as will be required when we study reversible or quantum versions of these rules. 3/ The weak consistency criterion cares only for local normal vertices (aka Past), but in this example they always have $(0,0)$ as internal state (Fig.~\ref{fig : We need consistency}). This suggests that weak consistency is too weak a condition as it dangerously fails to capture the essence of this dynamical system: particles flying around. 
Altogether, this example shows that it is perfectly possible to consistently simulate a dynamical system by means of asynchronous rule applications. This sets the aim of the paper: to understand when asynchronism meets space-time determinism.

\section{Simulating synchronous cellular automata}\label{sec: Simulation}

Beyond the above particle system example, we now show that we can simulate any 1D cellular automata in spite of its synchonicity. 
The logic of this construction borrows to the well-studied `marching soldiers' scheme \cite{WeakConsistencyGacs}, as best formalised by \cite{MarchingNehaniv}
, but relies on the DAG of dependency rather than extra states as its mechanism for local, relative synchronisation.
Without loss of generality \cite{IbarraJiang}, we simulate radius half cellular automata.


\begin{figure}[t]
\centering
\begin{subfigure}{.3\textwidth}
    \captionsetup{justification=centering}
    \resizebox{\textwidth}{!}{\tikzset{every picture/.style={line width=0.75pt}} 

\begin{tikzpicture}[x=0.75pt,y=0.75pt,yscale=-1,xscale=1]

\draw [color={rgb, 255:red, 0; green, 0; blue, 0 }  ,draw opacity=1 ][line width=1.5]    (92.03,147.46) -- (212.35,147.46) ;
\draw [color={rgb, 255:red, 0; green, 0; blue, 0 }  ,draw opacity=1 ][line width=1.5]    (405,146.73) -- (212.35,147.46) ;
\draw  [color={rgb, 255:red, 0; green, 0; blue, 0 }  ,draw opacity=1 ][fill={rgb, 255:red, 255; green, 255; blue, 255 }  ,fill opacity=1 ][dash pattern={on 5.63pt off 4.5pt}][line width=1.5]  (80,147.46) .. controls (80,133.9) and (90.99,122.9) .. (104.55,122.9) .. controls (118.12,122.9) and (129.11,133.9) .. (129.11,147.46) .. controls (129.11,161.02) and (118.12,172.01) .. (104.55,172.01) .. controls (90.99,172.01) and (80,161.02) .. (80,147.46) -- cycle ;
\draw  [color={rgb, 255:red, 0; green, 0; blue, 0 }  ,draw opacity=1 ][fill={rgb, 255:red, 255; green, 255; blue, 255 }  ,fill opacity=1 ][dash pattern={on 5.63pt off 4.5pt}][line width=1.5]  (380,145) .. controls (380,131.19) and (391.19,120) .. (405,120) .. controls (418.81,120) and (430,131.19) .. (430,145) .. controls (430,158.81) and (418.81,170) .. (405,170) .. controls (391.19,170) and (380,158.81) .. (380,145) -- cycle ;
\draw [color={rgb, 255:red, 0; green, 0; blue, 0 }  ,draw opacity=1 ][line width=1.5]    (87.83,449.46) -- (208.15,449.46) ;
\draw [color={rgb, 255:red, 0; green, 0; blue, 0 }  ,draw opacity=1 ][line width=1.5]    (328.47,449.46) -- (208.15,449.46) ;
\draw [color={rgb, 255:red, 0; green, 0; blue, 0 }  ,draw opacity=1 ][line width=1.5]    (328.47,449.46) -- (420,450) ;
\draw  [color={rgb, 255:red, 0; green, 0; blue, 0 }  ,draw opacity=1 ][fill={rgb, 255:red, 255; green, 255; blue, 255 }  ,fill opacity=1 ][dash pattern={on 5.63pt off 4.5pt}][line width=1.5]  (380,445.41) .. controls (380,431.83) and (391.19,420.82) .. (405,420.82) .. controls (418.81,420.82) and (430,431.83) .. (430,445.41) .. controls (430,458.99) and (418.81,470) .. (405,470) .. controls (391.19,470) and (380,458.99) .. (380,445.41) -- cycle ;
\draw  [draw opacity=0][fill={rgb, 255:red, 208; green, 2; blue, 27 }  ,fill opacity=0.23 ] (54.72,448.39) -- (220.89,145.85) -- (271.18,145.85) -- (277.35,448.39) -- cycle ;
\draw  [color={rgb, 255:red, 208; green, 2; blue, 27 }  ,draw opacity=1 ][fill={rgb, 255:red, 208; green, 2; blue, 27 }  ,fill opacity=0.2 ] (160,301.19) .. controls (160,285.28) and (182.39,272.38) .. (210,272.38) .. controls (237.61,272.38) and (260,285.28) .. (260,301.19) .. controls (260,317.1) and (237.61,330) .. (210,330) .. controls (182.39,330) and (160,317.1) .. (160,301.19) -- cycle ;

\draw  [color={rgb, 255:red, 0; green, 0; blue, 0 }  ,draw opacity=1 ][fill={rgb, 255:red, 255; green, 255; blue, 255 }  ,fill opacity=1 ][line width=1.5]  (220.89,145.85) .. controls (220.89,132.51) and (231.88,121.7) .. (245.45,121.7) .. controls (259.01,121.7) and (270,132.51) .. (270,145.85) .. controls (270,159.19) and (259.01,170) .. (245.45,170) .. controls (231.88,170) and (220.89,159.19) .. (220.89,145.85) -- cycle ;
\draw  [color={rgb, 255:red, 0; green, 0; blue, 0 }  ,draw opacity=1 ][fill={rgb, 255:red, 255; green, 255; blue, 255 }  ,fill opacity=1 ][line width=1.5]  (75.8,449.46) .. controls (75.8,435.9) and (86.79,424.9) .. (100.35,424.9) .. controls (113.91,424.9) and (124.91,435.9) .. (124.91,449.46) .. controls (124.91,463.02) and (113.91,474.01) .. (100.35,474.01) .. controls (86.79,474.01) and (75.8,463.02) .. (75.8,449.46) -- cycle ;
\draw  [color={rgb, 255:red, 0; green, 0; blue, 0 }  ,draw opacity=1 ][fill={rgb, 255:red, 255; green, 255; blue, 255 }  ,fill opacity=1 ][line width=1.5]  (221.6,445.85) .. controls (221.6,432.51) and (232.6,421.7) .. (246.16,421.7) .. controls (259.72,421.7) and (270.71,432.51) .. (270.71,445.85) .. controls (270.71,459.19) and (259.72,470) .. (246.16,470) .. controls (232.6,470) and (221.6,459.19) .. (221.6,445.85) -- cycle ;

\draw (234,127.4) node [anchor=north west][inner sep=0.75pt]  [font=\sizehalfCA]  {$\sigma_{1}^{1}$};
\draw (87.6,432.4) node [anchor=north west][inner sep=0.75pt]  [font=\sizehalfCA]  {$\sigma_{0}^{0}$};
\draw (232.6,430.4) node [anchor=north west][inner sep=0.75pt]  [font=\sizehalfCA]  {$\sigma_{1}^{0}$};
\draw (198.14,279.61) node [anchor=north west][inner sep=0.75pt]  [font=\sizehalfCAb,color={rgb, 255:red, 208; green, 2; blue, 27 }  ,opacity=1 ]  {$f$};

\end{tikzpicture}}
    \caption{Radius one half locality \dots}
    \label{fig :one_half_radius_CA_1}
\end{subfigure}
\begin{subfigure}{.60\textwidth}
    \captionsetup{justification=centering}
    \resizebox{\textwidth}{!}{\input{figs/one_half_radius_CA_2}}
    \caption{\dots results in the above global evolution.}
    \label{fig:one_half_radius_CA_2}
\end{subfigure}
\caption{{\em Radius one half local rule (a):} A radius one half local rule $f$ takes two states as input ($\sigma_0^0,\sigma_1^0\in \Sigma_{CA}$) and outputs $\sigma_1^1 = f(\sigma_0^0,\sigma_1^0)$. {\em Global evolution (b):} The initial configuration is an infinite one dimensional array $(\sigma^0_i)_{i\in \mathbb{Z}}$. We consider the cellular automaton which apply $f$ homogeneously in space. In black we depicted the successive configurations computed by this automaton ($\sigma^i_k$ is just short for $f(\sigma^{i-1}_{k-1},\sigma^{i-1}_{k})$). In red we drew each time the local rule $f$ gets applied. We gave a name to each occurrence to introduce more easily the simulation scheme of Fig.\ref{fig:Simulation 2}. Finally we used red arrows to depict the dependencies between each occurrence of $f$, for example since $0.f_1$ needs the output $\sigma_1^1$ of $0.f_0$ to compute, we drew an arrow $0.f_0\rightarrow 0.f_1$.
}
\label{fig: one_half_radius_CA}
\end{figure}

\begin{figure}[h!]
\centering
\begin{subfigure}{.17\textwidth}
    \captionsetup{justification=centering}
    \resizebox{\textwidth}{!}{\tikzset{every picture/.style={line width=0.75pt}} 

\begin{tikzpicture}[x=0.75pt,y=0.75pt,yscale=-1,xscale=1]

\draw [color={rgb, 255:red, 0; green, 0; blue, 0 }  ,draw opacity=1 ][line width=2.25]    (226,274) -- (189.91,202) ;
\draw [shift={(187.67,197.53)}, rotate = 63.38] [fill={rgb, 255:red, 0; green, 0; blue, 0 }  ,fill opacity=1 ][line width=0.08]  [draw opacity=0] (14.29,-6.86) -- (0,0) -- (14.29,6.86) -- cycle    ;
\draw [line width=2.25]    (226,274) -- (264.27,204.09) ;
\draw [shift={(266.67,199.7)}, rotate = 118.69] [fill={rgb, 255:red, 0; green, 0; blue, 0 }  ][line width=0.08]  [draw opacity=0] (14.29,-6.86) -- (0,0) -- (14.29,6.86) -- cycle    ;
\draw  [color={rgb, 255:red, 38; green, 105; blue, 185 }  ,draw opacity=1 ][fill={rgb, 255:red, 255; green, 255; blue, 255 }  ,fill opacity=1 ][line width=3]  (200,274) .. controls (200,259.64) and (211.64,248) .. (226,248) .. controls (240.36,248) and (252,259.64) .. (252,274) .. controls (252,288.36) and (240.36,300) .. (226,300) .. controls (211.64,300) and (200,288.36) .. (200,274) -- cycle ;
\draw  [color={rgb, 255:red, 80; green, 227; blue, 194 }  ,draw opacity=1 ][fill={rgb, 255:red, 255; green, 255; blue, 255 }  ,fill opacity=1 ][line width=3]  (152,176) .. controls (152,162.75) and (162.75,152) .. (176,152) .. controls (189.25,152) and (200,162.75) .. (200,176) .. controls (200,189.25) and (189.25,200) .. (176,200) .. controls (162.75,200) and (152,189.25) .. (152,176) -- cycle ;
\draw  [color={rgb, 255:red, 80; green, 227; blue, 194 }  ,draw opacity=1 ][fill={rgb, 255:red, 255; green, 255; blue, 255 }  ,fill opacity=1 ][line width=3]  (252,176) .. controls (252,162.75) and (262.75,152) .. (276,152) .. controls (289.25,152) and (300,162.75) .. (300,176) .. controls (300,189.25) and (289.25,200) .. (276,200) .. controls (262.75,200) and (252,189.25) .. (252,176) -- cycle ;
\draw [color={rgb, 255:red, 80; green, 227; blue, 194 }  ,draw opacity=1 ][line width=3]    (176,152) -- (176,200) ;
\draw [color={rgb, 255:red, 80; green, 227; blue, 194 }  ,draw opacity=1 ][line width=3]    (276,152) -- (276,200) ;
\draw [color={rgb, 255:red, 74; green, 107; blue, 226 }  ,draw opacity=1 ][line width=3]    (226,250) -- (226,298) ;

\draw (258,282.4) node [anchor=north west][inner sep=0.75pt]  [font=\sizesimub]  {${0.f_{0}}$};
\draw (200,201.4) node [anchor=north west][inner sep=0.75pt]  [font=\sizesimup,color={rgb, 255:red, 155; green, 155; blue, 155 }  ,opacity=1 ]  {$:a'$};
\draw (230,200.4) node [anchor=north west][inner sep=0.75pt]  [font=\sizesimup,color={rgb, 255:red, 155; green, 155; blue, 155 }  ,opacity=1 ]  {$:b'$};
\draw (246,232.4) node [anchor=north west][inner sep=0.75pt]  [font=\sizesimup,color={rgb, 255:red, 155; green, 155; blue, 155 }  ,opacity=1 ]  {$:b$};
\draw (184,232.4) node [anchor=north west][inner sep=0.75pt]  [font=\sizesimup,color={rgb, 255:red, 155; green, 155; blue, 155 }  ,opacity=1 ]  {$:a$};
\draw (204.83,262.4) node [anchor=north west][inner sep=0.75pt]    {\sizesimu $\sigma_{0}^{0}$};
\draw (229.83,264) node [anchor=north west][inner sep=0.75pt]    {\sizesimu $\sigma_{1}^{0}$};
\draw (261,167.4) node [anchor=north west][inner sep=0.75pt]    {\sizesimu $\epsilon $};
\draw (182,166.4) node [anchor=north west][inner sep=0.75pt]    {\sizesimu $\epsilon $};

\end{tikzpicture}}
    \caption{$G_{\restri{f_0}}.$}
    \label{fig :Simulation 1 a}
\end{subfigure}
\begin{subfigure}{.17\textwidth}
    \captionsetup{justification=centering}
    \resizebox{\textwidth}{!}{\tikzset{every picture/.style={line width=0.75pt}} 

\begin{tikzpicture}[x=0.75pt,y=0.75pt,yscale=-1,xscale=1]

\draw [color={rgb, 255:red, 0; green, 0; blue, 0 }  ,draw opacity=1 ][line width=2.25]    (276,176) -- (239.95,105.88) ;
\draw [shift={(237.67,101.43)}, rotate = 62.79] [fill={rgb, 255:red, 0; green, 0; blue, 0 }  ,fill opacity=1 ][line width=0.08]  [draw opacity=0] (14.29,-6.86) -- (0,0) -- (14.29,6.86) -- cycle    ;
\draw [line width=2.25]    (176,176) -- (212.32,107.85) ;
\draw [shift={(214.67,103.43)}, rotate = 118.05] [fill={rgb, 255:red, 0; green, 0; blue, 0 }  ][line width=0.08]  [draw opacity=0] (14.29,-6.86) -- (0,0) -- (14.29,6.86) -- cycle    ;
\draw  [color={rgb, 255:red, 38; green, 105; blue, 185 }  ,draw opacity=1 ][fill={rgb, 255:red, 255; green, 255; blue, 255 }  ,fill opacity=1 ][line width=3]  (200,76) .. controls (200,61.64) and (211.64,50) .. (226,50) .. controls (240.36,50) and (252,61.64) .. (252,76) .. controls (252,90.36) and (240.36,102) .. (226,102) .. controls (211.64,102) and (200,90.36) .. (200,76) -- cycle ;
\draw  [color={rgb, 255:red, 80; green, 227; blue, 194 }  ,draw opacity=1 ][fill={rgb, 255:red, 255; green, 255; blue, 255 }  ,fill opacity=1 ][line width=3]  (152,176) .. controls (152,162.75) and (162.75,152) .. (176,152) .. controls (189.25,152) and (200,162.75) .. (200,176) .. controls (200,189.25) and (189.25,200) .. (176,200) .. controls (162.75,200) and (152,189.25) .. (152,176) -- cycle ;
\draw  [color={rgb, 255:red, 80; green, 227; blue, 194 }  ,draw opacity=1 ][fill={rgb, 255:red, 255; green, 255; blue, 255 }  ,fill opacity=1 ][line width=3]  (252,176) .. controls (252,162.75) and (262.75,152) .. (276,152) .. controls (289.25,152) and (300,162.75) .. (300,176) .. controls (300,189.25) and (289.25,200) .. (276,200) .. controls (262.75,200) and (252,189.25) .. (252,176) -- cycle ;
\draw [color={rgb, 255:red, 80; green, 227; blue, 194 }  ,draw opacity=1 ][line width=3]    (176,152) -- (176,200) ;
\draw [color={rgb, 255:red, 80; green, 227; blue, 194 }  ,draw opacity=1 ][line width=3]    (276,152) -- (276,200) ;
\draw [color={rgb, 255:red, 74; green, 107; blue, 226 }  ,draw opacity=1 ][line width=3]    (226,50) -- (226,98) ;

\draw (251,112.4) node [anchor=north west][inner sep=0.75pt]  [font=\sizesimup,color={rgb, 255:red, 155; green, 155; blue, 155 }  ,opacity=1 ]  {$:a'$};
\draw (179,101.4) node [anchor=north west][inner sep=0.75pt]  [font=\sizesimup,color={rgb, 255:red, 155; green, 155; blue, 155 }  ,opacity=1 ]  {$:b'$};
\draw (195,141.4) node [anchor=north west][inner sep=0.75pt]  [font=\sizesimup,color={rgb, 255:red, 155; green, 155; blue, 155 }  ,opacity=1 ]  {$:b$};
\draw (236,141.4) node [anchor=north west][inner sep=0.75pt]  [font=\sizesimup,color={rgb, 255:red, 155; green, 155; blue, 155 }  ,opacity=1 ]  {$:a$};
\draw (210,67.4) node [anchor=north west][inner sep=0.75pt]    {\sizesimu $\epsilon $};
\draw (233,67.4) node [anchor=north west][inner sep=0.75pt]    {\sizesimu $\epsilon $};
\draw (177.83,164) node [anchor=north west][inner sep=0.75pt]    {\sizesimu $\sigma_{1}^{1}$};
\draw (255.83,164) node [anchor=north west][inner sep=0.75pt]    {\sizesimu $\sigma_{1}^{1}$};
\draw (261,72.4) node [anchor=north west][inner sep=0.75pt]  [font=\sizesimub]  {$1{.f_{0}}$};

\end{tikzpicture}}
    \caption{$A_{f_0}G_{\restri{f_0}}.$}
    \label{fig :Simulation 1 b}
\end{subfigure}
\begin{subfigure}{.63\textwidth}
    \captionsetup{justification=centering}   
    \resizebox{\textwidth}{!}{\begin{tikzpicture}


\foreach \i in {1,2,...,6}{
    \foreach \j in {1,2}{
    	\FPeval{\posx}{clip(\i*8)}
        \FPeval{\posy}{clip(\j*8)}
        \ifnum \j=2
            {\ifnum \i=1
                {}
            \else
                \ifnum \i=6
                    {}
                \else
                    {\vertex{\posx}{\posy}{myblack}{0}{}{\i\j}}
                \fi
            \fi
            }
        \else
            {\ifnum \i<3
                {}
            \else
                {\vertex{\posx}{\posy}{myblack}{0}{}{\i\j}}
            \fi
            }
        \fi
    }
}

\foreach \i in {1,2,...,5}{
    \foreach \j in {1,2}{
    	\FPeval{\posx}{clip(\i*8+4)}
        \FPeval{\posy}{clip(\j*8+4)}
        \ifnum \j=1
            {\ifnum \i=1
                {}
            \else
                {\vertex{\posx}{\posy}{myblack}{0}{}{S\i\j}}
            \fi
            }
        \else
            \ifnum \i=5
                {}
            \else
                {\vertex{\posx}{\posy}{myblack}{0}{}{S\i\j}}
            \fi
        \fi
    }
}


\foreach \i in {1,2,...,5}{
    \FPeval{\k}{clip(\i+1)}
    \ifnum \i=1
        {\edge{\k2}{S\i2}{myblack}}
    \else
        \ifnum \i=2
            {\edge{\k1}{S\i1}{myblack}
            \edge{S\i1}{\i2}{border}
            \edge{S\i1}{\k2}{border}
            \edge{\i2}{S\i2}{myblack}
            \edge{\k2}{S\i2}{myblack}}
        \else
            \ifnum \i=5
                {\edge{\i1}{S\i1}{myblack}
                \edge{\k1}{S\i1}{myblack}
                \edge{S\i1}{\i2}{border}}
            \else
                \edge{\i1}{S\i1}{myblack}
                \edge{\k1}{S\i1}{myblack}
                \edge{S\i1}{\i2}{border}
                \edge{S\i1}{\k2}{border}
                \edge{\i2}{S\i2}{myblack}
                \edge{\k2}{S\i2}{myblack}
            \fi
        \fi
    \fi
}


\draw (26.3,8) node{{\sizebigsimub $0.f_0$}};
\draw (34.3,8) node{{\sizebigsimub $0.f_2$}};
\draw (42.3,8) node{{\sizebigsimub $0.f_4$}};
\draw (50.3,8) node{{\sizebigsimub $0.f_6$}};
\draw (22.3,12) node{{\sizebigsimub $0.f_{-1}$}};
\draw (30.3,12) node{{\sizebigsimub $0.f_{1}$}};
\draw (38.3,12) node{{\sizebigsimub $0.f_{3}$}};
\draw (46.3,12) node{{\sizebigsimub $0.f_{5}$}};
\draw (18.3,16) node{{\sizebigsimub $1.f_{-2}$}};
\draw (26.3,16) node{{\sizebigsimub $1.f_{0}$}};
\draw (34.3,16) node{{\sizebigsimub $1.f_{2}$}};
\draw (42.3,16) node{{\sizebigsimub $1.f_{4}$}};
\draw (14.3,20) node{{\sizebigsimub $1.f_{-3}$}};
\draw (22.3,20) node{{\sizebigsimub $1.f_{-1}$}};
\draw (30.3,20) node{{\sizebigsimub $1.f_{1}$}};
\draw (38.3,20) node{{\sizebigsimub $1.f_{3}$}};


\draw (24.4,8) node{{\sizebigsimu $\sigma_1^0$}};
\draw (32.4,8) node{{\sizebigsimu $\sigma_2^0$}};
\draw (40.4,8) node{{\sizebigsimu $\sigma_3^0$}};
\draw (48.4,8) node{{\sizebigsimu $\sigma_4^0$}};
\draw (20.4,12) node{{\sizebigsimu $\epsilon$}};
\draw (28.4,12) node{{\sizebigsimu $\epsilon$}};
\draw (36.4,12) node{{\sizebigsimu $\epsilon$}};
\draw (44.4,12) node{{\sizebigsimu $\epsilon$}};
\draw (16.4,16) node{{\sizebigsimu $\sigma_1^2$}};
\draw (24.4,16) node{{\sizebigsimu $\sigma_2^2$}};
\draw (32.4,16) node{{\sizebigsimu $\sigma_3^2$}};
\draw (40.4,16) node{{\sizebigsimu $\sigma_4^2$}};
\draw (12.4,20) node{{\sizebigsimu $\epsilon$}};
\draw (20.4,20) node{{\sizebigsimu $\epsilon$}};
\draw (28.4,20) node{{\sizebigsimu $\epsilon$}};
\draw (36.4,20) node{{\sizebigsimu $\epsilon$}};

\draw (23.5,8) node{{\sizebigsimu $\sigma_0^0$}};
\draw (31.5,8) node{{\sizebigsimu $\sigma_1^0$}};
\draw (39.5,8) node{{\sizebigsimu $\sigma_2^0$}};
\draw (47.5,8) node{{\sizebigsimu $\sigma_3^0$}};
\draw (19.5,12) node{{\sizebigsimu $\epsilon$}};
\draw (27.5,12) node{{\sizebigsimu $\epsilon$}};
\draw (35.5,12) node{{\sizebigsimu $\epsilon$}};
\draw (43.5,12) node{{\sizebigsimu $\epsilon$}};
\draw (15.5,16) node{{\sizebigsimu $\sigma_0^2$}};
\draw (23.5,16) node{{\sizebigsimu $\sigma_1^2$}};
\draw (31.5,16) node{{\sizebigsimu $\sigma_2^2$}};
\draw (39.5,16) node{{\sizebigsimu $\sigma_3^2$}};
\draw (11.5,20) node{{\sizebigsimu $\epsilon$}};
\draw (19.5,20) node{{\sizebigsimu $\epsilon$}};
\draw (27.5,20) node{{\sizebigsimu $\epsilon$}};
\draw (35.5,20) node{{\sizebigsimu $\epsilon$}};

\draw (10.5,17) node[right]{{\Huge \scalefont{2.5} $G'$}};
\draw (18.5,9) node[right]{{\Huge \scalefont{2.5} $G$}};
    
\end{tikzpicture}}
    \caption{Initial graph and space-time diagram.}
    \label{fig:Simulation 2}
\end{subfigure}
\caption{{\em The local rule for cellular automata simulation (a) and (b):} $A_{f_0}$ acts exactly as $A_x$ in Fig.~\ref{fig : Dynamic example 1}, except for the internal states. If $\sigma(0.f_0) = (\sigma_0^0,\sigma_1^0)$ does not contain $\epsilon$, it computes $\sigma_1^1 = f(\sigma_0^0,\sigma_1^0)$ and stores the result in both neighbours. {\em Space-time diagram of the simulation (c)} of the cellular automaton in Fig.~\ref{fig:one_half_radius_CA_2}. The highlighted $G$ encodes the initial cellular automaton configuration $\sigma^0$ of the CA, whilst $G'$ encodes $\sigma^2$ the configuration obtained after two time steps of the cellular automaton.
}
\end{figure}

First let us recall that a radius one half cellular automaton acting on an alphabet $\Sigma_{CA}$ is a global function $F:{\Sigma_{CA}}^{\mathbb{Z}}\to {\Sigma_{CA}}^{\mathbb{Z}}$ defined by the synchronous application of a local function $f: {\Sigma_{CA}}^2\to {\Sigma_{CA}}$ everywhere at once, cf. Fig.~\ref{fig :one_half_radius_CA_1}.
The dependencies between the different applications of $f$ are shown in Fig.~\ref{fig:one_half_radius_CA_2}, notice the similarity with Fig.~\ref{fig:spacetimediagram}. This suggests that the natural way to encode this cellular automaton in our model is to use vertices (a.k.a events) to represent each application of $f$.

Second we define $\Sigma = (\Sigma_{CA} \cup \epsilon)^2$ and pick the same neighbourhood scheme and ports as in the particle system example. We define the local rule as depicted in Fig.~\ref{fig :Simulation 1 a} and Fig.~\ref{fig :Simulation 1 b}. Its action can be understood as follow: 1/ applied to the vertex named $0.f_0$, it applies $f(\sigma_0^0,\sigma_1^0)$ 2/ and stores the result both in the right part of the vertex $0.f_{-1}$ and in the left part of the vertex $0.f_1$. 3/ Finally it creates the vertex $1.f_0$, which is awaiting for the output of the $0.f_{-1}$ and $0.f_1$ applications, as represented by the two incoming edges.

We encode the initial configuration $\sigma^0$ by a graph $G$ (depicted in Fig.~\ref{fig:Simulation 2}) which contains two copies of each internal state in $\sigma^0$. The dynamics will indeed generate every configuration $\sigma^i=F^i(\sigma^0)$. The space-time diagram is similar to that of Fig.~\ref{fig:spacetimediagram}.
The scheme can easily be generalised to $d$-dimensional cellular automata, and is likely to work for any local synchronous evolution in a broad sense ; the idea being to use vertices to encode applications of the local rule, and edges to represent how these applications are causally related.

\section{Beyond synchronous simulation}\label{sec:exV2}

The previous examples are simulations of synchronous systems. Beyond these, we are interested in expressing systems that remain deterministic but are genuinely asynchronous.
Such local rule still have consistent space-time diagrams: the only thing which is lost is the possibly to interprete these through the lens of a global clock.
A paradigmatic, physics-inspired example for this whole class of systems is the following time dilation example.

The time dilation example extends the internal state space $\Sigma$ used in Sec.~\ref{sec:exV1} with two more states: the \textrm{green} and \textrm{red} states. The same local rule is extended so that if one such green/red particle is found at position $x$, it will stay there, oscillating between both colours and altering the very texture of space-time, cf. Fig.~\ref{fig : Dynamic example 2}. This results in time dilation as can be seen from Fig.~\ref{fig : Space time diagram example 2}.
For instance, if two identically made clocks were modeled out of a signal oscillating from left to right and right to left between two neighbouring nodes, the clock lying on the right-hand-side of the green/red particle would tick twice as slower than the one lying on the left-hand-side. Yet, the very same local rule is being applied left and right of the green/red particle. Such a phenomenon is reminiscent of general relativity, e.g. time flows slightly slower on Earth than it does in the stratosphere, as measured by identically made atomic clocks. Yet, the same laws of Physics apply in the stratosphere and on Earth. How did we get there? 

We argue that, to some extent, the construction of this model of computation mimics some of the key steps of the derivation of general relativity theory from physical symmetries---as found in standard textbooks \cite{Gravitation}. The developped analogy can safely be skipped by the reader with lesser interest in Physics. Indeed, let us remind the reader that the textbook derivation proceeds by: 1/ Assuming the existence of a well-determined space-time. 2/ Requiring covariance, i.e. invariance under changes of coordinates, which implies a form of asynchronism as one can choose coordinates whereby one region of a space-like cut will evolve (large time lapse), but not the other (small time lapse). 3/ Concluding that in order to obtain covariance, one needs to provide extra causality structure at each point, namely the metric field. 4/ Assuming background-invariance, namely enabling the possibility that space-time be curved by the presence of this newly allowed metric. 5/ Providing a dynamic upon the metric itself. \\
Here, in the discrete, we: 1/ enforced a well-determined space-time, 2/ in spite of an asynchronous evaluation strategy, 3/ by introducing an extra causality structure, namely a DAG of dependencies. 4/ We then allowed ourselves to consider graphs with exotic such DAG, and 5/ rules manipulating them. 

\begin{figure}[h]
\centering
\begin{subfigure}{0.17\textwidth}
    \captionsetup{justification=centering}
    \resizebox{\textwidth}{!}{\begin{tikzpicture}


\redcellinternal{5}{5}{u};
\emptycell{1}{9}{setBorder}{v};
\emptycell{9}{13}{setBorder}{w};


\edge{u}{v}{myblack};
\edge{u}{w}{myblack};


\draw (6,4) node[right]{{\sizeFsix $u=t.x$}};
\draw (2.25,9) node[right]{{\sizeFsix $v$}};
\draw (6.5,13.5) node[right]{{\sizeFsix $w$}};

\draw (1.5,9) node {{\sizeFsixs $i$}};
\draw (8.5,13) node {{\sizeFsixs $j$}};

\draw (3.25,5.5) node[color = port] {{\sizeFsixp $:a$}};
\draw (6.75,6.5) node[color = port]  {{\sizeFsixp $:b$}};
\draw (3,8) node[color = port]  {{\sizeFsixp $:a'$}};
\draw (7,11) node[color = port] {{\sizeFsixp $:b'$}};

\end{tikzpicture}}
    \caption{$G_{\restri{x}}$.}
    \label{fig : Dynamic example 2; G}
\end{subfigure}
\hfill
\begin{subfigure}{0.17\textwidth}
    \captionsetup{justification=centering}
    \resizebox{\textwidth}{!}{\begin{tikzpicture}


\greencellinternal{5}{13}{u};
\emptycell{1}{9}{setBorder}{v};
\emptycell{9}{13}{setBorder}{w};


\edge{v}{u}{myblack};
\edge{u}{w}{myblack};


\draw (1,15) node[right]{{\sizeFsix $(t+1).x$}};
\draw (1.75,8) node[right]{{\sizeFsix $v$}};
\draw (8.5,11) node[right]{{\sizeFsix $w$}};


\draw (3,10.25) node[color = port] {{\sizeFsixp $:a$}};
\draw (6.5,13.5) node[color = port]  {{\sizeFsixp $:b$}};
\draw (3,12) node[color = port]  {{\sizeFsixp $:a'$}};
\draw (7.5,12.25) node[color = port] {{\sizeFsixp $:b''$}};

\end{tikzpicture}}
    \caption{$A_{x}(G_{\restri{x}})$.}
    \label{fig : Dynamic example 2; A_x G}
\end{subfigure}
\hfill
\begin{subfigure}{0.17\textwidth}
    \captionsetup{justification=centering}
    \resizebox{\textwidth}{!}{\begin{tikzpicture}


\greencellinternal{5}{5}{u};
\emptycell{1}{9}{setBorder}{v};
\emptycell{9}{9}{setBorder}{w};


\edge{u}{v}{myblack};
\edge{u}{w}{myblack};


\draw (6,4) node[right]{{\sizeFsix $u=t.x$}};
\draw (2.25,9) node[right]{{\sizeFsix $v$}};
\draw (6.25,9) node[right]{{\sizeFsix $w$}};

\draw (1.5,9) node {{\sizeFsixs $i$}};
\draw (8.5,9) node {{\sizeFsixs $j$}};

\draw (3.25,5.5) node[color = port] {{\sizeFsixp $:a$}};
\draw (6.5,5.5) node[color = port]  {{\sizeFsixp $:b$}};
\draw (3,8) node[color = port]  {{\sizeFsixp $:a'$}};
\draw (6.75,8) node[color = port] {{\sizeFsixp $:b'$}};

\end{tikzpicture}}
    \caption{$H_{\restri{x}}$.}
    \label{fig : Dynamic example 2; G green}
\end{subfigure}
\hfill
\begin{subfigure}{0.17\textwidth}
    \captionsetup{justification=centering}
    \resizebox{\textwidth}{!}{\begin{tikzpicture}


\redcellinternal{5}{13}{u};
\emptycell{1}{9}{setBorder}{v};
\emptycell{9}{9}{setBorder}{w};


\edge{v}{u}{myblack};
\edge{w}{u}{myblack};


\draw (6,13.5) node[right]{{\sizeFsix $(t+1).x$}};
\draw (2.25,9) node[right]{{\sizeFsix $v$}};
\draw (6.25,9) node[right]{{\sizeFsix $w$}};

\draw (1.5,9) node {{\sizeFsixs $i$}};
\draw (8.5,9) node {{\sizeFsixs $j$}};

\draw (3,10.25) node[color = port] {{\sizeFsixp $:b$}};
\draw (6.75,10.25) node[color = port]  {{\sizeFsixp $:a$}};
\draw (3,12) node[color = port]  {{\sizeFsixp $:b'$}};
\draw (6.75,12) node[color = port] {{\sizeFsixp $:a'$}};

\end{tikzpicture}}
    \caption{$A_{x}(H_{\restri{x}})$.}
    \label{fig : Dynamic example 2; A_x G green}
\end{subfigure}

\caption{{\em The local rule for time dilation.
} 
We define here the behaviour of $A_{x}$ when $u=t.x$ is colored---i.e. $\sigma(u)= green \lor \sigma(u) =red$. In both cases particles get destroyed when reaching the colored vertex and we flip the color in $x$. On a green vertex ($c$ \& $d$), $A_x$ behaves as we are used to. On a red vertex ($a$ \& $b$) $A_x$ creates an anomaly. The edge between $(t+1).x$ and $w$ is reversed, thus we will be forced to apply $A_x$ again before updating $w$. Note that this evolution is still port-decreasing (see Def.\ref{definition : Port Decreasing}) because $b'$ is replaced by a smaller port $b''$.
}
\label{fig : Dynamic example 2}
\end{figure}
\begin{figure}[t]
\centering
\resizebox{0.75\textwidth}{!}{\begin{tikzpicture}



\foreach \i in {1,2,3}{
    \foreach \j in {1,2,3}{
    	\FPeval{\posx}{clip(\i*8+4)}
        \FPeval{\posy}{clip(\j*8)}
        \ifnum \j=2
            {\vertex{\posx}{\posy}{border}{0}{\j}{\i\j}}
        \else
            {\vertex{\posx}{\posy}{myblack}{0}{\j}{\i\j}}
        \fi
    }
}

\foreach \i in {1,2,3,4}{
    \foreach \j in {1,2,3}{
    	\FPeval{\posx}{clip(\i*8)}
        \FPeval{\posy}{clip(\j*8+4)}
        \ifnum \j=2
            {\ifnum \i=4
            {\vertex{\posx}{\posy}{border}{0}{}{S\i\j}}
            \else
            \vertex{\posx}{\posy}{border}{0}{\j}{S\i\j}
            \fi}
        \else
            {\vertex{\posx}{\posy}{myblack}{0}{\j}{S\i\j}}
        \fi
    }
}


\foreach \i in {1,2,3}{
    \foreach \j in {1,2}{
    	\FPeval{\posx}{clip(\i*8+4*8-4+1.5)}
        \FPeval{\posy}{clip(\j*16-12)}
        {\vertex{\posx}{\posy}{myblack}{0}{\j}{Big\i\j}}
    }
}

\foreach \i in {1,2,3}{
    \foreach \j in {1,2}{
    	\FPeval{\posx}{clip(\i*8+4*8+1.5)}
        \FPeval{\posy}{clip(\j*16-4)}
        {\vertex{\posx}{\posy}{myblack}{0}{\j}{BigS\i\j}}
    }
}


\foreach \i in {1,2,3}{
    \FPeval{\k}{clip(\i+1)}
    \edge{\i1}{S\i1}{myblack}
    \edge{\i1}{S\k1}{myblack}
    
    \edge{S\i1}{\i2}{border}
    \edge{S\k1}{\i2}{border}
    \edge{\i2}{S\i2}{border}
    \edge{\i2}{S\k2}{border}
    
    \edge{S\i2}{\i3}{border}
    \edge{S\k2}{\i3}{border}
    \edge{\i3}{S\i3}{myblack}
    \edge{\i3}{S\k3}{myblack}
}

\foreach \i in {2,3}{
    \FPeval{\k}{clip(\i-1)}
    \edge{Big\i1}{BigS\i1}{myblack}
    \edge{Big\i1}{BigS\k1}{myblack}
    
    \edge{BigS\i1}{Big\i2}{border}
    \edge{BigS\k1}{Big\i2}{border}
    \edge{Big\i2}{BigS\i2}{myblack}
    \edge{Big\i2}{BigS\k2}{myblack}
}
\edge{Big11}{BigS11}{myblack}
\edge{Big11}{S41}{myblack}

\edge{BigS11}{Big12}{border}
\edge{S41}{Big12}{border}
\edge{S42}{Big12}{border}
\edge{Big12}{BigS12}{myblack}
\edge{Big12}{S43}{myblack}

\redcell{32}{12}{first}
\greencellgray{32}{20}{second}
\redcell{32}{28}{third}

\draw (7.5,24.5) node[right]{{\Huge \scalefont{2.5} $H'$}};
\draw (7.5,8.5) node[right]{{\Huge \scalefont{2.5} $H$}};
    
\end{tikzpicture}}
\caption{{\em Time dilation example.} In black we highlight two graphs $H$ and $H'$ belonging to the space-time diagram. We start with two particles, one on the left and the other on the right of the red. Notice how, even if the same local rule gets applied everywhere, times flows twice faster for the particle on the left. 
\label{fig : Space time diagram example 2}}
\end{figure}

\section{Obtaining space-time determinism}\label{sec:consistency}

The space-time determism of a local rule $A_{(-)}$ is the idea that the graphs that it generates from a seed $G$, which altogether form a space-time diagram $\mathcal{M}_A(G)$, are consistent between one another as regards the events that they describe. One might have hoped for a simple definition of this notion of consistency of $\mathcal{M}_A(G)$, whereby any two graphs of the space-time diagram must agree on the state of a vertex $u \in \calV$, if it so happens to appear in both of them. But the example in Sec.~\ref{sec:exV1} (Fig.~\ref{fig : We need consistency}) shows that things are more subtle. Its discussion motivates a definition of (full) space-time determinism whereby any two graphs of $\mathcal{M}$ must agree on the state of a vertex $v$ (in terms of its neighbourhood and internal state) whenever they agree on the set of incoming ports to that vertex $v$. In particular this implies that the state of $v$ should be the same for every graph of $\mathcal{M}$ such that $v\in \Past(G)$, which can be understood as stating that the ``result state'' (a.k.a. normal form) at $v$ is well-determined. We refer to this weaker demand as weak consistency (see Fig.~\ref{fig : consistency}).  

\begin{definition}[Consistency]\label{def : consistency}
Two graphs $G$ and $H$ are \emph{consistent} iff for all $v\in \I_H\cap \I_G$:
    $$\pof \Em_{G}(v)=\pof \Em_{H}(v) \Longrightarrow G_{v}= H_{v}.$$
where $\Em_{G}(v)$ is the set of edges ending in $v$, and $\pof \Em_{G}(v)$ denotes the set of incoming ports of $v$, i.e. $\pof E := \{\, b \mid (u\!:\!a,v\!:\!b) \in E \,\}$. Consistency is denoted $G\varparallel H$. The graphs $G$ and $H$ are called weakly consistent if they respect this condition in the special case where $\Em_{G}(v)= \emptyset$.
We say that a space-time diagram $\mathcal{M}$ is \emph{(resp. weakly) consistent}  if each pair of graphs in $\mathcal{M}$ is (resp. weakly) consistent.
Finally a local rule $A_{(-)}$ is said to be \emph{(weakly) space-time deterministic} if and only if all graphs $G$, $\mathcal{M}_A(G)$ is (weakly) consistent.
\end{definition}

\begin{figure}[t]
\hfil
\begin{subfigure}{0.15\textwidth}
    \captionsetup{justification=centering}
    \resizebox{\textwidth}{!}{\begin{tikzpicture}


\emptycell{5}{5}{myblack}{u};
\rightleft{1}{9}{myblack}{v};
\emptycell{9}{9}{myblack}{w};
\emptycell{5}{13}{myblack}{z};


\edge{u}{v}{myblack};
\edge{u}{w}{myblack};
\edge{w}{z}{myblack};
\edge{v}{z}{myblack};


\draw (6,4) node[right]{{\sizeFhuit $u$}};
\draw (6.75,9) node[right]{{\sizeFhuit $v$}};

\draw (3.25,5.5) node[color = port] {{\sizeFhuitp $:a$}};
\draw (6.5,5.5) node[color = port]  {{\sizeFhuitp $:b$}};

\draw (3,8) node[color = port]  {{\sizeFhuitp $:b$}};
\draw (6.75,8) node[color = port] {{\sizeFhuitp $:a$}};

\draw (3,10) node[color = port] {{\sizeFhuitp $:a$}};
\draw (7,10) node[color = port]  {{\sizeFhuitp $:b$}};

\draw (3,12) node[color = port] {{\sizeFhuitp $:b$}};
\draw (6.75,12) node[color = port]  {{\sizeFhuitp $:a$}};

\end{tikzpicture}}
    \caption{$G$}
    \label{fig : consistency 1}
\end{subfigure}
\hfil
\hfil
\begin{subfigure}{0.15\textwidth}
    \captionsetup{justification=centering}
    \resizebox{\textwidth}{!}{\begin{tikzpicture}


\emptycell{5}{5}{myblack}{u};
\rightleft{1}{9}{myblack}{v};
\emptycell{9}{9}{myblack}{w};
\rightleft{5}{13}{myblack}{z};


\edge{u}{v}{myblack};
\edge{u}{w}{myblack};
\edge{v}{z}{myblack};


\draw (6,4) node[right]{{\sizeFhuit $u$}};
\draw (6.75,9) node[right]{{\sizeFhuit $v$}};

\draw (3.25,5.5) node[color = port] {{\sizeFhuitp $:a$}};
\draw (6.5,5.5) node[color = port]  {{\sizeFhuitp $:b$}};

\draw (3,8) node[color = port]  {{\sizeFhuitp $:b$}};
\draw (6.75,8) node[color = port] {{\sizeFhuitp $:a$}};

\draw (3,10) node[color = port] {{\sizeFhuitp $:a$}};

\draw (3,12) node[color = port] {{\sizeFhuitp $:b$}};

\end{tikzpicture}}
    \caption{$H$}
    \label{fig : consistency 2}
\end{subfigure}
\hfil
\caption{{\em Weak consistency versus consistency.} The set of graphs $\{G,H\}$ is weakly consistent but not consistent. Indeed we have $G_{u} = H_{u}$ but consistency fails in $v$ because we have $\Em_{G}(v)=\Em_{H}(v) = \{\, :\!a\,\}$ but $G_{v} \neq H_{v}$.
}
\label{fig : consistency}
\end{figure}

We now embark in the quest for a set of properties ensuring that an $\restri{}$-local rule $A_{(-)}$ generates only consistent space-time diagrams.
We start with two properties.
The first one asks for timetags to only increase. 
The second one, akin to strong confluence or sequential independence in parallel graph transformations, states that a set of independent rule applications on a graph $G$ applied in any order should always lead to the same graph.
\begin{definition}[Time-increasing commutative local rules] A local rule $A_{(-)}$ is
\begin{itemize}
    \item \emph{time-increasing} 
    iff $\forall t.y \in V_G,\,\forall t'.y\in V_{A_x G},\;t \leq t'\text{, with }t<t' \textit{ if } y=x$;
    \item \emph{commutative} iff $\forall x,y \in \X(\Past( G)),\;A_x A_y (G) = A_y A_x (G)$.
\end{itemize}
\end{definition}
These properties place strong constraints on past vertices of a space-time diagram $\mathcal{M}_A(G)$: they entail weak consistency. Moreover each space-like cut is determined by its set of past elements.

\begin{restatable}{proposition}{weakconsistency}{\bf (Obtaining weak consistency).}\label{prop:weakconsistency}
    Let $A_{(-)}$ be a time-increasing commutative local rule. For all graphs $G$, $\mathcal{M}_A(G)$ is weakly consistent.
\end{restatable}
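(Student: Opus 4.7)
The plan is to lift the local diamond property supplied by commutativity to a global confluence statement, and to use time-increasingness to pin down the local structure at a shared past vertex.

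First, I would show that commutativity entails a multiset-invariance for valid sequences: any two $\omega,\omega' \in \Omega_G(A)$ that are permutations of one another produce the same graph $A_\omega G = A_{\omega'} G$. This follows by induction on sequence length via adjacent transpositions, each justified by commutativity applied on the relevant intermediate graph, with the identity $A_x = \mathrm{id}$ on graphs where $x \notin \X(\Past(\cdot))$ handling spurious (non-Past) occurrences. This yields a strong confluence on $\mathcal{M}_A(G)$: any two reducts $A_\omega G$ and $A_{\omega'} G$ admit a common descendant $K \in \mathcal{M}_A(G)$.

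Next, fix $v = t.x \in \Past(A_\omega G) \cap \Past(A_{\omega'} G)$. Using multiset-invariance, one would rearrange $\omega$ and $\omega'$ into equivalent permutations of the form $\sigma\pi$ and $\sigma\pi'$, where $\sigma$ enumerates (a valid linearisation of) the multiset intersection of $\omega$ and $\omega'$, and $\pi,\pi'$ have disjoint multisets of positions. Setting $K = A_\sigma G$, we obtain $A_\omega G = A_\pi K$ and $A_{\omega'} G = A_{\pi'} K$. Time-increasingness now pins down the vertex at position $x$ in $K$: any occurrence of $x$ in $\pi$ would strictly raise its timetag; but being absent from $\pi'$ (by disjointness) this would yield different timetags at $x$ in the two reducts, contradicting that both contain vertex $v = t.x$. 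An analogous argument rules out rewrites in $\pi\cup\pi'$ at positions neighbouring $x$ that would shift or mutate $v$ without changing its timetag: such an action appearing only in one suffix would break the observed equality of $v$'s status in both reducts. Consequently $v \in V_K$ with $K_v = (A_\omega G)_v = (A_{\omega'}G)_v$.

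The main obstacle is the last step, namely that rewrites at positions neighbouring $x$ occurring in the disjoint suffixes $\pi$ and $\pi'$ cannot silently alter the local subgraph at $v$. Addressing this requires combining the locality constraints of Remark~\ref{rk:locality excludes borders}---which govern how $A_y$ may act on internal versus border vertices of $\restri{y}$---with time-increasingness, in order to exclude timetag-preserving mutations of $v$ that would break weak consistency. I expect this technical bookkeeping to occupy the bulk of the formal proof, while the conceptual backbone remains the diamond-to-confluence lifting sketched above.
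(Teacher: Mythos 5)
There is a genuine gap in the central step of your argument: the factorisation of $\omega$ and $\omega'$ through a common valid prefix $\sigma$ enumerating their multiset intersection. Multiset-invariance (the paper's Lemma on $*$-commutation) only tells you that two \emph{valid} permutations of the same sequence yield the same graph; it does not guarantee that the particular rearrangement ``intersection first, disjoint residuals after'' is itself valid. In general it is not: a position $y$ lying in the intersection of $\omega$ and $\omega'$ need not belong to $\X(\Past(G))$ --- it may only become applicable after some position of $\omega\setminus\omega'$ (respectively $\omega'\setminus\omega$) has fired and removed its incoming edge. For instance, with $\omega = ba$ and $\omega' = bc$ where $b\notin\X(\Past(G))$ but $b\in\X(\Past(A_aG))\cap\X(\Past(A_cG))$, your $\sigma = b$ is not a valid sequence in $G$, so $K = A_\sigma G$ is not defined (or rather $A_b$ acts as the identity there, and $A_\pi K \neq A_\omega G$). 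The paper avoids this by going in the opposite direction: instead of meeting $A_\omega G$ and $A_{\omega'}G$ at a common \emph{ancestor}, it joins them at a common \emph{descendant} $A_{(\omega'\setminus\omega)\omega}G = A_{(\omega\setminus\omega')\omega'}G$, using a rightmost-subtraction operation $\omega\setminus\omega'$ together with a lemma showing that this residual remains valid after $\omega'$. Time-increasingness is then applied at that common descendant to show $\omega$ and $\omega'$ contain equally many occurrences of $x$, after which both residuals avoid $x$ and locality preserves $G_{t.x}$. Note that your opening paragraph does gesture at a common descendant $K$, but you never use it --- your actual argument replaces it with the common-ancestor $A_\sigma G$, which is where things break.

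Your final paragraph also overestimates the remaining difficulty. Once one knows that the residual sequences contain no occurrence of $x$, the protection of the shared past vertex $v=t.x$ is immediate from the reachability condition on neighbourhood schemes: since $v$ has no incoming edges, no directed path ends at $x$, so $x\notin\restr{y}{H}$ for every $y\neq x$, and by locality (together with Remark~\ref{rk:locality excludes borders}, which freezes border vertices and border edges) the induced subgraph $H_{t.x}$ --- internal state, outgoing edges and their endpoints --- is carried through unchanged. No additional bookkeeping about ``timetag-preserving mutations'' is needed. The ingredients you identify (commutativity yielding permutation invariance, time-increasingness controlling the count of $x$, locality protecting $\Past$ vertices) are the right ones, but the architecture must be rebuilt around residuals and a common descendant rather than a common prefix.
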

\withproofs{
\begin{proof}
    See Sec.\ref{proof : weak consistency}.
\end{proof}}

\begin{restatable}{proposition}{unicut}{\bf (Pasts determine space-like cuts)}\label{lemma : Unicity of space-time cuts}
    Let $A_{(-)}$ be a time-increasing commutative local rule. Let $G$ be a graph. Let $H,J\in \mathcal{M}_A(G)$. If $\Past(J) = \Past(H)$ then $J=H$.
\end{restatable}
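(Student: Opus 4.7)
My plan is to proceed by strong induction on $n = |\omega| + |\omega'|$, where by assumption $H = A_\omega G$ and $J = A_{\omega'} G$ for some $\omega, \omega' \in \Omega_G(A)$. The base case $n = 0$ gives $H = G = J$ immediately.

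For the inductive step with $n > 0$, I first dispose of the sub-case where one sequence is empty. Suppose $\omega' = \emptyset$ while $\omega$ is non-empty. Writing $\omega = \bar{\omega} \cdot x$ with $x$ applied first, validity forces $x \in \X(\Past(G))$, so a vertex $t.x$ lies in $\Past(G) = \Past(J)$. By the time-increasing axiom, $A_x$ strictly raises the time tag at position $x$, and the rules in $\bar{\omega}$ can only maintain or further raise it, so in $H$ the time at $x$ is strictly greater than $t$; thus $t.x \notin \V_H$ and hence $t.x \notin \Past(H)$, contradicting $\Past(H) = \Past(J)$. So both $\omega$ and $\omega'$ are non-empty.

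With $\omega = \bar{\omega} \cdot x$ as above, so that $A_\omega G = A_{\bar{\omega}}(A_x G)$, the core of the proof is to establish an analogous factorization on the $J$-side: exhibit a valid sequence $\bar{\omega}'$ on $A_x G$ with $|\bar{\omega}'| = |\omega'| - 1$ such that $A_{\omega'} G = A_{\bar{\omega}'}(A_x G)$. Granted this, the induction hypothesis applied to $(\bar{\omega}, \bar{\omega}')$ on the seed $A_x G$---whose lengths sum to $n - 2 < n$ and which yield graphs with past $\Past(H) = \Past(J)$---gives $A_{\bar{\omega}}(A_x G) = A_{\bar{\omega}'}(A_x G)$, i.e., $H = J$.

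To produce the factorization, the same time-increasing reasoning applied to $\omega'$ forces $x$ to occur in $\omega'$ (otherwise the vertex $t.x$ would remain past in $J$, contradicting $\Past(H)=\Past(J)$). I would then bubble an occurrence of $A_x$ to the front of $\omega'$ by iteratively applying the commutativity axiom $A_x A_y G^* = A_y A_x G^*$ at intermediate stages $G^*$ where both $x$ and $y$ are past. This requires a subsidiary induction showing that $x$ remains past at each intermediate stage preceding its application---itself a consequence of combining commutativity with time-increasing, since if $x$ were not past at $A_y G^*$ then $A_x A_y G^* = A_y G^*$ while $A_y A_x G^*$ has a strictly higher time tag at position $x$ than $A_y G^*$, contradicting commutativity. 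The main obstacle is precisely this chained bubbling: carefully weaving together individual pairwise commutativity swaps into a global reordering that factors $A_x$ out first while preserving validity of the remainder as a sequence on $A_x G$.
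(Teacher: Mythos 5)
Your argument is correct, but it takes a different route from the paper's. The paper's proof is not an induction: it introduces the rightmost sequence subtraction $\omega\setminus\omega'$, establishes its validity in $A_{\omega'}G$ (Lem.~\ref{lemma : validity of omega - omega'} and Cor.~\ref{lemma : confluence}), and shows by contradiction that $\omega\setminus\omega'=\emptyset$: otherwise its rightmost letter $x$ would yield a vertex $t.x$ past in both $A_\omega G$ and $A_{\omega'}G$, so by Lem.~\ref{lemma : common past vertices} the two sequences contain equally many occurrences of $x$, contradicting $x\in\omega\setminus\omega'$. By symmetry $\omega'$ is then a permutation of $\omega$, and Lem.~\ref{lemma : *-commutation} concludes in one shot. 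Your strong induction on $|\omega|+|\omega'|$ instead peels off the first letter $x$ of $\omega$, shows it must occur in $\omega'$, and bubbles it to the front of $\omega'$ --- thereby re-deriving inline the paper's Lem.~\ref{lemma : right commuting of a past vertex} (your bubbling step) and a special case of Lem.~\ref{lemma : common past vertices} (your ``$x$ occurs in $\omega'$'' step). Both proofs rest on the same two pillars: a past position's application commutes to the front of any valid sequence avoiding it, and time-increase makes the consumption of a past vertex irreversible. The paper's route factors these into standalone lemmas that are reused elsewhere (e.g.\ for Prop.~\ref{prop:weakconsistency}); yours is more self-contained at the price of redoing that work inside the induction. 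One local weakness worth flagging: to show that $x$ stays past while being bubbled past some $A_y$, you argue by contradiction with commutativity via time tags, but the claim that $A_yA_xG^*$ has a strictly higher tag at $x$ than $A_yG^*$ silently assumes that $A_y$ leaves position $x$ untouched. That assumption is true, but for the reason given in Lem.~\ref{cor:commutativity}: a past position $x$ is unreachable from $y$, hence $x\notin\restr{y}{G}$ and $\restri{}$-locality keeps $A_y$ away from it --- and once you have that direct reachability argument, it already yields $x\in\X(\Past(A_yG^*))$ without any detour through commutativity.
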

\withproofs{
\begin{proof}
    See Sec.\ref{proof : unicut}.
\end{proof}}

\begin{figure}
\hfil
\begin{subfigure}{0.27\textwidth}
    \resizebox{0.75\textwidth}{!}{\tikzset{every picture/.style={line width=0.75pt}} 

\begin{tikzpicture}[x=0.75pt,y=0.75pt,yscale=-1,xscale=1]

\draw   (117.6,152.15) .. controls (117.6,75.24) and (179.94,12.9) .. (256.85,12.9) .. controls (333.76,12.9) and (396.1,75.24) .. (396.1,152.15) .. controls (396.1,229.06) and (333.76,291.4) .. (256.85,291.4) .. controls (179.94,291.4) and (117.6,229.06) .. (117.6,152.15) -- cycle ;
\draw  [color={rgb, 255:red, 248; green, 231; blue, 28 }  ,draw opacity=1 ][line width=2.25]  (225.97,131.93) .. controls (247.62,131.64) and (265.55,159.56) .. (266.01,194.31) .. controls (266.48,229.05) and (249.31,257.45) .. (227.66,257.74) .. controls (206.02,258.03) and (188.09,230.1) .. (187.63,195.36) .. controls (187.16,160.62) and (204.33,132.22) .. (225.97,131.93) -- cycle ;
\draw  [color={rgb, 255:red, 245; green, 166; blue, 35 }  ,draw opacity=1 ][line width=2.25]  (280.71,92.12) .. controls (310.24,91.72) and (334.66,128.21) .. (335.27,173.61) .. controls (335.88,219.02) and (312.45,256.15) .. (282.92,256.55) .. controls (253.4,256.94) and (228.98,220.46) .. (228.37,175.05) .. controls (227.76,129.65) and (251.19,92.52) .. (280.71,92.12) -- cycle ;
\draw  [color={rgb, 255:red, 208; green, 2; blue, 27 }  ,draw opacity=1 ][line width=2.25]  (254.25,42.65) .. controls (305.57,41.96) and (347.87,93.64) .. (348.73,158.08) .. controls (349.6,222.53) and (308.7,275.33) .. (257.39,276.02) .. controls (206.07,276.71) and (163.77,225.02) .. (162.9,160.58) .. controls (162.04,96.14) and (202.94,43.34) .. (254.25,42.65) -- cycle ;
\draw  [color={rgb, 255:red, 248; green, 231; blue, 28 }  ,draw opacity=1 ][line width=3]  (197,227.7) .. controls (197,218.48) and (210.47,211) .. (227.08,211) .. controls (243.7,211) and (257.17,218.48) .. (257.17,227.7) .. controls (257.17,236.92) and (243.7,244.4) .. (227.08,244.4) .. controls (210.47,244.4) and (197,236.92) .. (197,227.7) -- cycle ;
\draw  [color={rgb, 255:red, 245; green, 166; blue, 35 }  ,draw opacity=1 ][line width=3]  (248.17,223.4) .. controls (248.17,211.8) and (263.84,202.4) .. (283.17,202.4) .. controls (302.5,202.4) and (318.17,211.8) .. (318.17,223.4) .. controls (318.17,235) and (302.5,244.4) .. (283.17,244.4) .. controls (263.84,244.4) and (248.17,235) .. (248.17,223.4) -- cycle ;
\draw  [color={rgb, 255:red, 208; green, 2; blue, 27 }  ,draw opacity=1 ][line width=3]  (191.58,227.8) .. controls (191.58,207.7) and (220.82,191.4) .. (256.88,191.4) .. controls (292.93,191.4) and (322.17,207.7) .. (322.17,227.8) .. controls (322.17,247.9) and (292.93,264.2) .. (256.88,264.2) .. controls (220.82,264.2) and (191.58,247.9) .. (191.58,227.8) -- cycle ;

\draw (128,9.4) node [anchor=north west][inner sep=0.75pt]  [font=\huge]  {$\mathcal{C}$};
\draw (225,57.4) node [anchor=north west][inner sep=0.75pt]  [font=\huge]  {$\textcolor[rgb]{0.82,0.01,0.11}{N_{\omega }( G)}$};
\draw (242,117.4) node [anchor=north west][inner sep=0.75pt]  [font=\LARGE]  {$\textcolor[rgb]{0.96,0.65,0.14}{N_{\beta }( A_{\alpha } G)}$};
\draw (195,154.4) node [anchor=north west][inner sep=0.75pt]  [font=\LARGE]  {$\textcolor[rgb]{0.97,0.91,0.11}{N_{\alpha }( G)}$};
\draw (278,213.4) node [anchor=north west][inner sep=0.75pt]  [font=\LARGE]  {$\textcolor[rgb]{0.96,0.65,0.14}{\beta }$};
\draw (216,220.4) node [anchor=north west][inner sep=0.75pt]  [font=\LARGE]  {$\textcolor[rgb]{0.97,0.91,0.11}{\alpha }$};
\draw (243,242.4) node [anchor=north west][inner sep=0.75pt]  [font=\LARGE]  {$\textcolor[rgb]{0.82,0.01,0.11}{\omega }$};

\end{tikzpicture}}
    \caption{Monotony.}
    \label{fig : Monotony 1}
\end{subfigure}
\hfil
\begin{subfigure}{0.48 \textwidth}
    \resizebox{0.75\textwidth}{!}{\input{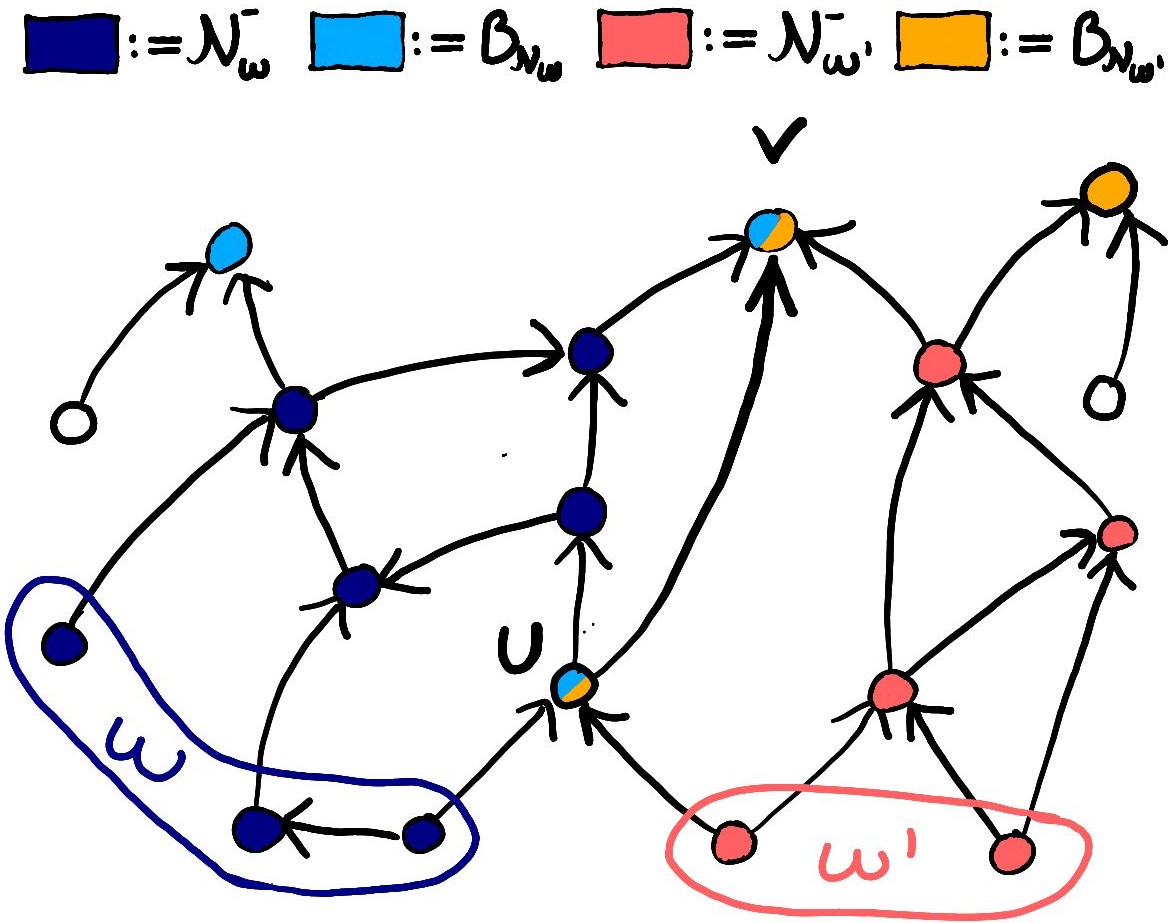}}
    \caption{Privacy and port decreasing.}
\end{subfigure}     
\hfil
\caption{{\em Monotony and privacy.} $(a)$ Monotony demands that, given a graph $G$ and a list of vertices $\omega = \beta \alpha$, the neighbourhood computed at the beginning $\restri{\omega}(G)$ (red) is larger   that $\restri{}_\alpha(G)$ (yellow) and that any future neighbourhood $\restri{}_\beta(A_\alpha G)$ (orange). Given a set of disjoint vertices $\omega'$ privacy demands that the neighbourhood $\restri{\omega}$ (blue) and $\restri{}_{\omega'}$ (red) only intersect on their boundaries (hatched vertices). $(b)$ The port decreasing condition demands that, in order to modify a vertex, $A_\omega$ must pay the price of decreasing one of its incoming private ports. Here, $x\in G_{\restri{\omega}}\cap G_{\restri{\omega'}}$ can be modified by both $A_\omega$ and $A_{\omega'}$ as each of them 
has private access to it.
}
\label{fig : Privacy}
\end{figure}

However Sec.~\ref{sec:exV1} discussed how weak consistency can be trivially realised even for non-trivial dynamics, and justified going for full consistency. 
To have sufficient condition for full consistency, we place new hypotheses on the neighbourhood scheme, namely extensivity, monotony and privacy. 

Note that these properties hold for the neighbourhood $\restri{}$ of examples of Secs~\ref{sec:exV1} and~\ref{sec:exV2} defined by:
\begin{align*}
    \forall x\in \mathcal{X},\restr{x}{G} = \V_{G_x} & &\forall \omega\in \mathcal{X}^*,\restr{\omega}{G} = \bigcup_{x\in \omega} \restr{x}{G}
\end{align*}

First we demand that $\restr{\omega}{G}$ be big enough to contain the neighbourhood of each vertex $x\in \omega$, at the time $A_x$ is to be computed. This closure property of $\restri{}$ is called monotony (see Fig.\ref{fig : Privacy}), it ensures that for any valid sequence $\omega$,  $A_\omega$ will not modify beyond $\restr{\omega}{G}$.

\begin{definition}[Monotony]  
    A neighbourhood scheme $\restri{}$ is monotonous iff for every sequence $\omega = \gamma  \beta  \alpha \in \Omega_G(A)$ we have 
    $\restr{\beta}{A_\alpha G} \subseteq \restr{\omega}{G}.$
\end{definition}
In particular we have $\restr{\beta}{A_\alpha G} \subseteq \restr{\beta  \alpha}{G}$ and  $\restr{\alpha}{G} \subseteq \restr{\omega}{G}$.

Second, privacy demands that any disjoint sequences $\omega, \omega'$ have their $\restr{\omega}{G}$ and $\restr{\omega'}{G}$ intersecting only on the vertices of their boundaries and the edges in between them (see Fig.~\ref{fig : Privacy}), a property akin to parallel independence \cite{StrongConfluenceEhrig}. 
Combining privacy and monotony ensures that concurrent influences happen only at these joint boundaries, an essential ingredient of full consistency, as illustrated in Fig.~\ref{fig : Non-private dismiss consistency 1} and~\ref{fig : Non-private dismiss consistency 2}. {They also makes easier to check whether two valid operator $A_x$ and $A_y$ commutes. Indeed when $\restr{y}{G} = \restr{y}{A_x G}$ and $\restr{x}{G} = \restr{x}{A_y G}$, privacy allows us to only check commutation on the boundaries to ensure it globally.}

\begin{definition}[Privacy]  
    A neighbourhood scheme $\restri{}$ is private, iff for any graph $G$ and any disjoint valid sequences $\omega, \omega'\in \Omega_G(A)$, we have 
    $\prec{\omega'}{G}\cap \restr{\omega}{G}=\emptyset.$
\end{definition}

\begin{figure}
\centering
\begin{subfigure}{0.14\textwidth}
    \resizebox{\textwidth}{!}{\usetikzlibrary{patterns}
\tikzset{every picture/.style={line width=0.75pt}} 

\begin{tikzpicture}[x=0.75pt,y=0.75pt,yscale=-1,xscale=1]

\draw  [color={rgb, 255:red, 0; green, 0; blue, 0 }  ,draw opacity=1 ][fill={rgb, 255:red, 255; green, 255; blue, 255 }  ,fill opacity=1 ][line width=3]  (331,240.44) .. controls (331,231.93) and (337.72,225.03) .. (346,225.03) .. controls (354.28,225.03) and (361,231.93) .. (361,240.44) .. controls (361,248.95) and (354.28,255.85) .. (346,255.85) .. controls (337.72,255.85) and (331,248.95) .. (331,240.44) -- cycle ;
\draw  [color={rgb, 255:red, 0; green, 0; blue, 0 }  ,draw opacity=1 ][fill={rgb, 255:red, 255; green, 255; blue, 255 }  ,fill opacity=1 ][line width=3]  (463,242) .. controls (463,233.49) and (469.72,226.59) .. (478,226.59) .. controls (486.28,226.59) and (493,233.49) .. (493,242) .. controls (493,250.51) and (486.28,257.41) .. (478,257.41) .. controls (469.72,257.41) and (463,250.51) .. (463,242) -- cycle ;
\draw [line width=2.25]    (467,229.4) -- (424.37,182.69) ;
\draw [shift={(421,179)}, rotate = 47.61] [fill={rgb, 255:red, 0; green, 0; blue, 0 }  ][line width=0.08]  [draw opacity=0] (14.29,-6.86) -- (0,0) -- (14.29,6.86) -- cycle    ;
\draw  [color={rgb, 255:red, 0; green, 0; blue, 0 }  ,draw opacity=1 ][fill={rgb, 255:red, 255; green, 255; blue, 255 }  ,fill opacity=1 ][line width=3][pattern=north east lines, pattern color=myblack]  (394,166.46) .. controls (394,157.95) and (400.72,151.05) .. (409,151.05) .. controls (417.28,151.05) and (424,157.95) .. (424,166.46) .. controls (424,174.97) and (417.28,181.87) .. (409,181.87) .. controls (400.72,181.87) and (394,174.97) .. (394,166.46) -- cycle ;
\draw  [color={rgb, 255:red, 0; green, 0; blue, 0 }  ,draw opacity=1 ][fill={rgb, 255:red, 255; green, 255; blue, 255 }  ,fill opacity=1 ][line width=3]  (394,70.49) .. controls (394,61.98) and (400.72,55.08) .. (409,55.08) .. controls (417.28,55.08) and (424,61.98) .. (424,70.49) .. controls (424,79) and (417.28,85.9) .. (409,85.9) .. controls (400.72,85.9) and (394,79) .. (394,70.49) -- cycle ;
\draw [line width=2.25]    (354,227.07) -- (395.57,183.04) ;
\draw [shift={(399,179.4)}, rotate = 133.35] [fill={rgb, 255:red, 0; green, 0; blue, 0 }  ][line width=0.08]  [draw opacity=0] (14.29,-6.86) -- (0,0) -- (14.29,6.86) -- cycle    ;
\draw [line width=2.25]    (409,151.05) -- (409,90.9) ;
\draw [shift={(409,85.9)}, rotate = 90] [fill={rgb, 255:red, 0; green, 0; blue, 0 }  ][line width=0.08]  [draw opacity=0] (14.29,-6.86) -- (0,0) -- (14.29,6.86) -- cycle    ;
\draw [color={rgb, 255:red, 208; green, 2; blue, 27 }  ,draw opacity=1 ][fill={rgb, 255:red, 208; green, 2; blue, 27 }  ,fill opacity=0.35 ][line width=3] [line join = round][line cap = round]   (495,268.4) .. controls (482.67,268.4) and (470.32,269.03) .. (458,268.4) .. controls (454.65,268.23) and (441.43,259.77) .. (438,258.4) .. controls (404.87,245.15) and (403.37,229.45) .. (384,200.4) .. controls (381.98,197.37) and (379.3,194.93) .. (377,193.4) .. controls (375.11,192.14) and (376.6,189.96) .. (376,188.4) .. controls (374.66,184.92) and (372.88,181.62) .. (371,178.4) .. controls (366.58,170.82) and (362.04,164.53) .. (359,155.4) .. controls (345.35,114.45) and (340.07,44.24) .. (397,42.4) .. controls (407.67,42.06) and (420.12,37.48) .. (429,43.4) .. controls (443.14,52.83) and (463.33,81.39) .. (469,98.4) .. controls (470.56,103.07) and (471.74,108.89) .. (474,113.4) .. controls (476.41,118.22) and (477.81,123.48) .. (480,128.4) .. controls (480.87,130.35) and (483.41,131.35) .. (484,133.4) .. controls (486.81,143.22) and (490.15,156.78) .. (494,166.4) .. controls (499.24,179.49) and (504.54,192.03) .. (509,205.4) .. controls (510.1,208.69) and (513.16,212.06) .. (514,215.4) .. controls (518.77,234.48) and (517.1,268.4) .. (496,268.4) ;
\draw [color={rgb, 255:red, 11; green, 11; blue, 108 }  ,draw opacity=1 ][fill={rgb, 255:red, 74; green, 144; blue, 226 }  ,fill opacity=0.35 ][line width=3] [line join = round][line cap = round]   (337,268.07) .. controls (346,268.07) and (355.02,268.69) .. (364,268.07) .. controls (372.09,267.51) and (392.85,247.88) .. (400,242.07) .. controls (410.77,233.31) and (417.69,223.37) .. (427,214.07) .. controls (429.24,211.83) and (429.74,208.33) .. (432,206.07) .. controls (452.38,185.69) and (473.63,161.36) .. (485,134.07) .. controls (496.97,105.34) and (481.89,77.95) .. (464,60.07) .. controls (461.77,57.84) and (460.08,54.61) .. (457,53.07) .. controls (440.71,44.92) and (427.6,44.34) .. (408,45.07) .. controls (404.18,45.21) and (399.5,48.33) .. (396,49.07) .. controls (383.17,51.77) and (377.17,52.48) .. (366,58.07) .. controls (361.11,60.51) and (356.88,65.74) .. (354,70.07) .. controls (353.04,71.51) and (345.52,77.94) .. (344,80.07) .. controls (328.23,102.15) and (314.31,134.39) .. (314,162.07) .. controls (313.75,184.33) and (292.88,268.07) .. (337,268.07) -- cycle ;

\draw (342,230) node [anchor=north west][inner sep=0.75pt]   [align=left] {\textbf{{\sizeFdixas $i$}}};
\draw (470,230) node [anchor=north west][inner sep=0.75pt]   [align=left] {\textbf{{\sizeFdixas $j$}}};
\draw (371,219) node [anchor=north west][inner sep=0.75pt]   [align=left] {{{\sizeFdixa $u$}}};
\draw (430,219) node [anchor=north west][inner sep=0.75pt]   [align=left] {{{\sizeFdixa $v$}}};
\draw (358,67) node [anchor=north west][inner sep=0.75pt]   [align=left] {{{\sizeFdixa $w$}}};
\draw (415,95.4) node [anchor=north west][inner sep=0.75pt]  [font=\sizeFdixap,color={rgb, 255:red, 0; green, 0; blue, 0 }  ,opacity=1 ]  {$:b$};

\end{tikzpicture}}
    \caption{$G$}
    \label{fig : Non-private dismiss consistency 1}
\end{subfigure}
\hfill
\begin{subfigure}{0.14\textwidth}
    \resizebox{0.76\textwidth}{!}{\tikzset{every picture/.style={line width=0.75pt}} 

\begin{tikzpicture}[x=0.75pt,y=0.75pt,yscale=-1,xscale=1]

\draw  [color={rgb, 255:red, 0; green, 0; blue, 0 }  ,draw opacity=1 ][fill={rgb, 255:red, 255; green, 255; blue, 255 }  ,fill opacity=1 ][line width=3]  (331,240.44) .. controls (331,231.93) and (337.72,225.03) .. (346,225.03) .. controls (354.28,225.03) and (361,231.93) .. (361,240.44) .. controls (361,248.95) and (354.28,255.85) .. (346,255.85) .. controls (337.72,255.85) and (331,248.95) .. (331,240.44) -- cycle ;
\draw  [color={rgb, 255:red, 0; green, 0; blue, 0 }  ,draw opacity=1 ][fill={rgb, 255:red, 255; green, 255; blue, 255 }  ,fill opacity=1 ][line width=3]  (463,242) .. controls (463,233.49) and (469.72,226.59) .. (478,226.59) .. controls (486.28,226.59) and (493,233.49) .. (493,242) .. controls (493,250.51) and (486.28,257.41) .. (478,257.41) .. controls (469.72,257.41) and (463,250.51) .. (463,242) -- cycle ;
\draw [line width=2.25]    (467,229.4) -- (424.37,182.69) ;
\draw [shift={(421,179)}, rotate = 47.61] [fill={rgb, 255:red, 0; green, 0; blue, 0 }  ][line width=0.08]  [draw opacity=0] (14.29,-6.86) -- (0,0) -- (14.29,6.86) -- cycle    ;
\draw  [color={rgb, 255:red, 0; green, 0; blue, 0 }  ,draw opacity=1 ][fill={rgb, 255:red, 255; green, 255; blue, 255 }  ,fill opacity=1 ][line width=3]  (394,166.46) .. controls (394,157.95) and (400.72,151.05) .. (409,151.05) .. controls (417.28,151.05) and (424,157.95) .. (424,166.46) .. controls (424,174.97) and (417.28,181.87) .. (409,181.87) .. controls (400.72,181.87) and (394,174.97) .. (394,166.46) -- cycle ;
\draw[color = myblack][line width=3] (410,68) circle (20);
\draw [line width=2.25]    (354,227.07) -- (395.57,183.04) ;
\draw [shift={(399,179.4)}, rotate = 133.35] [fill={rgb, 255:red, 0; green, 0; blue, 0 }  ][line width=0.08]  [draw opacity=0] (14.29,-6.86) -- (0,0) -- (14.29,6.86) -- cycle    ;
\draw [line width=2.25]    (409,151.05) -- (409,90.9) ;
\draw [shift={(409,85.9)}, rotate = 90] [fill={rgb, 255:red, 0; green, 0; blue, 0 }  ][line width=0.08]  [draw opacity=0] (14.29,-6.86) -- (0,0) -- (14.29,6.86) -- cycle    ;

\draw (398.5,58) node [anchor=north west][inner sep=0.75pt]   [align=left] {\textbf{{\sizeFdixas \textcolor[rgb]{0.11,0.35,0.65}{i}/\textcolor[rgb]{0.82,0.01,0.11}{j}}}};
\draw (371,219) node [anchor=north west][inner sep=0.75pt]   [align=left] {{{\sizeFdixa $u$}}};
\draw (430,219) node [anchor=north west][inner sep=0.75pt]   [align=left] {{{\sizeFdixa $v$}}};
\draw (353,67) node [anchor=north west][inner sep=0.75pt]   [align=left] {{{\sizeFdixa $w$}}};
\draw (415,95.4) node [anchor=north west][inner sep=0.75pt]  [font=\sizeFdixap,color={rgb, 255:red, 0; green, 0; blue, 0 }  ,opacity=1 ]  {$:a$};

\end{tikzpicture}}
    \caption{$A_u G$/$A_v G$}
    \label{fig : Non-private dismiss consistency 2}
\end{subfigure}
\hfill
\begin{subfigure}{0.10\textwidth}
    \resizebox{0.6\textwidth}{!}{\begin{tikzpicture}


\emptycell{0}{0}{myblack}{u};
\emptycell{0}{5}{myblack}{v};


\edge{u}{v}{myblack};


\draw (1,1) node[right]{{\sizeFdixc $u$}};
\draw (1,4) node[right]{{\sizeFdixc $v$}};

\draw (-0.85,3.5) node[color = border] {{\sizeFdixcp $:b$}};
\draw (-0.85,1.5) node[color = border] {{\sizeFdixcp $:a$}};

\end{tikzpicture}}
    \caption{$G$}
    \label{fig : Non-PD dismiss consistency 1}
\end{subfigure}
\hfill
\begin{subfigure}{0.10\textwidth}
    \resizebox{0.6\textwidth}{!}{\begin{tikzpicture}


\emptycell{0}{0}{myblack}{u};
\rightleft{0}{5}{myblack}{v};


\edge{u}{v}{myblack};


\draw (1,1) node[right]{{\sizeFdixc $u$}};
\draw (1,4) node[right]{{\sizeFdixc $v$}};

\draw (-0.85,3.5) node[color = border] {{\sizeFdixcp $:b$}};
\draw (-0.85,1.5) node[color = border] {{\sizeFdixcp $:a$}};

\end{tikzpicture}}
    \caption{$A_u G$}
    \label{fig : Non-PD dismiss consistency 2}
\end{subfigure}
\caption{\emph{Inconsistent dynamics examples.} The example of $(a)\&(b)$ relies on a neighbourhood scheme which is not private as $w\in \prec{u}{G}\cap \prec{v}{G}$. It follows that $A_u G$ and $A_v G$ disagree on the internal states of $w$, whilst both updating its incoming port in the same fashion ($b$ becomes $a<b$). 
The example of $(c)\&(d)$ is not port decreasing. It follows that $G$ and $A_u G$ disagree on $v$ but its set of incoming ports stay the same.}
\label{fig : Inconsistent dynamics}
\end{figure}

Lastly we want to forbid that a local rule modifies a vertex without altering its incoming ports, as this would immediately infringe full consistency (see Fig.~\ref{fig : Non-PD dismiss consistency 1} and~\ref{fig : Non-PD dismiss consistency 2}). Moreover the modification needs be decreasing---otherwise we could apply $A_x$ twice in a row and get to the exact same counterexample. This idea that $A_x$ should decrease the incoming ports of the vertex $v$ it modifies, can be understood as a natural way of `locally reflecting the progress of the computation of $v$', i.e. geometrically accounting for the fact that the dependency between $x$ and $v$ has been reduced, and hence their space-time relationship has changed. In Sec.~\ref{sec:exV1} the local rule is port decreasing because we suppress an incoming edge each time we modify a vertex. In Sec.~\ref{sec:exV2} the local rule is port decreasing for more subtle reasons in the case of a red vertex: we decrease $b'$ into $b''$ (see Fig.~\ref{fig : Dynamic example 2; G} and~\ref{fig : Dynamic example 2; A_x G}).

\begin{definition}[Port decreasing]\label{definition : Port Decreasing}
  An $\restri{}$-local rule $A_{(-)}$ is \emph{port decreasing} iff for any graph $G$, position $x\in \X(\Past(G))$ and vertex $u\in V_G \cap V_{A_x G}$ whenever $G_u^0\neq (A_xG)_u^0$ we have
    $$\pof(\E_G(V_G,u)\setminus\E_G(\cpreci{\omega},u)) > \pof(\E_{A_x G}(V_{A_x G},u)\setminus\E_G(\cpreci{\omega},u)) $$
    for a order $\leq$ over sets of ports which is for any $A,B,A',B'\in \mathcal{P}(\pi)$:
    \begin{itemize}
        \item \textbf{inclusive} i.e. $A \supseteq A' \Longrightarrow A\geq A'$
        \item \textbf{monotonous} i.e. $A\cap B = \emptyset\ \land\ A\geq A'\ \land\ B>B' \Longrightarrow A \cup B > A' \uplus B'$
    \end{itemize}
    where $\E_G(Y,u)$ denotes the set of edges in $G$ from any $v\in \zee Y$ to $u$, and $\pof E$ is as in Def.~\ref{def : consistency}.
\end{definition}

Note that we ask a port decreasing operator to decrease the port of a \emph{private edge}, i.e. an edge starting from $\restri{}^-$. Without this assumption we can give a counterexample by considering the graph of Fig.~\ref{fig : Privacy}. Say both $A_{\omega}$ and $A_{\omega'}$ were to modify the internal state of $v$ and decrease the port associated to the shared edge coming from $u$. Then we could have $\pof\Em_{A_\omega G}(v) = \pof\Em_{A_{\omega'} G}(v)$ whilst $\sigma_{A_{\omega'}G}(v) \neq \sigma_{A_\omega G}(v)$.
Finally we state our main result. 

\begin{restatable}{theorem}{generalcase}{\bf (Obtaining consistency)}\label{Th : General Case Consistency}
    Let $A_{(-)}$ be a port-decreasing time-increasing commutative $\restri{}$-local rule, with $\restri{}$ an extensive monotonous and private neighbourhood scheme.
    For all graphs $G$, $\mathcal{M}_A(G)$ is consistent.
\end{restatable}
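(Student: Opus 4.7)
The goal is to show that for any two graphs $G_1 = A_{\omega_1}G$ and $G_2 = A_{\omega_2}G$ in $\mathcal{M}_A(G)$ sharing a vertex $v$ with identical incoming port sets $\pof\E_{G_1}(v) = \pof\E_{G_2}(v)$, the induced subgraphs $(G_1)_v$ and $(G_2)_v$ coincide. The overall strategy is to argue that the $v$-local state of any graph in $\mathcal{M}_A(G)$ is effectively a function of $\pof\E_{(-)}(v)$, and to recover this function by identifying the multiset of rule applications that have genuinely acted on $v$.

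The plan is to proceed in three stages. First, I would establish a \emph{reordering lemma}: given any $\omega \in \Omega_G(A)$, commutativity together with extensivity, monotony and privacy can be combined to rearrange $\omega$, up to equivalence, as $\omega^v \cdot \omega^{\neg v}$, where $\omega^v$ lists exactly those positions $x \in \omega$ whose application $A_x$ actually modified the induced subgraph at $v$ at the moment it was applied. The role of monotony and privacy is to ensure that when $A_x$ does not touch $v$'s neighbourhood, it can safely be commuted past any $A_y$ that does touch it while preserving validity of the intermediate sequences; extensivity is needed so that the computed neighbourhoods are stable under the rearrangement.

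Second, I would argue that the multiset of positions appearing in $\omega^v$ is \emph{uniquely determined} by the ending port set $\pof\E_{A_\omega G}(v)$. Port-decreasing guarantees that every application in $\omega^v$ strictly decreases the incoming private ports at $v$, giving a finite, strictly descending chain of port sets. Privacy is then the key extra ingredient: when two applications $A_x$ and $A_y$ both modify $v$ from the same graph, their internal neighbourhoods are disjoint, so they decrement disjoint subsets of $v$'s private incoming ports; hence the total decrement is a disjoint union that faithfully records which rules acted. A Newman-style diamond argument then combines commutativity (local confluence) with the well-founded order provided by port-decreasing to conclude that any two $\omega^v_1, \omega^v_2$ reaching the same ending port set at $v$ coincide as multisets.

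Third, applying commutativity to the equal multisets $\omega_1^v = \omega_2^v$ yields $A_{\omega_1^v}G = A_{\omega_2^v}G$, which in particular agree on $v$. Since by construction $\omega_i^{\neg v}$ does not modify $v$, we have $(G_i)_v = (A_{\omega_i^v}G)_v$, so $(G_1)_v = (G_2)_v$, as required. The main obstacle is step two: port-decreasing alone does not prevent two distinct multisets of applications from reaching the same port set via different combinations of decreases, and it is the combination of privacy (to disjointify the decrements) with commutativity (to collapse any two ways of realising the same decrement) that forces uniqueness. The delicate point is carefully verifying that the port-decreasing condition on \emph{private} edges is preserved along the reorderings used in step one, so that the quantities we track in step two are genuinely invariant.
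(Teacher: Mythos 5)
Your core mechanism in stage two --- privacy makes the private incoming edges of disjoint applications at $v$ disjoint, and port-decreasing forces each genuine modification of $v$ to erase an incoming private port that no disjoint application can restore or re-occupy --- is sound, and it is in fact the heart of the paper's own argument (its disjoint-case proposition). But the architecture you build around it has a genuine gap, located in stage one. The reordering $\omega \sim \omega^{v}\cdot\omega^{\neg v}$ is not available: commutativity only lets you replace $\omega$ by a permutation that is itself a \emph{valid} sequence, and validity is constrained by the DAG of dependencies --- a position can only fire when it lies in $\X(\Past(\cdot))$ of the current graph. A $v$-modifying application of $A_x$ may be enabled only after a non-$v$-modifying $A_z$ has fired, which in turn may be enabled only after an earlier $v$-modifying application; such interleavings cannot be unwound, so $\omega^{v}$ need not be a valid sequence in $G$ (nor in $A_{\omega^{\neg v}}G$), and the identity $A_{\omega_1^{v}}G=A_{\omega_2^{v}}G$ you invoke in stage three is not even well posed. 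Even granting the split, $(A_{\omega}G)_v=(A_{\omega^{v}}G)_v$ does not follow merely from ``$\omega^{\neg v}$ does not modify $v$'': whether an application modifies $v$ depends on the graph it is applied to, and after reordering $\omega^{v}$ acts on a different graph.

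Stage two also needs more care than the sketch suggests: privacy disjointifies the decrements only across \emph{disjoint} sequences, so it says nothing about repeated occurrences of the same position in $\omega^{v}$ (there one must instead iterate the strict decrease along a chain), and a Newman-style diamond on port sets is not what delivers injectivity --- what does is exactly the fingerprint inequality above. The paper sidesteps the whole reconstruction problem: it never identifies \emph{which} applications touched $v$. Instead it inducts on the two sequences by repeatedly extracting a common position $x\in\X(\Past(G))$ (using the one reordering that \emph{is} licensed, namely moving a Past position to the front of a valid sequence), thereby reducing to the case where $\omega$ and $\omega'$ are disjoint; there, if either sequence modifies $v$ at all, the erased private port makes the two incoming port sets differ, so consistency holds by dismissal of its premise. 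I would encourage you to recast your stage two in that contrapositive form; as an attempt to reconstruct the multiset of modifying applications it is both harder than needed and, as it stands, unsupported by your stage one.
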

\withproofs{
\begin{proof}
    See Sec.\ref{proof : General Case Consistency annex}.
\end{proof}}

The proof is quite intricated, but we can give the following intuition. On the one hand the commutation hypothesis ensures that $A_\omega G\varparallel A_{\omega'} G$ as long as $\omega'$ is a permutation of $\omega$. On the other hand as long as $\restri{}$ is extensive, monotonous and private two disjoint operators necessarily modify different subgraphs, albeit with a common boundary. Then, by decreasing the private incoming port of each modified vertex, we obtain full consistency by dismissing its premise.

\section{Conclusion} \label{sec:conclusion}

{\em Summary of results.} We introduced graphs that must be thought of as space-like cuts of space-time diagrams. The vertices have names of the form $u=t.x$, they must be thought of as events. The edges must be thought of as dependencies between events, i.e. if $u=t.x$ points to $v=t'.y$, then $v$ is ahead in time of $u$, awaiting for information from $u$. The action of a local rule $A_x$ on $u$ is non-trivial only if $u$ is minimal: it disposes of it by communicating its information to $v$ and other dependencies, and creates vertex $u'=(t+1).x$ in some provisional state.\\ 
We argued that the right notion of space-time determinism is the consistency of its space-time diagram: the state of each event (in terms of its internal state and connectivity) needs be a function of its set of incoming ports, as these represent the angle at which the space-like cut traverses the event. We gave sufficient conditions for the asynchronous applications of a local rule to be consistent: they must be commuting and port-decreasing, with respect to an extensive, monotonous, private notion of neighbourhood. We also looked at a weaker notion of consistency, requiring that only the normal forms of events across space-time be well-determined: commutation alone suffices then.\\ 
Throughout, we argued of the potential implications for distributed computation (weak consistency), asynchronous simulation of dynamical systems (full consistency and our first example), and discrete toy models of general relativity (our second example).  

\noindent {\em Related works.} Geometry is dynamical in our work. We thus hope it makes useful addition to the already wide literature on Graph Rewriting \cite{RozenbergBook,EhrigBook}. 
We are aware that the dominating vocabulary to describe them is now that of Category theory \cite{LoweAlgebraic,Taentzer,HarmerFundamentals}, in which ways of combining non-commuting rules \cite{EchahedCombiningRules} and notions of space-time diagrams have been developed \cite{UnfoldingAdhesivePawel,BehrStochastic,UnfoldingKonig,UnfoldingReckel}.\\ 
We instead use the vocabulary of dynamical systems, as we came to consider Graph Rewriting though a series of works generalising cellular automata to synchronous, causal graph dynamics \cite{ArrighiCayleyNesme}, and tilings to graph subshifts \cite{ArrighiSubshifts}. We are confident that abstracting away the essential features of our formalism could yield interesting categorical frameworks, e.g. à la \cite{Maignan}.\\
The closest works however turn out to come from varied communities. In algorithmic complexity \cite{AsynchronousSimsSync} uses a DAG of dependencies representation to reduce the synchronisation costs of simulating a class of synchronous algorithms---we use it in the more dynamical systems context and in order to relax synchronism altogether, whilst safeguarding space-time determinism. In computational Physics \cite{GorardWofram} promotes the lattice of dependencies of local rule applications to a notion space-time, and advocates a notion of `causal invariance' based on the unicity of this lattice, as formalised in the context of string rewriting \cite{CausalInvarianceOnStrings}---we formalise space-determinism for graph rewriting mathematically, without reference to this lattice of dependencies of applications, and provide local conditions to achieve it. In the network reliability community, \cite{WeakConsistencyGacs} obtains a result similar to Prop.~\ref{prop:weakconsistency}---our local rules are allowed to modify the neighbouring vertices and the graph per se, and we reach full consistency.

\noindent {\em Perspectives.} Clock synchronisation is an expensive overhead for parallel simulation of dynamical systems, as well as numerous distributed computation applications. The hereby developed theoretical framework says we can just do away with them and still obtain a strong form space-time determinism, provided that the local rule meets certain requirements. We are looking forward to see this being applied in practice. We, on the other hand, are likely to focus on the reversible and quantum regimes of these graph rewriting models. Another, important open problem is to understand whether space-times of fully-consistent local rules correspond to expansive graph subshifts \cite{ArrighiSubshifts}, in the same way that space-times of cellular automata correspond to expansive tilings. One of the difficulties in establishing such a result is that the naming conventions we adopt for our vertices systematically prevent our space-times from being cyclic, even when the dynamics described is periodic---whereas the corresponding graph subshift will in fact be cyclic.

\noindent {\bf Acknowledgements}~ We wish to thank Nicolas Behr for helpful discussions. This publication was made possible through the support of the PEPR integrated project EPiQ ANR-22-PETQ-0007 part of Plan France 2030, and the ID \#62312 grant from the John Templeton Foundation, as part of the \href{https://www.templeton.org/grant/the-quantum-information-structure-of-spacetime-qiss-second-phase}{‘The Quantum Information Structure of Spacetime’ Project (QISS) }. The opinions expressed in this project/publication are those of the author(s) and do not necessarily reflect the views of the John Templeton Foundation.

\bibliographystyle{plainurl}
\bibliography{ICGT}

\newpage

\appendix

\section{Valid sequences and commutative local-rule}

We notice that a past vertex must remains so, hence the well-definiteness of commutativity. 
\begin{lemma}[Validity of commutativity]\label{lem:2valid}\label{cor:commutativity}
    Let $A$ be a (not necessarily commutative) local rule, $G$ be a graph and $x, y \in \X(\Past(G))$.
    The sequence $yx$ is valid in $G$. Moreover, if A is \emph{commutative}, $A_{xy}G=A_{yx}G$.
\end{lemma}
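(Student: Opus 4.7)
The ``moreover'' part is immediate: once commutativity is assumed and $x, y \in \X(\Past(G))$, the definition directly gives $A_x A_y G = A_y A_x G$, which is $A_{xy}G = A_{yx}G$ by the shorthand $A_{yx}:=A_y A_x$. The substance of the lemma therefore lies in showing that $yx$ is a valid sequence, for which only the condition $y \in \X(\Past(A_x G))$ needs attention (the complementary condition $x \in \X(\Past(G))$ is given).

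The key observation I plan to exploit is geometric: two distinct past positions cannot lie in one another's neighbourhood. Indeed, assume $x \neq y$; if $y \in \restri{x}(G)$ then by Def.~\ref{def : chi-neighbourhood} there is a directed path in $\E_G$ from $x$ to $y$, whose last edge would endow $u_y := t_y.y$ with an incoming edge, contradicting $u_y \in \Past(G)$. Hence $y \notin \restri{x}(G)$, and a fortiori $y \notin \prec{x}{G}$.

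From this it is essentially bookkeeping to conclude. Writing $A_x G = (A_x G_{\restri{x}}) \sqcup G_{\crestri{x}}$, since $y \in \crestri{x}(G)$ the vertex $u_y$ lies outside the part being rewritten: by Remark~\ref{rk:locality excludes borders} border vertices and border edges of $G_{\restri{x}}$ are preserved, and new vertices of $A_x G$ can occur only at positions in $\prec{x}{G}$. Hence $u_y \in V_{A_x G}$ with the same incident edges as in $G$, and in particular no incoming edges; so $u_y \in \Past(A_x G)$, giving $y \in \X(\Past(A_x G))$. The most delicate subcase to verify will be when $u_y$ happens to sit as a border vertex of $G_{\restri{x}}$ (because $u_y$ has outgoing edges landing in $\restri{x}(G)$): one must rule out that $A_x$ silently creates a new edge from some internal vertex of $A_x G_{\restri{x}}$ to $u_y$, which follows once more from Remark~\ref{rk:locality excludes borders} and the existence requirement of the join $(A_x G_{\restri{x}}) \sqcup G_{\crestri{x}}$ (which fixes the frontier already encoded in the unchanged $G_{\crestri{x}}$).
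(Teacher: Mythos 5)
Your proof is correct and follows the same route as the paper's: since $y$ is past it has no incoming edge, hence no directed path reaches it and the reachability clause of Def.~\ref{def : chi-neighbourhood} forces $y \notin \restr{x}{G}$, after which $\restri{}$-locality (via Remark~\ref{rk:locality excludes borders}) guarantees $y$ acquires no incoming edge under $A_x$. Your treatment is merely more explicit about the border-vertex subcase (and about the implicit assumption $x\neq y$, which both arguments need); the substance is identical.
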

\begin{proof}
    As there exists no path to $y$ in $G$, the reachability condition of neighbourhood schemes implies that $y \notin \restr{x}{G}$.
    By $\restri{}$-locality $A_x$ cannot add any incoming edge to $y$ which implies $y\in \X(\Past(A_{x} G))$, i.e. $yx$ is valid as wanted.  
\end{proof}

\begin{lemma}[*-validity of commutativity]\label{lemma : right commuting of a past vertex}
    Let $A$ be a commutative local rule.
    Let $\alpha$ be a valid sequence in $G$.
    Let $x\in \X(\Past(G))$ such that $x\notin \alpha$.
    We have $x\alpha$ and $\alpha x$ are valid in $G$ and $A_{x\alpha}G=A_{\alpha x}G$.
\end{lemma}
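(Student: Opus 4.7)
The plan is to proceed by induction on the length of $\alpha$. The base case $|\alpha|=0$ is immediate, and $|\alpha|=1$ is exactly Lemma~\ref{lem:2valid}.

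For the inductive step with $|\alpha|=n\geq 2$, I would decompose $\alpha=\alpha' y$, where $y$ is the first letter applied (the rightmost in the sequence, following the paper's convention $A_{yx}=A_yA_x$). Validity of $\alpha$ gives $y\in \X(\Past(G))$ and $\alpha'$ valid in $A_yG$. The crucial auxiliary fact I need is that $x$ remains past after applying $A_y$: since $x\neq y$ (as $x\notin\alpha$) and $x\in\X(\Past(G))$ has no incoming edge in $G$, the reachability condition of Def.~\ref{def : chi-neighbourhood} forbids $x$ from lying in $\restri{y}{G}$, so by $\restri{}$-locality $A_y$ does not modify the vertex at position $x$, and hence $x\in\X(\Past(A_yG))$.

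I can then invoke the induction hypothesis on $\alpha'$ in $A_yG$ together with the past position $x$, yielding that both $x\alpha'$ and $\alpha'x$ are valid in $A_yG$ and $A_{x\alpha'}(A_yG)=A_{\alpha'x}(A_yG)$. Combined with Lemma~\ref{lem:2valid} applied to the pair $(x,y)$ in $G$ (which gives $yx$ valid and $A_{yx}G=A_{xy}G$), the validity of $x\alpha=x\alpha'y$ in $G$ is immediate. For the validity of $\alpha x=\alpha'yx$ in $G$, the only new condition to verify after those inherited from $\alpha$ is that the prefix $\alpha'$ is valid in $A_{yx}G$; this holds because $A_{yx}G=A_{xy}G=A_x A_y G$ and the IH already tells us $\alpha'x$ is valid in $A_yG$, i.e.\ $\alpha'$ is valid in $A_xA_yG$. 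The equality of the two actions is then a short chain of rewrites:
\[
A_{\alpha x}G = A_{\alpha'}(A_yA_xG) = A_{\alpha'}(A_xA_yG) = A_x A_{\alpha'}(A_yG) = A_{x\alpha}G,
\]
where the second equality uses Lemma~\ref{lem:2valid} and the third uses the IH.

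The main obstacle is not any of these computations per se but the bookkeeping required to check validity after reshuffling: at every step one must verify that the candidate past vertex remains past despite the preceding local rule applications. This is systematically handled by reapplying the same reachability/$\restri{}$-locality argument used to establish $x\in\X(\Past(A_yG))$, and by leveraging the validity statements produced by the induction hypothesis; no further structural property of $A_{(-)}$ beyond commutativity and $\restri{}$-locality is needed.
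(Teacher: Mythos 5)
Your proof is correct. It follows the same skeleton as the paper's---induction on $|\alpha|$, reducing everything to the two-letter case of Lem.~\ref{lem:2valid}---but you peel the word from the opposite end: the paper writes $\alpha=y\beta$ (removing the \emph{last}-applied letter), applies the induction hypothesis to $\beta$ in the original graph $G$, and then invokes the two-letter commutation for the pair $(x,y)$ in the evolved graph $A_\beta G$; you write $\alpha=\alpha'y$ (removing the \emph{first}-applied letter), invoke the two-letter lemma for $(x,y)$ in $G$ itself, and apply the induction hypothesis in the evolved graph $A_yG$. This forces you to check separately that $x$ stays past under $A_y$, which you do via the reachability/$\restri{}$-locality argument---exactly the argument inside the proof of Lem.~\ref{lem:2valid} with the roles of $x$ and $y$ swapped, so nothing new is needed; in the paper's orientation this fact instead falls out of the induction hypothesis ($x\beta$ valid gives $x\in\X(\Past(A_\beta G))$ for free). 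The two routes are mirror images and neither buys anything substantive over the other: yours keeps the two-letter commutation in the fixed seed graph at the cost of carrying the induction hypothesis through an evolved graph, while the paper does the reverse. Your validity bookkeeping---in particular deducing validity of $\alpha'yx$ in $G$ from validity of $\alpha'x$ in $A_yG$ together with $A_{xy}G=A_{yx}G$---is complete and matches what the paper has to verify.
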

\begin{proof}
    We proceed by induction on the size of $\alpha$.
    When $|\alpha| = 0$ it is immediate.
    Let us suppose $\alpha = y\beta$.
    By induction hypothesis, we known that (1) $x\beta$ is valid in $G$, (2) $\beta x$ is valid in $G$ and (3) $A_{x\beta}G=A_{\beta x}G$.
    By (1), we have $x\in \X(\Past(A_\beta G))$.
    Since $\alpha = y\beta$ is supposed to be valid, we also have $y \in \X(\Past(A_{\beta}G))$.
    But Lem. \ref{cor:commutativity} tells us that $xy$ and $yx$ are both valid in $A_\beta G$ and $A_{xy}(A_\beta G)=A_{yx}(A_\beta G)$.
    Thus $xy\beta$ and $yx\beta$ are valid in $G$ and $A_{xy\beta} G= A_{yx\beta} G = A_{y}(A_{x\beta} G)$.
    Combining this with (2) and (3) finishes the induction and the proof.
\end{proof}
The two-letters commutativity condition is equivalent the ability to permute all letters of a valid sequence.
\begin{lemma}[$*$-commutation]\label{lemma : *-commutation}
    Let $A_{(-)}$ be a commutative local rule. Let $\omega$ be a valid sequence in $G$ and $\omega'$ be a permutation of $\omega$ also valid in $G$. We have 
    $$A_\omega G = A_{\omega'} G$$
\end{lemma}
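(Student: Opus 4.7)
I would proceed by induction on $n = |\omega|$. The base case $n = 0$ is trivial, and the whole content of the argument is an induction step based on Lemma~\ref{lemma : right commuting of a past vertex}, which lets us migrate a Past-vertex $y$ across a valid sequence it does not occur in. The idea is essentially a ``bubble-sort'': peel off one letter at a time from $\omega'$, move the corresponding letter inside $\omega$ to the analogous position, and invoke the induction hypothesis on what remains.

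More concretely, assume the statement for all sequences of length $<n$ (for any graph) and let $|\omega|=n$. Recall that $A_\omega$ is applied right-to-left, so ``first applied letter'' means rightmost. Write $\omega' = \omega'_1\, y$ where $y$ is the rightmost letter of $\omega'$; validity of $\omega'$ in $G$ gives $y\in\X(\Past(G))$. Since $\omega'$ is a permutation of $\omega$, the letter $y$ also appears in $\omega$; pick its \emph{rightmost} occurrence, so as to decompose $\omega = \omega_2\, y\, \omega_1$ with $y\notin\omega_1$. The suffix $\omega_1$ is valid in $G$, and $y\in\X(\Past(G))$, hence Lemma~\ref{lemma : right commuting of a past vertex} applies: $y\omega_1$ and $\omega_1 y$ are both valid in $G$, and $A_{y\omega_1}G = A_{\omega_1 y}G$. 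Postcomposing by $A_{\omega_2}$ turns this into $A_\omega G = A_{\omega_2\omega_1}(A_y G)$, and a short check using the defining decomposition-property of valid sequences shows that $\omega_2\omega_1$ is in fact valid in $A_y G$. On the other hand, directly from validity of $\omega' = \omega'_1 y$ in $G$, the prefix $\omega'_1$ is valid in $A_y G$ and $A_{\omega'}G = A_{\omega'_1}(A_y G)$.

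Now $\omega_2\omega_1$ and $\omega'_1$ are two valid sequences in $A_y G$, both of length $n-1$, with the same multiset of letters (namely the multiset of $\omega$ with one $y$ removed). The induction hypothesis, applied to the graph $A_y G$, yields $A_{\omega_2\omega_1}(A_y G) = A_{\omega'_1}(A_y G)$, hence $A_\omega G = A_{\omega'} G$, closing the induction.

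The main obstacle is bookkeeping rather than any deep difficulty: one must (i) respect the right-to-left reading convention for $A_{(-)}$, (ii) choose the rightmost occurrence of $y$ in $\omega$ so that the tail $\omega_1$ avoids $y$, which is precisely the hypothesis needed to invoke Lemma~\ref{lemma : right commuting of a past vertex}, and (iii) verify at each step that the intermediate sequences $\omega_1$, $\omega_1 y$, $y\omega_1$, $\omega_2\omega_1$ and $\omega'_1$ remain valid in the relevant graph, so that the induction hypothesis genuinely applies. All of these follow mechanically from Def.~\ref{def : locality} and Lemma~\ref{lemma : right commuting of a past vertex}; no further hypothesis beyond commutativity of $A_{(-)}$ is required.
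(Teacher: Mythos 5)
Your proof is correct and follows essentially the same route as the paper's: both arguments are a ``bubble sort'' driven by Lemma~\ref{lemma : right commuting of a past vertex}, using the rightmost-occurrence decomposition so that the tail avoids the letter being migrated. The only cosmetic difference is that you peel the first-applied letter off $\omega'$ and apply the induction hypothesis in the advanced graph $A_y G$, whereas the paper peels letters off $\omega$ and keeps the base graph fixed while maintaining the invariant $A_{\omega'}G = A_\beta A_{x_i\dots x_1}G$; your validity checks for $\omega_1$, $\omega_1 y$ and $\omega_2\omega_1$ are the right ones and go through.
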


\begin{proof}
    Let us denote $\omega = x_{n} \dots x_2 x_1$.
    We prove by induction that for all $i \in \{0,\dots,n\}$,
    $$A_{\omega'} G = A_{\beta}A_{x_i\dots x_1}G,$$
    with $\beta$ a permutation of $x_{n} \dots x_{i+1}$.
    The case $i=0$ is verified with $\beta = \omega'$.
    When $i= j+1$ we have $A_{\omega'} G = A_{\beta}A_{x_j\dots x_1}G$.
    By validity of $\omega$, we know that $x_{j+1} \in \X(\Past(A_{x_j\dots x_1} G))$.
    Consider the decomposition $\beta = \beta''x_{j+1}\beta'$ with $x_{j+1} \notin \beta'$. Lem. \ref{lemma : right commuting of a past vertex} applied to $\beta'$ and $x_{j+1}$ in the graph $A_{x_j\dots x_1} G$ gives us $A_{\omega'} G = A_{\beta''\beta'}A_{x_{j+1}\dots x_1}G$ as needed to finish the induction.
\end{proof}

\section{Weak consistency}

For the next proof we will need a notation for the rightmost sequence subtraction. Let $\omega\in \mathcal{X}^*$ and $\alpha \in \mathcal{X}^*$. We define recursively $(\omega \setminus \alpha)\in \mathcal{X}^*$ as:
\begin{equation*}
    \omega \setminus \alpha =
    \begin{cases}
        \omega & \text{if } |\alpha| = 0, \\
        \omega''\omega' & \text{if } |\alpha| = 1, \omega = \omega''\alpha\omega', \text{ and } \alpha \notin \omega'\\
        (\omega \setminus x) \setminus \alpha' & \text{if } \alpha = \alpha' x \text{ and } x \in \mathcal{X}.
    \end{cases}
\end{equation*}
For example if $\mathcal{X} = \{0,\dots ,9\}$, $\omega = 22159892$ and $\omega' = 28542$ we have 
$\omega\setminus \omega' = 2199$.
This operation preserves validity. 
\begin{lemma}[Validity of sequence substraction]\label{lemma : validity of omega - omega'}
    Let $A_{(-)}$ be a commutative local rule. Let $\omega$ and $\omega'$ be valid sequences. Then the sequence $\omega \setminus \omega'$ is valid in $A_{\omega'}G$. 
\end{lemma}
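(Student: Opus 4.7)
The plan is to proceed by induction on the length of $\omega'$. For the inductive step, it will be convenient to peel off the rightmost letter of $\omega'$, since the recursive definition of $\omega \setminus \omega'$ is structured that way: writing $\omega' = \alpha' z$, one has $\omega \setminus \omega' = (\omega \setminus z) \setminus \alpha'$ and $A_{\omega'}G = A_{\alpha'}(A_zG)$. Assuming the base case ($|\omega'|=1$) yields validity of $\omega \setminus z$ in $A_z G$, and noting that $\alpha'$ is valid in $A_z G$ (being the prefix of the valid sequence $\omega' = \alpha' z$ applied after $z$), the induction hypothesis applied in the graph $A_z G$ immediately gives that $(\omega \setminus z) \setminus \alpha'$ is valid in $A_{\alpha'}(A_z G) = A_{\omega'} G$, as required.

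The crux is therefore the base case $\omega' = y$, where $y \in \X(\Past(G))$. If $y \notin \omega$, then $\omega \setminus y = \omega$, and Lemma~\ref{lemma : right commuting of a past vertex} applied to $\omega$ and $y$ in $G$ yields validity of $y\omega$ in $G$, which by Def.~\ref{def : locality} means validity of $\omega$ in $A_y G$. If $y$ does appear in $\omega$, I would decompose $\omega = \omega_2 y \omega_1$ with $y \notin \omega_1$, so that $\omega \setminus y = \omega_2 \omega_1$. Validity of $\omega_1$ in $A_y G$ will follow from Lemma~\ref{lemma : right commuting of a past vertex} applied to the valid sub-sequence $\omega_1$ and the past vertex $y$ in $G$, which additionally provides the equality $A_{\omega_1}(A_y G) = A_y(A_{\omega_1} G) = A_{y\omega_1} G$. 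Since $\omega_2$ is valid in $A_{y\omega_1} G$ by validity of $\omega$ in $G$, it is also valid in $A_{\omega_1}(A_y G)$, giving validity of $\omega_2 \omega_1$ in $A_y G$.

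The main obstacle is really just the bookkeeping of compositions in this second sub-case, reconciling the order in which the applications of $y$ and of $\omega_1$ are made. Lemma~\ref{lemma : right commuting of a past vertex} does the heavy lifting both by granting validity of the commuted sequences and by equating the resulting graphs, so the argument reduces to matching compositions on either side. No structural hypothesis on the local rule beyond commutativity is needed, and extensivity, monotony, or privacy of the neighbourhood scheme play no role here; the result is purely a consequence of the $*$-commutation machinery established in the preceding lemmas.
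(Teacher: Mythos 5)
Your proof is correct, but it organises the induction differently from the paper's. The paper peels off the \emph{leftmost} (last-applied) letter of $\omega'=x\alpha$, invokes the induction hypothesis to get $\omega\setminus\alpha$ valid in $A_\alpha G$, and only then performs the decompose-and-commute step around the rightmost occurrence of $x$ in $\omega\setminus\alpha$; this forces it to silently identify $(\omega\setminus\alpha)\setminus x$ with $\omega\setminus(x\alpha)$, an order-of-removal fact the recursive definition of $\setminus$ does not directly give. You instead peel off the \emph{rightmost} (first-applied) letter $z$ of $\omega'=\alpha'z$, which tracks the recursive definition $\omega\setminus(\alpha'z)=(\omega\setminus z)\setminus\alpha'$ literally, and you concentrate all the combinatorial work in the single-letter case; the price is that your induction hypothesis must be quantified over all graphs (you apply it in $A_zG$ with the modified sequence $\omega\setminus z$), whereas the paper keeps $G$ and $\omega$ fixed. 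Both arguments lean on Lemma~\ref{lemma : right commuting of a past vertex} in exactly the same way, and your single-letter case (decompose $\omega=\omega_2 y\omega_1$ with $y\notin\omega_1$, commute $y$ past $\omega_1$, then transport validity of $\omega_2$ through the equality $A_{y\omega_1}G=A_{\omega_1 y}G$) is essentially the paper's key step in miniature. One slip to fix: in your first sub-case ($y\notin\omega$) you write that validity of $y\omega$ in $G$ yields validity of $\omega$ in $A_yG$ --- it is validity of $\omega y$ in $G$ (which the lemma also provides) that unfolds, via Def.~\ref{def : locality}, to $\omega$ being valid in $A_yG$; validity of $y\omega$ instead says $y\in\X(\Past(A_\omega G))$, which is not what you need.
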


\begin{proof}
    We proceed by induction on the size of $\omega'$.
    It is immediate when $\omega'$ is empty.
    When $\omega' = x \alpha$, the validity of $\omega'$ gives us that $x\in \X(\Past(A_\alpha G))$.
    The induction hypothesis is that $\omega \setminus \alpha$ is valid in $A_{\alpha}G$.
    Taking $\omega \setminus \alpha = \gamma\beta$ with $\beta$ the longest suffix of $\omega \setminus \alpha$ such that $x \notin \beta$, we have $\beta$ valid in $A_{\alpha}G$.
    
    We can use Lem. \ref{lemma : right commuting of a past vertex} on $\beta$ and $x$ in $A_\alpha G$ to get validity of $\beta$ in $A_{x \alpha}G$ and $A_{x \beta \alpha} G = A_{\beta x \alpha}G$.
    If $\gamma = \varepsilon$ we can conclude immediately.
    Otherwise we have $\gamma = \beta' x$.
    Since $\beta'$ is valid in $A_{x \beta \alpha} G $ it is also valid in $A_{ \beta x\alpha} G $. This finishes to prove validity of $\beta' \beta = \omega\setminus x\alpha$ in $A_{x \alpha} G$.
\end{proof}
Whilst confluence is not the primary aim of this paper, we obtain it as a corollary. 
\begin{corollary}[Confluence]\label{lemma : confluence}
    Let $A_{(-)}$ be a commutative local rule. Let $\omega$ and $\omega'$ be valid sequences. Then the sequences $\omega'\setminus \omega$ and $\omega \setminus \omega'$ are valid respectively in $A_\omega G$ and $A_{\omega'}G$. This enforces:
    $$A_{(\omega'\setminus \omega)}A_\omega G = A_{(\omega\setminus \omega')}A_{\omega'} G$$
\end{corollary}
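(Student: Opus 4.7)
The plan is to reduce the statement to the two preceding lemmas: Lemma (validity of sequence substraction) for the two validity claims, and Lemma ($*$-commutation) for the equality of resulting graphs. The validity part is essentially given: applying Lemma (validity of sequence substraction) directly yields that $\omega \setminus \omega'$ is valid in $A_{\omega'} G$ and, by symmetry of the argument, that $\omega' \setminus \omega$ is valid in $A_\omega G$. Consequently, the concatenated sequences $\omega \cdot (\omega' \setminus \omega)$ and $\omega' \cdot (\omega \setminus \omega')$ are both valid sequences in $G$.

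To prove the equality, I would show that these two concatenated sequences are in fact permutations of each other, so that Lemma ($*$-commutation) applies and gives
\[
A_{\omega \cdot (\omega' \setminus \omega)} G \;=\; A_{\omega' \cdot (\omega \setminus \omega')} G,
\]
which is exactly the claimed identity. The permutation claim is a purely combinatorial statement about the operation $\setminus$: viewing sequences as multisets of letters, one wants to show that both concatenations realise the multiset union $\omega \uplus \omega' \,\ominus\, (\omega \cap \omega')$, where $\cap$ denotes the multiset intersection (so each letter appearing in both is counted only once in the subtracted part). I would establish this by induction on the length of $\omega'$ (or equivalently $\omega$), unfolding the recursive definition of $\setminus$: the base case $|\omega'| = 0$ is trivial, and in the step $\omega' = \alpha' x$ the definition $(\omega \setminus \omega') = (\omega \setminus x) \setminus \alpha'$ exactly mirrors the removal of one occurrence of $x$ from the multiset.

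\paragraph*{Expected obstacle.} The combinatorial verification of the multiset identity is where care is needed, but it is routine. The more delicate point is ensuring that the inductive unfolding of $\setminus$ remains compatible with the validity hypotheses invoked: at each step one needs the intermediate sequences to themselves be valid in the appropriate graph so that Lemma (validity of sequence substraction) can be applied coherently. However, Lemma (validity of sequence substraction) is stated precisely for this purpose, and once validity is secured, Lemma ($*$-commutation)---which requires only that the two permuted sequences be valid in the same graph---concludes the proof without further work.
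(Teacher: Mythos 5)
Your proposal is correct and follows essentially the same route as the paper: validity of $\omega\setminus\omega'$ and $\omega'\setminus\omega$ via the sequence-subtraction lemma, then the equality via the $*$-commutation lemma. The only difference is that you explicitly verify the (true, and implicitly assumed by the paper) combinatorial fact that $(\omega'\setminus\omega)\omega$ and $(\omega\setminus\omega')\omega'$ are permutations of one another, each realising the multiset with multiplicities $\max(|\omega|_x,|\omega'|_x)$.
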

\begin{proof}
    We get each validity by one application of Lem. \ref{lemma : validity of omega - omega'}. Then we get the equality by applying Lem. \ref{lemma : *-commutation}.
\end{proof}

In any space-time diagram ${\cal M}_A(G)$ for $A$ time-increasing commutative, the state of a past vertex is well-determined.  

\begin{lemma}[Common past vertices have been equally updated]\label{lemma : common past vertices}
    Let $A_{(-)}$ be a time-increasing commutative local rule. Let $\omega$ and $\omega'$ be valid sequences in $G$. If there exists $t.x \in \Past(A_\omega G) \cap \Past(A_{\omega'}G)$ then $\omega'$ contains as much $x$ as $\omega$. 
\end{lemma}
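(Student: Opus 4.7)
The plan is to argue by contradiction using the confluence Corollary~\ref{lemma : confluence}. Suppose the two sequences do not contain the same number of occurrences of $x$; without loss of generality, let $\omega'$ contain strictly more. Applying confluence yields a common graph $H := A_{\omega'\setminus \omega} A_\omega G = A_{\omega\setminus \omega'} A_{\omega'} G$, in which, by the definition of sequence subtraction, $\omega \setminus \omega'$ contains no occurrence of $x$ while $\omega' \setminus \omega$ contains at least one. I would then derive a contradiction by computing the timetag at position $x$ in $H$ along each of these two evaluation paths.

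The key auxiliary step is a preservation claim: whenever $t.x \in \Past(H')$ for some intermediate graph $H'$, applying any valid $A_y$ with $y \neq x$ leaves the vertex $t.x$ and all its incident edges untouched, so $t.x$ remains past with the same timetag. The point is that, by the reachability clause in Def.~\ref{def : chi-neighbourhood}, $\restri{y}$ contains only positions reachable from $y$ by a directed path; since $x$ has no incoming edges in $H'$, any such path ending at $x$ must have length zero, forcing $y = x$. Hence $x \notin \restri{y}$, and by Remark~\ref{rk:locality excludes borders} the operator $A_y$ touches neither the vertex at $x$ nor any edge incident to it.

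With this preservation claim in hand, I would compute the timetag at $x$ in $H$ along both evaluation paths. Along $A_{\omega \setminus \omega'} A_{\omega'} G$, iterating the preservation claim---valid because, by Cor.~\ref{lemma : confluence}, $\omega \setminus \omega'$ is a valid sequence in $A_{\omega'} G$---shows that the vertex at $x$ in $H$ is still $t.x$ with timetag $t$. Along $A_{\omega' \setminus \omega} A_\omega G$, the preservation claim keeps the timetag at $x$ equal to $t$ until the first occurrence of $A_x$ in the sequence (reading right-to-left); at that moment the time-increasing hypothesis forces a strict increment to some $t_1 > t$, and subsequent operations can only further (weakly or strictly) increase the timetag at $x$. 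Hence the timetag at $x$ in $H$ is strictly greater than $t$, contradicting the first computation, and we conclude that $\omega$ and $\omega'$ contain the same number of occurrences of $x$.

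The main obstacle is establishing the preservation claim: it is where the interaction between the reachability condition on neighbourhood schemes and the border-locality remark really does the work of saying that a past vertex is truly immune to any operator not acting at its own position. Once this claim is in place, the remainder is straightforward time-increasing bookkeeping, leveraging the asymmetry between the two confluence branches to extract a strict inequality on one side only.
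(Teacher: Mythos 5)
Your proposal is correct and follows essentially the same route as the paper's proof: both sides are joined at the common reduct given by Corollary~\ref{lemma : confluence}, the branch $\omega\setminus\omega'$ (which contains no $x$ under the WLOG assumption) is seen to leave $t.x$ in the past, and the time-increasing hypothesis then forces $\omega'\setminus\omega$ to contain no occurrence of $x$. Your ``preservation claim'' is exactly the reachability-plus-border-locality argument the paper already deploys in Lemma~\ref{cor:commutativity}, so you have merely made explicit a step the paper's proof asserts tersely; tracking the timetag value rather than membership in $\Past$ is a cosmetic difference.
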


\begin{proof}
    We suppose without loss of generality that there is at least as much $x$ in $\omega'$ than in $\omega$. Then because $x\notin (\omega \setminus \omega')$ and $t.x\in \Past(A_{\omega'}G)$ we have $t.x\in \Past(A_{(\omega\setminus\omega')\omega'}G)$. Using Lem.~\ref{lemma : confluence} this implies $t.x\in \Past(A_{(\omega'\setminus\omega)\omega}G)$. Since $t.x\in \Past(A_\omega G)$ this implies by the time-increasing condition that $\omega' \setminus \omega$ does not contain any $x$.
\end{proof}

\weakconsistency*\label{proof : weak consistency}
\begin{proof}
    Let $\omega$ and $\omega'$ be valid sequences in $G$. Let $t.x \in \Past(A_\omega G)\cap \Past(A_{\omega'}G)$.
    Lem.~\ref{lemma : common past vertices} tells us that there is as much $x$ in $\omega'$ than in $\omega$.
    This means $x\notin (\omega \setminus \omega')$ and $x\notin (\omega' \setminus \omega)$.
    Then, using locality and Lem.~\ref{lemma : confluence}, we get the following equalities
     $$(A_{\omega'}G)_{t.x} = (A_{(\omega \setminus \omega') \omega'}G)_{t.x}=(A_{(\omega'\setminus \omega) \omega}G)_{t.x}= (A_{\omega}G)_{t.x}.$$
\end{proof}

Moreover, in any space-time diagram ${\cal M}_A(G)$ for $A$ time-increasing commutative, fixing set of past vertices fixes the entire space-like cut.

\unicut*\label{proof : unicut}

\begin{proof}
    We consider two graphs $A_\omega G$ and $A_{\omega'}G$ such that $\Past(A_\omega G) = \Past(A_{\omega'}G)$. Let us prove by contradiction that $\omega \setminus \omega'$ is empty. We suppose it non empty and call $x$ its right-most letter. Then validity (coming from Lem.~\ref{lemma : confluence}) gives us $t.x\in \Past(A_{\omega'}G) = \Past(A_\omega G)$.
    By Lem.~\ref{lemma : common past vertices}, we have that $\omega$ and $\omega'$ contains as much $x$ as the other.
    So $\omega \setminus \omega'$ does not contain $x$, a contradiction.

    Thus we have $\omega \setminus \omega' = \emptyset$. Symmetrically $\omega'\setminus \omega = \emptyset$.
    Using Lem.~\ref{lemma : *-commutation} this proves $A_\omega G = A_{\omega'} G$.
\end{proof}

\section{Full consistency}

 The two hypotheses of Th.~\ref{Th : General Case Consistency} ($\restri{}$-locality and the port decreasing condition) handle only size $1$ valid sequences. In order to prove the theorem we first have to establish that the two hypotheses for any valid sequence. We start by extending the notion locality, to account not just for a single position, but valid sequences of them.
\begin{definition}[$*$-$\restri{}$-Locality]
    Consider a neighbourhood scheme $\restri{}$. An operator $A_{(-)}$ is said to be $\omega$-$\restri{}$-local iff, for all $G$ for which $\omega\in \Omega_G(A)$ we have
    $$A_\omega G = A_\omega (G_\restri{\omega}) \sqcup G_\crestri{\omega}.$$
    An operator said to be $\restri{}$-local iff this holds for any $\omega\in \Omega_G(A)$ such that $|\omega|=1$, we then say it is a local rule. It is said to be $*$-$\restri{}$-local if this holds for any $\omega\in \Omega_G(A)$. 
\end{definition}
Let us prove that any $\restri{}$-local operator for an extensive and monotonous neighbourhood scheme is also $*$-$\restri{}$-local.
\begin{lemma}[$\restri{}\subseteq X$-locality implies $X$-locality]\label{lemme : petit local implique grand local version générale}
    Consider an extensive neighbourhood scheme $\restri{}$, a graph $G$ and $\omega \in \mathcal{X}^*$, and any set $X\subseteq \mathcal{X}$ such that $\restr{\omega}{G}\subseteq X$. If $A_{(-)}$ is $\omega$-$\restri{}$-local then we have:
    $$A_\omega G = (A_\omega G_{X})\sqcup G_{\overline{X}}$$
\end{lemma}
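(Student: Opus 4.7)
Set $N := \restr{\omega}{G}$; the hypothesis reads $N \subseteq X \subseteq \mathcal{X}$. The plan is to apply the $\omega$-$\restri{}$-locality hypothesis at two different graphs, namely $G$ itself and its induced subgraph $G_X$, and then recombine the two decompositions using the $\sqcup/\sqsubseteq$ lattice structure on $\calG$. The role of extensivity is to guarantee that the neighbourhood computed at $G$ and at $G_X$ is the same set $N$.

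First I would use extensivity: since $N \subseteq X$, one has $G_N \sqsubseteq G_X \sqsubseteq G$, hence $\restr{\omega}{G_X} = \restr{\omega}{G} = N$. Next, I need $\omega \in \Omega_{G_X}(A)$ before I may invoke $\omega$-$\restri{}$-locality at $G_X$. This I would establish by induction on $|\omega|$, jointly with the main equality: for each prefix $\omega_1$ (the rightmost piece, applied first), monotony gives $\restr{\omega_1}{G} \subseteq N \subseteq X$; extensivity yields $\restr{\omega_1}{G_X}=\restr{\omega_1}{G}$; and the induction hypothesis supplies both $\omega_1 \in \Omega_{G_X}(A)$ and $A_{\omega_1} G = (A_{\omega_1} G_X)\sqcup G_{\overline{X}}$. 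From the latter, any incoming edge of the next letter $x$ in $A_{\omega_1} G_X$ would already be present in $A_{\omega_1} G$, contradicting $x\in\X(\Past(A_{\omega_1} G))$; so $x\in\X(\Past(A_{\omega_1} G_X))$, closing the induction.

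With $\omega\in\Omega_{G_X}(A)$ in hand, the $\omega$-$\restri{}$-locality hypothesis at $G_X$, combined with $\restr{\omega}{G_X}=N$, gives
\[
  A_\omega G_X \;=\; A_\omega\bigl((G_X)_N\bigr) \sqcup (G_X)_{\overline{N}} \;=\; A_\omega(G_N) \sqcup G_{X\setminus N},
\]
since $N\subseteq X$ makes $(G_X)_N = G_N$ and $(G_X)_{\overline{N}} = G_{X\setminus N}$. The $\omega$-$\restri{}$-locality hypothesis at $G$ itself gives $A_\omega G = A_\omega(G_N) \sqcup G_{\overline{N}}$, and the disjoint partition $\overline{N} = (X\setminus N)\cup \overline{X}$ yields $G_{\overline{N}} = G_{X\setminus N}\sqcup G_{\overline{X}}$ (the associativity/compatibility of $\sqcup$ on induced subgraphs). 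Assembling,
\[
  (A_\omega G_X)\sqcup G_{\overline{X}} \;=\; A_\omega(G_N) \sqcup G_{X\setminus N} \sqcup G_{\overline{X}} \;=\; A_\omega(G_N) \sqcup G_{\overline{N}} \;=\; A_\omega G.
\]

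The main obstacle I anticipate is the second step, the validity transfer $\omega\in\Omega_G(A)\Rightarrow \omega\in\Omega_{G_X}(A)$: since $\omega$-$\restri{}$-locality is only postulated for graphs on which $\omega$ is valid, the equality and the validity have to be proved simultaneously by induction, and one must check that no stray incoming edge appears in $A_{\omega_1} G_X$ that was absent in $A_{\omega_1} G$—this is precisely what the inductive decomposition of the previous step delivers. Everything past that point is routine bookkeeping in the $\sqsubseteq$-lattice and the defining partition $G = G_X\sqcup G_{\overline{X}}$.
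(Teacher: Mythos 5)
Your main computation coincides with the paper's proof: both apply the $\omega$-$\restri{}$-locality hypothesis once at $G_X$ and once at $G$, use extensivity (via $G_{\restr{\omega}{G}}\sqsubseteq G_X\sqsubseteq G$) to conclude $\restr{\omega}{G_X}=\restr{\omega}{G}$, and then reassemble with $G_{\overline{N}}=G_{X\setminus N}\sqcup G_{\overline{X}}$ where $N=\restr{\omega}{G}$; your closing display is essentially the paper's chain of equalities read backwards, and the lattice bookkeeping ($(G_X)_N=G_N$, $(G_X)_{\overline{N}}=G_{X\setminus N}$, and the final $\sqcup$-reassembly) is correct. The genuine difference is your validity-transfer step. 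You are right that, by definition, $\omega$-$\restri{}$-locality only yields the decomposition on graphs in which $\omega$ is a valid sequence, so invoking it at $G_X$ formally presupposes $\omega\in\Omega_{G_X}(A)$ --- a point the paper's proof passes over in silence. But your repair proves a different statement from the one quoted: the induction on initial segments $\omega_1$ of $\omega$ needs monotony (to get $\restr{\omega_1}{G}\subseteq\restr{\omega}{G}\subseteq X$) and needs $\omega_1$-$\restri{}$-locality for every such $\omega_1$, whereas the lemma assumes only extensivity and $\omega$-locality for the single word $\omega$. In the lemma's sole application --- the induction establishing that $\restri{}$-locality implies $*$-$\restri{}$-locality --- monotony and locality of all sub-sequences are indeed available, so nothing downstream would break; but as a self-contained proof of the lemma as stated, your argument does not close. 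The clean resolutions are either to add $\omega\in\Omega_{G_X}(A)$ (or monotony) to the hypotheses, or to follow the paper and leave the validity of $\omega$ in $G_X$ implicit; what you should not do is present the prefix induction as a consequence of the stated assumptions alone.
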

\begin{proof}
In order to lighten the notations of this proof we temporarily write $\restri{}$ instead of $\restri{\omega}$.
By extensivity of $\restri{}$, $G_{\restr{}{G}}\sqsubseteq G_{X} \sqsubseteq G$ implies $\restr{}{G}=\restr{}{G_{X}}$.
\begin{align*}
    A_\omega G_{X}&= A_\omega G_{X{}\restr{}{G_{X}}}\sqcup G_{X{}\crestr{}{G_{X}}}\\
    &= A_\omega G_{X{}\restr{}{G}}\sqcup G_{X\crestr{}{G}}\textrm{ by extensivity}\\
    &= A_\omega G_{\restr{}{G}}\sqcup G_{X\cap\crestr{}{G}}\textrm{ since }\restr{}{G}\subseteq X\\
    A_\omega G_{X}\sqcup G_{\overline{X}}&=A_\omega G_{\restr{}{G}}\sqcup G_{X\cap\crestr{}{G}} \sqcup G_{\overline{X}}\\
    &=A_\omega G_{\restr{}{G}}\sqcup G_{X\cap\crestr{}{G}} \sqcup G_{\overline{X}\cap\crestr{}{G}} \textrm{ since } \overline{X}\subseteq\crestr{}{G}\\
    &=A_\omega G_{\restri{}(G)}\sqcup G_{\crestri{}(G)}\\
    &= A_\omega G
\end{align*}
\end{proof}

\begin{lemma}[$\restri{}$-locality implies $*$-$\restri{}$-locality]\label{lem:extloc}
    If an operator $A_{(-)}$ is $\restri{}$-local for a monotonous and extensive neighbourhood scheme $\restri{}$, then it is also $*$-$\restri{}$-local.
\end{lemma}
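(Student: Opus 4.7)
The plan is to proceed by induction on the length of $\omega \in \Omega_G(A)$, keeping the graph $G$ universally quantified so that the inductive hypothesis can be reused after one step of evolution. The base case $|\omega|=0$ is immediate since $A_\varepsilon G = G = G_{\restri{\varepsilon}}\sqcup G_{\overline{\restri{\varepsilon}}}$. For the inductive step, write $\omega = x\alpha$ with $\alpha$ valid in $G$ and $x \in \mathrm{X}(\Past(A_\alpha G))$, and set $X := \restri{\omega}(G)$.

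First I would use monotony to align everything inside the common ``master'' neighbourhood $X$. Monotony gives $\restri{\alpha}(G)\subseteq X$ and $\restri{x}(A_\alpha G)\subseteq X$. Applying the inductive hypothesis, $A_\alpha$ is $\alpha$-$\restri{}$-local, and then Lemma~\ref{lemme : petit local implique grand local version générale} (with this $X$) lets me widen the decomposition to
\[
A_\alpha G \;=\; A_\alpha(G_X)\sqcup G_{\overline{X}}.
\]

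Second I would perform the $A_x$ step inside the same $X$. Since $x\in\mathrm{X}(\Past(A_\alpha G))$, the $\restri{}$-locality of $A_x$ together with Lemma~\ref{lemme : petit local implique grand local version générale} (valid here because $\restri{x}(A_\alpha G)\subseteq X$) yields
\[
A_x(A_\alpha G) \;=\; A_x\!\bigl((A_\alpha G)_X\bigr)\sqcup (A_\alpha G)_{\overline{X}}.
\]
By Remark~\ref{rk:locality excludes borders}, new vertices produced by the local rule sit on positions already present in its private neighbourhood, so the decomposition $A_\alpha G = A_\alpha(G_X)\sqcup G_{\overline{X}}$ already obtained identifies $(A_\alpha G)_X = A_\alpha(G_X)$ and $(A_\alpha G)_{\overline{X}} = G_{\overline{X}}$. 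Combining these gives
\[
A_{x\alpha}G \;=\; A_x\!\bigl(A_\alpha(G_X)\bigr)\sqcup G_{\overline{X}}.
\]

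The remaining task, which I expect to be the main obstacle, is to replace $A_x\bigl(A_\alpha(G_X)\bigr)$ by $A_{x\alpha}(G_X)$, i.e.\ to check that $x\alpha$ is still a valid sequence in $G_X$ and that applying the rule there yields the same graph. This is where extensivity is used: since $G_{\restri{\alpha}(G)}\sqsubseteq G_X \sqsubseteq G$, extensivity gives $\restri{\alpha}(G_X)=\restri{\alpha}(G)$, and iterating this argument along the letters of $\alpha$ (a small secondary induction embedded in the main one) shows that every prefix of $\alpha$ has the same neighbourhood, hence the same past conditions, whether computed in $G$ or in $G_X$. This preserves validity of $\alpha$ in $G_X$ and, via the same reasoning for the last letter, of $x$ in $A_\alpha(G_X)$. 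Putting everything together delivers $A_{x\alpha}G = A_{x\alpha}(G_X)\sqcup G_{\overline{X}}$, closing the induction and establishing $\ast$-$\restri{}$-locality.
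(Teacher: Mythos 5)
Your argument is correct and follows essentially the same route as the paper's proof: induction on $|\omega|$, using monotony to place both sub-neighbourhoods inside $X=\restr{\omega}{G}$, Lemma~\ref{lemme : petit local implique grand local version générale} to widen each decomposition to $X$, and Remark~\ref{rk:locality excludes borders} to identify $(A_\alpha G)_X$ with $A_\alpha(G_X)$ and $(A_\alpha G)_{\overline{X}}$ with $G_{\overline{X}}$ (the paper splits $\omega=\beta\alpha$ with both parts non-empty rather than peeling off one letter, but nothing hinges on that). The one thing to note is that your final paragraph addresses a non-issue: by the paper's convention $A_{x\alpha}$ simply denotes the composition $A_xA_\alpha$ applied to whatever graph follows, so once you have $A_{x\alpha}G=A_x\bigl(A_\alpha(G_X)\bigr)\sqcup G_{\overline{X}}$ you are already done, and no separate verification that $x\alpha$ is valid in $G_X$ is required.
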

\begin{proof}
    By recurrence on the size of $\omega$. The base case $\omega=u$ is $\restri{}$-locality. Let $\omega = \beta \alpha$ with $\beta,\alpha$ non-empty. By monotony we have that for all $G$, $\restr{\alpha}{G}\subseteq \restr{\omega}{G}$. We can thus use Lem.~\ref{lemme : petit local implique grand local version générale} to get:  
\begin{align*}
A_\alpha G &= (A_\alpha G_{\restri{\omega}})\sqcup G_{\crestri{\omega}} \\
A_\beta A_\alpha G &= A_\beta ((A_\alpha G_{\restri{\omega}})\sqcup G_{\crestri{\omega}})
\end{align*}
By monotony we also have that $\restr{\beta}{A_\alpha G} \subseteq \restr{\omega}{G}$. We can thus use Lem.~\ref{lemme : petit local implique grand local version générale} again with $X = \restr{\omega}{G}$: 
\begin{align*}
A_\beta A_\alpha G &= \left(A_\beta \left(((A_\alpha G_{\restri{\omega}}) \sqcup G_{\crestri{\omega}})_{ \restr{\omega}{G}}\right)\right) \sqcup ((A_\alpha G_{\restri{\omega}})\sqcup G_{\crestri{\omega}})_{\overline{ \restri{\omega}}(G)} \\
&= \left(A_\beta \left(((A_\alpha G_{\restri{\omega}}) \phantom{ \sqcup G_{\crestri{\omega}}})_{ \restr{\omega}{G}}\right)\right) \sqcup (( \phantom{A_\alpha G_{\restri{\omega}}) \sqcup }G_{\crestri{\omega}})_{\overline{ \restri{\omega}}(G)}\textrm{ by 
Rk~\ref{rk:locality excludes borders}} \\
&=(A_\beta A_\alpha G_{\restri{\omega}}) \sqcup  G_{\crestri{\omega}}
\end{align*}
\end{proof}

In a similar manner we can define $*$-port decreasing operators, by considering any valid sequence $\omega$, i.e. applying $A_\omega$ instead of $A_u$ in definition~\ref{definition : Port Decreasing}. Let us show that any port decreasing operator for a monotonous neighbourhood scheme is also $*$-port decreasing.

\begin{remark}\label{Remark : Useful decomposition}

    The LHS of the port-decreasing condition (Def.~\ref{definition : Port Decreasing}) could have simply been written $\pof\E_G(\preci{\omega},x)$. The RHS however divides up into remainder originally private edges and new edges, i.e.
    $\pof\big((\E_{A_\omega G}(x)\cap\E_G(\preci{\omega},x))\, \uplus\, (\E_{A_\omega G}(x)\setminus\E_G(x))\big)$.\\
    Thus a way of understanding the port decreasing condition is the following: the subset of $\P_{A_u G}(x)$ containing  the remainder originally private ports and the new ports must be smaller than the originally private ports.
\end{remark}

\begin{restatable}{lemma}{omegaPD}{\bf (Port decreasing implies $*$-port decreasing)}\label{lemma : omega port decreasing} If $A_{(-)}$ is port decreasing for a monotonous neighbourhood scheme $\restri{}$, then $A_{(-)}$ is $*$-port decreasing for $\restri{}$.
\end{restatable}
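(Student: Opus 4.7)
The plan is to proceed by induction on the length $n = |\omega|$. The base case $n=1$ is exactly the port-decreasing hypothesis. For the inductive step, decompose $\omega = y\alpha$ with $\alpha$ of length $n$ valid in $G$ and, by validity of $\omega$, $y \in \X(\Past(A_\alpha G))$. Fix $u \in V_G \cap V_{A_\omega G}$ with $G_u^0 \neq (A_\omega G)_u^0$ and split into sub-cases according to whether $G_u^0 \neq (A_\alpha G)_u^0$ and/or $(A_\alpha G)_u^0 \neq (A_\omega G)_u^0$. Applying the induction hypothesis to $\alpha$ in $G$, respectively port-decreasing to $y$ in $A_\alpha G$, whenever the relevant state actually changes, I would obtain strict port-set inequalities w.r.t.\ $\cpreci{\alpha}(G)$ and w.r.t.\ $\cpreci{y}(A_\alpha G)$. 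When one of the two sub-steps leaves $u$ untouched, the corresponding port set is preserved by $*$-$\restri{}$-locality (Lem.~\ref{lem:extloc}) and can be carried through the chain using inclusivity of $\leq$.

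Next I would translate everything into the common reference $\cpreci{\omega}(G)$. Monotony of $\restri{}$ gives $\restri{\alpha}(G) \subseteq \restri{\omega}(G)$ and $\restri{y}(A_\alpha G) \subseteq \restri{\omega}(G)$, which, combined with $*$-$\restri{}$-locality and Remark~\ref{rk:locality excludes borders}, guarantees that every edge of $G$ with source in $\cpreci{\omega}(G)$ persists unchanged through $A_\omega$. Hence the ``genuinely non-private'' (w.r.t.\ $\omega$) parts of the incoming edge sets contribute identically on both sides of the target inequality and can be factored out. Reformulating the inductive ingredients in terms of the Remark~\ref{Remark : Useful decomposition} decomposition (originally-private remaining ports $\uplus$ newly-created ports), and repeatedly invoking the monotony axiom of $\leq$—which permits gluing a strict inequality on one disjoint part of the port set with an inclusive (weak) one on the rest—I would chain the $\alpha$-step inequality with the $y$-step inequality to yield the desired strict inequality w.r.t.\ $\cpreci{\omega}(G)$.

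The main obstacle is bookkeeping. The middle term of the chain lives in $A_\alpha G$, whereas both endpoints and the target reference $\cpreci{\omega}(G)$ live in $G$, and the three notions of ``private'' ($\preci{\alpha}(G)$, $\preci{y}(A_\alpha G)$, $\preci{\omega}(G)$) do not coincide in general. Carefully identifying which incoming edges to $u$ fall under each regime, showing that an edge contributing to the $y$-step inequality via $\preci{y}(A_\alpha G)$ corresponds to one contributing to $\preci{\omega}(G)$ via $G$ (using monotony and the preservation of border edges), and verifying that the intersections and differences of these sets satisfy the hypotheses of the monotony axiom of $\leq$, is where the technical weight of the argument will concentrate.
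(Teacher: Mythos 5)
Your plan follows essentially the same route as the paper's proof: induction on $|\omega|$, splitting off the last-applied part, and chaining two port-set inequalities through the common reference $\cpreci{\omega}(G)$ by combining monotony and privacy of $\restri{}$, $*$-$\restri{}$-locality, and the inclusivity/monotony axioms of the order. The only differences are cosmetic — the paper splits $\omega=\beta\alpha$ into two arbitrary non-empty halves under complete induction where you peel off a single letter, and your explicit sub-case analysis on which sub-step actually modifies $u$ makes the strictness of the chained inequality slightly more careful than the paper's presentation.
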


\begin{proof}
    We show this by complete induction on the length of $\omega$.

    If $\omega$ contains only one letter it comes directly from Def.~\ref{definition : Port Decreasing}.

    Otherwise we write $\omega = \beta \alpha$ with $\beta$ and $\alpha$ non empty. We want to show that $A_\omega$ is port decreasing--- i.e. $$\Px{G_\preci{\omega}} > (\Px{A_\omega G}\cap \Px{G_\preci{\omega}})\cup (\Px{A_\omega G} 
 \setminus \Px{G}). $$
by supposing $A_\alpha$ and $A_\beta$ port-decreasing.

    We proceed in two steps, we will prove these inequalities:
    $$\pof(\E_G(x)\setminus\E_G(\cpreci{\omega},x)) > \pof(\E_{A_\alpha G}(x)\setminus\E_G(\cpreci{\omega},x)) >  \pof(\E_{A_\omega G}(x)\setminus\E_G(\cpreci{\omega},x)) $$
    
    We start by the left inequality. Using Rk~\ref{Remark : Useful decomposition}, the LHS is equal to $\E_G(\preci{\omega},x)$. Using $\restri{\alpha}\subseteq\restri{\omega}$ as stated by monotony, we decompose it according to privacy along $\alpha$.
    $$\E_G(\preci{\omega},x) =\E_G(G_\preci{\alpha},x) \uplus {(\E_G(G_\preci{\omega})\setminus\E_G(G_\preci{\alpha},x))} $$

    Using Rk~\ref{Remark : Useful decomposition}, the RHS is equal to

    $$ (\E_{A_\alpha G}(x)\cap\E_G(\preci{\omega},x)) \uplus (\E_{A_\alpha G}(x) \setminus\E_G(x))   $$
    
    We also decompose the RHS according to privacy along $\alpha$, using the same formula, and obtain:

 \begin{align*}
 &\big(\E_{A_\alpha G}(x)\cap  (\E_G(\preci{\alpha},x) \uplus {(\E_G(\preci{\omega})\setminus\E_G(\preci{\alpha},x))})   \big) \uplus (\E_{A_\alpha G}(x) \setminus\E_G(x))\\
 &= (\E_{A_\alpha G}(x)\cap\E_G(\preci{\alpha},x)) \uplus \big(\E_{A_\alpha G}(x)\cap{(\E_G(\preci{\omega})\setminus\E_G(\preci{\alpha},x))}  \big) \uplus (\E_{A_\alpha G}(x) \setminus\E_G(x))\\
\end{align*}

The following inequality holds because one set is included in the other: 
$$\pof\left({(\E_G(\preci{\omega})\setminus\E_G(\preci{\alpha},x))}\right) \geq \pof \left(\E_{A_\alpha G}(x)\cap{(\E_G(\preci{\omega})\setminus\E_G(\preci{\alpha},x))}\right)$$
The remaining part of the LHS is also greater than the remaining part of the RHS
$$
\pof\left(\E_G(\preci{\alpha},x)\right)  > \pof\left(
(\E_{A_\alpha G}(x)\cap\E_G(\preci{\alpha},x)) \uplus (\E_{A_\alpha G}(x) \setminus\E_G(x)) \right)
$$
because this is the Rk~\ref{Remark : Useful decomposition} version of the $A_\alpha$ port-decreasing recurrence hypothesis. Since the order is monotonous, this implies that the LHS is greater than the RHS.

Secondly we prove the right inequality. First we notice 
$$\E_{A_\alpha G}(\preci{\beta}(A_\alpha G),x)\setminus\E_G(\cpreci{\omega}(G),x)=E_{A_\alpha G}(\preci{\beta}(A_\alpha G),x).$$
because by monotony $\preci{\beta}(A_\alpha G)\subseteq  \preci{\omega}(G)$, thus $\preci{\beta}(A_\alpha G)\setminus \cpreci{\omega}(G) = \preci{\beta}(A_\alpha G)$.

We can therefore decompose the LHS of the right inequality as 
\begin{align*}
   \E_{A_\alpha G}(x)\setminus\E_G(\cpreci{\omega},x) &=  (\E_{A_\alpha G}(\preci{\beta}(A_\alpha G)) 
 \setminus\E_G(\cpreci{\omega},x))\uplus (\E_{A_\alpha G}(\cpreci{\beta}(A_\alpha G))   \setminus\E_G(\cpreci{\omega},x) ) \\
 &= \E_{A_\alpha G}(\preci{\beta}(A_\alpha G))\uplus (\E_{A_\alpha G}(\cpreci{\beta}(A_\alpha G))   \setminus\E_G(\cpreci{\omega},x) ) 
\end{align*}
We now decompose the RHS:
\begin{align*}
 \E_{A_\omega G}(x)&=(\E_{A_\omega G}(x)\cap\E_{A_\alpha G}(x)) \uplus ((\E_{A_\omega G}(x)\setminus\E_{A_\alpha G}(x))  \\
 &=(\E_{A_\omega G}(x)\cap\E_{A_\alpha G}(\preci{\beta}(A_\alpha G),x)) \\
   &\uplus (\E_{A_\omega G}(x)\cap\E_{A_\alpha G}(\cpreci{\beta}(A_\alpha G),x)) \\
   &\uplus (\E_{A_\omega G}(x)\setminus\E_{A_\alpha G}(x)) \\
  \E_{A_\omega G}(x) \setminus \E_G(\cpreci{\omega},x) &=(\E_{A_\omega G}(x)\cap\E_{A_\alpha G}(\preci{\beta}(A_\alpha G),x)) \setminus \E_G(\cpreci{\omega},x)\\
   &\uplus (\E_{A_\omega G}(x)\cap\E_{A_\alpha G}(\cpreci{\beta}(A_\alpha G),x)) \setminus \E_G(\cpreci{\omega},x)\\
   &\uplus (\E_{A_\omega G}(x)\setminus\E_{A_\alpha G}(x))\setminus \E_G(\cpreci{\omega},x)\\
\E_{A_\omega G}(x) \setminus \E_G(\cpreci{\omega},x) &=E_{A_\omega G}(x)\cap\E_{A_\alpha G}(\preci{\beta}(A_\alpha G),x)\\
   &\uplus (\E_{A_\omega G}(x)\cap\E_{A_\alpha G}(\cpreci{\beta}(A_\alpha G),x)) \setminus \E_G(\cpreci{\omega},x)\\
   &\uplus (\E_{A_\omega G}(x)\setminus\E_{A_\alpha G}(x))\setminus \E_G(\cpreci{\omega},x)\\
\end{align*}
 The following inequality holds because one set is included in the other:
 $$
\pof\left(\E_{A_\alpha G}(\cpreci{\beta}(A_\alpha G))   \setminus\E_G(\cpreci{\omega},x)\right)  > \pof\left((\E_{A_\omega G}(x)\cap\E_{A_\alpha G}(\cpreci{\beta}(A_\alpha G),x)) \setminus \E_G(\cpreci{\omega},x)\right)
$$
Now by the Rk~\ref{Remark : Useful decomposition} version of the $A_\beta$ port-decreasing recurrence hypothesis: 
\begin{align*}
   \E_{A_\alpha G}(\preci{\beta}(A_\alpha G)) 
    >&\ \E_{A_\omega G}(x)\cap\E_{A_\alpha G}(\preci{\beta}(A_\alpha G),x)\\
    &\uplus\E_{A_\omega G}(x)\setminus\E_{A_\alpha G}(x)\\
    \geq&\ \E_{A_\omega G}(x)\cap\E_{A_\alpha G}(\preci{\beta}(A_\alpha G),x)\\
    &\uplus (\E_{A_\omega G}(x)\setminus\E_{A_\alpha G}(x))\setminus \E_G(\cpreci{\omega},x)
\end{align*}
Since the order is monotonous, this implies that the LHS is greater than the RHS.
\end{proof}

\begin{remark}\label{Remark : signification of strict inequality}
    If we have a strict inequality between two sets of ports $A$ and $A'$ for an inclusive order then the bigger set $A'$ necessarily contains at least one element which does not belong to $A$ i.e.:
    $$A>A' \Longrightarrow \exists p\in A,\ p\notin A  $$
    Indeed otherwise we would have $A\subseteq A'$ which would imply the contradiction $A \leq A'$.
\end{remark}

We are now ready to tackle the case of disjoint sequences of the main theorem. 
\begin{proposition}[Disjoint case]\label{Lemma : Consistency , Disjoint case, radius R}
    Let $\restri{}$ be a private extensive monotonous neighbourhood scheme.
    Let $A_{(-)}$ be a time-increasing and commutative $\restri{}$-local rule. Let $G$ be a graph. Let $\omega,\omega'\in \Omega_G(A)$. If $\omega\cap\omega'=\emptyset$ and $A_{(-)}$ is port decreasing, then $A_{\omega}G\ \varparallel\ A_{\omega'}G$.
\end{proposition}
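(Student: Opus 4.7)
The plan is to fix $v \in \I_{A_\omega G} \cap \I_{A_{\omega'} G}$ satisfying $\pof\E_{A_\omega G}(v) = \pof\E_{A_{\omega'} G}(v)$, let $x$ denote its position (so $v = t.x$ for some $t$), and conclude $(A_\omega G)_v = (A_{\omega'} G)_v$ by a case analysis on how $x$ sits with respect to $\restri{\omega}(G)$ and $\restri{\omega'}(G)$. Before splitting into cases, I would lift the single-letter hypotheses to the full sequences: Lemma~\ref{lem:extloc} promotes $\restri{}$-locality to $*$-$\restri{}$-locality, so that $A_\omega G = A_\omega G_{\restri{\omega}} \sqcup G_{\crestri{\omega}}$ and likewise for $\omega'$, and Lemma~\ref{lemma : omega port decreasing} promotes port-decreasingness to the sequence operators. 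Applying privacy in both orderings of $(\omega,\omega')$ yields $\preci{\omega}\cap\restri{\omega'}=\emptyset=\restri{\omega}\cap\preci{\omega'}$. Since by Remark~\ref{rk:locality excludes borders} newly created vertices have positions in $\preci{\omega}$ (resp. $\preci{\omega'}$), this same disjointness forbids $v$ from being new on both sides, so $v \in V_G$.

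The easy case is $x\notin\restri{\omega}\cup\restri{\omega'}$: Remark~\ref{rk:locality excludes borders} together with $*$-locality shows that $v$ and all its incident edges lie in both untouched parts $G_{\crestri{\omega}}$ and $G_{\crestri{\omega'}}$, so $(A_\omega G)_v = G_v = (A_{\omega'} G)_v$. Next I would treat the ``mixed'' case where $x$ belongs to exactly one neighbourhood, say $x\in\restri{\omega}\setminus\restri{\omega'}$: locality gives $(A_{\omega'} G)_v = G_v$, and assuming towards contradiction $(A_\omega G)_v \neq G_v$, the $*$-port-decreasing condition at $v$ forces a strict inequality on the private port sets. Remark~\ref{Remark : signification of strict inequality} then extracts a port $p$ coming along a private edge $(u{:}a, v{:}p)\in\E_G$ with $\X(u)\in\preci{\omega}$ but $p\notin\pof\E_{A_\omega G}(v)$. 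Privacy places $\X(u)\notin\restri{\omega'}$, so this edge is a border edge of $G_{\restri{\omega'}}$ and is preserved by $A_{\omega'}$ (Remark~\ref{rk:locality excludes borders}), giving $p\in\pof\E_{A_{\omega'} G}(v)=\pof\E_G(v)$, which contradicts the premise.

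The main obstacle will be the remaining case, $x\in\restri{\omega}\cap\restri{\omega'}$, because both rules can now touch $v$. Privacy forces $x$ to lie on both boundaries, $(\restri{\omega}\setminus\preci{\omega})\cap(\restri{\omega'}\setminus\preci{\omega'})$, and the private incoming edges of $v$ on each side emanate from disjoint regions $\zee{\preci{\omega}}$ and $\zee{\preci{\omega'}}$. The plan is to re-run the extraction argument from the mixed case on each side independently: if $(A_\omega G)_v \neq G_v$, a private port $p$ is lost on the $\omega$-side, its originating edge in $G$ starts in $\zee{\preci{\omega}}$, which by privacy lies outside $\zee{\restri{\omega'}}$, making it a border edge of $G_{\restri{\omega'}}$ that $A_{\omega'}$ preserves; hence $p\in\pof\E_{A_{\omega'} G}(v)$, contradicting the premise. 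The symmetric argument forces $(A_{\omega'} G)_v = G_v$, and both sides collapse to $G_v$. The technical point to verify carefully here is that the port-decreasing inequality really concerns ports of edges originating in $\zee{\preci{\omega}}$ (not merely in $\zee{\restri{\omega}}$), as this is precisely what makes the privacy hypothesis dovetail with the border-edge preservation of Remark~\ref{rk:locality excludes borders} to drive the final port-set contradiction.
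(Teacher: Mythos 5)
Your proposal is correct and follows essentially the same route as the paper's proof: establish $v\in V_G$ via privacy, then use port-decreasingness together with Remark~\ref{Remark : signification of strict inequality} to extract a private incoming port that one side loses but the other side, by privacy and border-edge preservation (Remark~\ref{rk:locality excludes borders}), must retain. The only difference is organisational — you case-split on the position of $v$ relative to $\restri{\omega}$ and $\restri{\omega'}$ and argue by contraposition from the equal-port-set premise, whereas the paper cases on which operator modifies $v$ and shows the premise then fails vacuously — but the substance is identical.
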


\begin{proof}
    Let us show consistency around $u \in \V_{ A_{\omega'}G}\cap \V_{A_{\omega}G}$.
        
    First we notice that $u$ must be a vertex of $G$. Indeed, due to 
    Rk~\ref{rk:locality excludes borders}
    vertices of $A_{\omega}G$ are either vertices of $G$ or of the form $(t+\Delta t).x$ with $\Delta t>0$ and $t.x\in V_G$. If $u = (t+\Delta t).x$, then by $*$-$\restri{}$-locality (comming from Lem.~\ref{lem:extloc}) and again Rk~\ref{rk:locality excludes borders}, $x$ belongs to $\preci{\omega}(G)\cap \preci{\omega'}(G)$. This contradicts privacy of $\restri{}$.
    
    There are three cases:
    \begin{itemize}
        \item The neighbourhood of $u$ is not modified by $A_\omega$ and neither by $A_{\omega'}$, i.e. $( A_\omega G)_{u}= G_u = ( A_{\omega'} G)_{u}$. This immediately implies consistency.
        \item Case $( A_\omega G)_{u}^0\neq G_u$.  We start by decomposing $\E_G(u)$ according to $\omega$ and $\omega'$ privacy:
        $$\E_G(u) =\E_G(\preci{\omega},u) \uplus\E_G(\preci{\omega'},u) \uplus (\E_G\setminus \E_G(\preci{\omega}\cup\preci{\omega'},u) )).$$
        Indeed since $\restri{}$ is private we have $\preci{\omega}\cap \preci{\omega'} =\emptyset$, therefore this decomposition is a partition. Moreover by $*$-$\restri{}$-locality and Rk~\ref{rk:locality excludes borders}, $A_{\omega'}$ cannot modify the private edges of $\omega$:
        $$\E_G(\preci{\omega},u) \subseteq\E_{A_{\omega'}G}(u).$$
        Since $A_\omega$ is $*$-port-decreasing, the action of $A_{(-)}$ gives us the following equations:
          $$\pof(\E_G(u)\setminus\E_G(\cpreci{\omega},u)) > \pof(\E_{A_\omega G}(u)\setminus\E_G(\cpreci{\omega},u)) $$
        This implies by Rk~\ref{Remark : signification of strict inequality} the existence of a port $p\in \pof(\E_G(\preci{\omega},u))$ such that $p\notin \pof(\E_{A_\omega G}(u)\setminus\E_G(\cpreci{\omega},u))$. Since $A_{\omega'}$ cannot modify private edges of $\omega$ we also have $p\in \P_{A_{\omega'}G}(u)$ and since ports are never repeated $p\notin\E_G(\cpreci{\omega},u))$ thus $p\notin \pof \Em_{A_\omega G}(u)$.
        Since  $p\in \P_{A_{\omega'}G}(u)$ and $p\notin \pof \Em_{A_\omega G}(u)$ the sets differ: consistency is ensured by dismissal of its premise.
         \item Case $( A_{\omega'} G)_{u}\neq G_u$ is symmetrical.
    \end{itemize}
\end{proof}

\generalcase*\label{proof : General Case Consistency annex}

\begin{proof}
    Let $\omega$ and $\omega'$ be valid sequences in $G$. Let us prove that $A_{\omega}G\ \varparallel\ A_{\omega'}G$.

    We show this result by strong recurrence on the length of $\omega$ and $\omega'$. Let us suppose that this property holds for all words of size smaller or equal to $n$. Let $\omega$ and $\omega'$ be words of size at most $n+1$.
    
    If $\omega \cap \omega'=\emptyset$ the result follows from Lem.~\ref{Lemma : Consistency , Disjoint case, radius R}.

    Otherwise we call $x\in\omega\cap\omega'$, a position such that:
    $$\omega = \omega_2 x \omega_1 \text{ s.t. } x \notin \omega_1$$
    $$\omega' = \omega_2' x \omega_1' \text{ s.t. } x \notin \omega_1'$$
    $$\omega_1 \cap \omega_1'=\emptyset$$

    First we prove that $x\in \X(\Past(G))$. 
    If $x\notin \X(\Past(G))$ we would have by validity of $\omega$ that $x\in X (\Past(A_{\omega_1}G))\setminus \X(\Past(G))$ which implies that $G_x\neq (A_{\omega_1} G)_x$. 
    Since $A_{\omega_1}$ is $*$-$\restri{}$-port-decreasing, there exists $e=(u\!:a, t.x\!:b)\in\E_G(\preci{\omega_1},x)$. By privacy $\preci{\omega_1}\cap \restri{\omega'_1}=\emptyset$ and so $u\notin I_{G_{\restri{\omega_1'}}}$. Thus $e$ is at best a border edge in $G_{\restri{\omega_1'}}$. By locality and Rk~\ref{rk:locality excludes borders} this edge cannot be modified by $A_{\omega_1'}$. But by validity of $A_{\omega'}$, we have $x\in \X(\Past(A_{\omega'_1}G))$ meaning that $e$ was removed, a contradiction.

    Now we can apply Lem. \ref{lemma : right commuting of a past vertex} on $\omega_1$ and $x$ to get
    $$A_\omega G = A_{\omega_2}A_{\omega_1}A_xG.$$
    Similarly,
    $$A_{\omega'} G= A_{\omega_2'}A_{\omega_1'}A_xG$$
    which concludes the proof because we can apply the induction hypothesis on $A_x G$.
\end{proof}

\end{document}